\def\argmin{\mathop{\rm argmin}}
\def\X{\mathbf{X}}
\def\ubx{\underline{x}}
\def\ubbx{\underline{\bold{x}}}
\def\uby{\underline{y}}
\newcommand{\prob}{\mathbb{P}}
\def\E{\mathbb{E}}
\newcommand{\supp}{\text{supp}}
\newcommand{\deq}{\overset{d}{=}}
\def\avg{\frac{1}{N} \sum_{i=1}^N}
\def\avgj{\frac{1}{N} \sum_{j=1}^N}
\def\avgij{\frac{1}{N^2} \sum_{i=1}^N \sum_{j=1}^N}
\theoremstyle{definition}
\newtheorem{theorem}{Theorem}[section]
\newtheorem{lemma}{Lemma}[section]
\theoremstyle{definition}
\newcommand{\indep}{\perp \! \! \! \perp}
\newcommand{\pconv}{\xrightarrow{p}}
\newcommand{\dconv}{\xrightarrow{d}}
\newcommand{\R}{\mathbb{R}}
\DeclareMathOperator{\Var}{Var}
\newcommand{\1}{\mathbbm{1}}
\newenvironment{assump}[2][Assumption]{\begin{trivlist}
\item[\hskip \labelsep {\bfseries #1}\hskip \labelsep {\bfseries #2}]}{\end{trivlist}}
\newcounter{IDsec}
\newcommand{\aAssump}{A\arabic{IDsec}}
\newcounter{estsec}
\newcommand{\bAssump}{B\arabic{estsec}}
\newtheoremstyle{theoremSuppressedNumber}{}{}{}{}{\bfseries}{.}{ }{\thmname{#1}\thmnote{ (\mdseries #3)}}
\theoremstyle{theoremSuppressedNumber}
\newtheorem{estsec_assump}{Assumption \bAssump \addtocounter{estsec}{1}}
\setlist[itemize]{itemsep=0pt, topsep=0pt, partopsep=0pt}
\setlist[enumerate]{itemsep=0pt, topsep=0pt, partopsep=0pt}
\begin{document}

\title{Online Appendix:\\Identification and Estimation of Partial Effects in Nonlinear Semiparametric Panel Models}

\author{Laura Liu
\and
Alexandre Poirier 
\and
Ji-Liang Shiu 
}

\maketitle

\newsavebox{\tablebox} \newlength{\tableboxwidth}

\appendix
\setcounter{section}{2}

\numberwithin{equation}{section}
\numberwithin{figure}{section}
\numberwithin{table}{section}

This online appendix is organized as follows.  In Appendix \ref{sec:estimation2} we establish the asymptotic properties of the proposed ASF and APE estimators. In Appendix \ref{app_sec:implementation} we discuss a number of implementation details for the estimators. In Appendix \ref{sec:extensions} we extend our main identification results to models with sequential exogeneity, allowing for lagged outcomes as regressors. In Appendix \ref{sec:MonteCarlo} we conduct Monte Carlo experiments to study the finite-sample properties of our estimators. In Appendix \ref{sec:appendix-app} we present additional figures and tables that supplement the empirical results in the main text.
Appendix \ref{supp-app-sec:proofs} contains proofs for results in appendices \ref{sec:estimation2} and \ref{sec:extensions}.

\section{Estimation and Inference} \label{sec:estimation2}
We propose three-step semiparametric estimators for the ASF, APE, and AME. The first step estimates the common parameters $\beta_0$, the second step conducts a nonparametric regression with a generated regressor, and the third step marginalizes over a subset of the regressors. Such estimators are called partial means.\footnote{See \cite{Newey1994} for seminal work on the estimation of partial means without generated regressors. The estimation of partial means with generated regressors is studied in \citet*{MammenRotheSchienle2012, MammenRotheSchienle2016}, and \cite{Lee2018}.}

We show that the rate of convergence of the ASF estimator is similar to that of a kernel regression estimator with one continuous regressor. The APE estimator converges at a similar rate as a derivative of a kernel regression estimator with one continuous regressor. In particular, we show the ASF converges at the rate $\sqrt{Nb_N}$ and the APE at the rate $\sqrt{Nb_N^3}$ where $b_N$ is a scalar bandwidth used in the estimation of the conditional expectation of $Y_t$. We describe below in Assumption B\ref{assn:ASFrates} what assumptions $b_N$ must satisfy. These rates of convergence are obtained from our estimators being partial means, where we average over all components of the conditional expectation of $\E[Y_t|X_t'\beta_0,V]$, except for one. These convergence rates do not depend on $T$ or $d_X$, the dimensions of $\X$. 

Throughout this section, we assume we observe a random sample of $(Y_i,\X_i)$ of size $N$.
\begin{estsec_assump} [Random sampling] \label{assn:iid}
       $\{(Y_i,\X_i)\}_{i=1}^N$ are iid.
\end{estsec_assump}

We start by considering the estimation of $\beta_0$, the first step of our semiparametric estimator.

\subsection[]{Estimation of $\beta_0$} \label{subsec:betaestimation}
In Section \ref{subsubsec:betaidassn} we discussed several previous identification approaches for the common parameters $\beta_0$. Due to the breadth of these approaches, we consider the following high-level assumption on the rate of convergence of a first-step estimator of $\beta_0$.
\begin{estsec_assump}[First-stage estimator] \label{assn:betahat}
The estimator $\widehat{\beta}$ satisfies $a_N \|\widehat{\beta} - \beta_0\| = O_p(1)$ where $a_N = O(N^\epsilon)$ for some $\epsilon > 0$.
\end{estsec_assump}
The rate of convergence of this preliminary estimator plays a role in Assumption B\ref{assn:ASFrates} below. The convergence of $\widehat{\beta}$ needs to be relatively fast to establish the limiting distributions of the ASF and APE estimators. In particular, convergence rates equal to or slower than $N^{1/3}$ are incompatible with our rate assumption B\ref{assn:ASFrates} below. This rules out the maximum score estimator of \cite{Manski1987} for binary panels, but not the smoothed maximum score estimator of \cite{CharlierMelenbergSoest1995} and \cite{Kyriazidou1995}. The smoothed maximum score estimator converges at the rate $N^{\nu/(2\nu + 1)}$, where $\nu$ is the order of the kernel used to estimate $\widehat{\beta}$.

In binary panels, the rate of convergence is usually slower than $\sqrt{N}$. One exception is the $\sqrt{N}$-consistent conditional maximum likelihood estimator (\cite{Rasch1960}, \cite{Andersen1970}) when $U_t$ follows a logistic distribution. While $\sqrt{N}$-estimation of $\beta_0$ is generally not possible in binary panels without specifying $U_t$'s distribution (\cite{Magnac2004}, \cite{Chamberlain2010}), some alternative assumptions and estimators allow for it. In particular, \cite{Lee1999} considers an ``index increment sufficiency'' assumption: $(X_t'\beta_0,C)|(X_t - X_s) \deq (X_t'\beta_0,C)|(X_t - X_s)'\beta_0$. \cite{HonoreLewbel2002} assume the presence of a special regressor among $X_t$. \cite{ChenSiZhangZhou2017} assume that $C = v(\X) + \zeta$, where $\zeta$ satisfies $(U_1,\ldots,U_T,\zeta) \indep \X$. In all three papers, $\sqrt{N}$-consistent estimators for $\beta_0$ are proposed.

With continuous outcomes and the index function taking the form $v(\X)'\gamma_0$, \cite{IchimuraLee1991}'s approach can be used to estimate $\beta_0$ (and non-zero entries in the coefficient matrix $\gamma_0$) at a $\sqrt{N}$-rate: see Appendix \ref{sec:betaid-vgamma} for definitions of the notations. \cite{Abrevaya1999} proposes a $\sqrt{N}$-consistent \textit{leapfrog} estimator when $Y_{it} = g(X_{it}'\beta_0 + C_i + U_{it})$ and $Y_{it}$ is continuous. Also see \cite{Abrevaya2000} and  \cite{BotosaruMuris2017} for other $\sqrt{N}$-consistent estimators of $\beta_0$ in related models.

\subsection{A Semiparametric Estimator of the ASF}\label{sec:asf-est}
We now present the ASF estimator and show its consistency and asymptotic normality under our assumptions. As mentioned earlier, this estimator is a three-step estimator. Appendix \ref{subsec:betaestimation} discussed the first step, which consists of estimating $\beta_0$ using an existing method. We now describe the second and third steps, which estimate the ASF using a sample analog of equation \eqref{eq:ASF_expression}. In the second step, we nonparametrically estimate the conditional expectation $\E[Y_t|X_t'\beta_0 = \ubx_t'\beta_0, V = v]$ using a local polynomial regression of $Y_t$ on generated regressor $X_t'\widehat{\beta}$ and $V$. In the third and final step, we evaluate the estimated conditional expectation at $(\ubx_t'\widehat{\beta},V_i)$ for $i = 1,\ldots,N$, and then average over the empirical marginal distribution of $V_i$. To define this estimator, let $Z_t(\beta) = (X_t'\beta,V) \in \R^{1+d_V}$ and denote $Z_t = Z_t(\beta_0)$. Throughout the paper, we use $z$ to denote $z = (u,v) \in \R^{1 + d_V}$ where $u \in \R$ and $v \in \R^{d_V}$.

In the rest of this section, we assume that $V$'s components are all continuously distributed. We omit the discrete case for notational simplicity. In our analysis, the number of discrete components of $V$ does not affect the convergence rate. When the number of support points for the discrete components is sufficiently small, we can handle these discrete components by performing a cell-by-cell analysis. Alternatively, they can be accommodated through a discrete kernel, for example, as in \cite{LiRacine2004} equation (2.3). 

We consider a local polynomial regression of order $\ell \geq 0$.  The following notation is similar to that in \cite{Masry1996}. For $s \in \{0,1,\ldots,\ell\}$, let $N_s = \binom{s + d_V}{d_V}$ be the number of distinct $(1+d_V)$-tuples $r \in \mathbb{N}^{1+d_V}$ such that $|r| \equiv \sum_{k=1}^{1 + d_V} |r_k| = s$. We arrange these $(1+d_V)$-tuples in a lexicographical order with the highest priority given to the last position so that $(0,\ldots,0,s)$ is the first element and $(s,0,\ldots,0)$ is the last element in this sequence. We let $\tau_s$ denote this one-to-one mapping. This mapping satisfies $\tau_s(1) = (0,\ldots,0,s), \ldots, \tau_s(N_s) = (s,0,\ldots,0)$. For each $s\in\{0,1,\ldots,\ell\}$, define $N_s \times 1$ vector $\xi_s(a)$ by its $k$th element $a^{\tau_s(k)}$, where $k \in \{1,\ldots,N_s\}$ and $a \in \R^{1+d_V}$. Here we used the notation $a^{b} = a_1^{b_1} \times \cdots \times a_{d_V}^{b_{d_V}}$. Let 
\begin{align*}
       \xi(a) &= (1,\xi_1(a)',\ldots,\xi_\ell(a)')' \in \R^{\bar{N}},
\end{align*}
where $\bar{N} = \sum_{s = 0}^\ell N_s$.

Let $\mathcal{K}: \R^{1 + d_V} \to \R$ denote a $(1 + d_V)$-dimensional kernel. Let $\mathcal{K}_b(z) =  b^{-(1+d_V)} \mathcal{K}(z)$, where $b > 0$ is a scalar bandwidth. Let $b_N$ denote a sequence of bandwidths converging to zero.
Let
\begin{align*}
   \widehat{h}(z;\widehat\beta) &= \argmin_{h \in \R^{\bar{N}}} \sum_{j=1}^N  \left(Y_{jt} -  \sum_{0 \leq |r| \leq \ell} \left(\frac{Z_{jt}(\widehat{\beta}) - z}{b_N}\right)^{r}h_r \right)^2\mathcal{K}_{b_N}\left(\frac{Z_{jt}(\widehat{\beta}) - z}{b_N}\right)\\
   &= \argmin_{h \in \R^{\bar{N}}} \sum_{j=1}^N  \left(Y_{jt} -  \xi\left(\frac{Z_{jt}(\widehat{\beta}) - z}{b_N}\right)'h \right)^2\mathcal{K}_{b_N}\left(\frac{Z_{jt}(\widehat{\beta}) - z}{b_N}\right).
\end{align*}

As $\widehat{\beta} \pconv \beta_0$, the vector $\widehat{h}(z;\widehat{\beta})$ estimates coefficients in a Taylor expansion of degree $\ell$ of the conditional expectation of $Y_t$ given $Z_t(\beta_0) = z$. In particular, the first component of this vector, denoted by $\widehat{h}_1(z;\widehat{\beta}) = e_1' \widehat{h}(z;\widehat{\beta})$, is an estimator of the conditional mean of $Y_t$ given $(X_t'\beta_0,V)$. The vector $\widehat{h}(z;\widehat{\beta})$ is a least-squares estimator and can be written as
       \[ \widehat{h}(z;\widehat\beta) = S_N(z;\widehat{\beta})^{-1} T_N(z;\widehat{\beta}), \]
where
\begin{align*}
   S_N(z;\beta) &= \avgj \xi\left(\frac{Z_{jt}(\beta) - z}{b_N}\right)\xi\left(\frac{Z_{jt}(\beta) - z}{b_N}\right)' \mathcal{K}_{b_N} \left(\frac{Z_{jt}(\beta) - z}{b_N}\right)\\
   T_N(z;\beta) &= \avgj \xi\left(\frac{Z_{jt}(\beta) - z}{b_N}\right)  Y_{jt} \mathcal{K}_{b_N} \left(\frac{Z_{jt}(\beta) - z}{b_N}\right).
\end{align*}

In analogy to equation \eqref{eq:ASF_expression}, we average this conditional mean over the empirical marginal distribution of $V_i$ to obtain the ASF estimator:
\begin{align*}
       \widehat{\text{ASF}}_t(\ubx_t) &= \avg \widehat{h}_1(\ubx_t'\widehat{\beta},V_i ;\widehat{\beta})\hat\pi_{it},
\end{align*}
where $\widehat{h}_1(z;\widehat{\beta}) = e_1' \widehat{h}(z;\widehat{\beta})$ is the first component in $\widehat{h}(z;\widehat{\beta})$,  $\hat\pi_{it} = \1((\ubx_t'\widehat\beta,V_i)\in\mathcal{Z}_t)$ is a trimming function, and $\mathcal{Z}_t$ is an appropriately selected compact set in which the density $f_{Z_t(\beta)}(z)$ is bounded away from zero. This trimming function prevents issues with the invertibility of $S_N(z;\widehat\beta)$. Since $\mathcal{Z}_t$ is a fixed compact set, the parameter that is consistently estimated by $\widehat{\text{ASF}}_t$ is a trimmed ASF defined by
\begin{align*}
       \text{ASF}^\pi_t(\ubx_t) &\equiv \E[\E[Y_t|X_t'\beta_0 = \ubx_t'\beta_0, V]\pi_t]\\
       &= \int_{\mathcal{C}} \E[Y_t|X_t = \ubx_t, C=c] \prob((\ubx_t'\beta_0,V) \in \mathcal{Z}_t| C = c) \, dF_C(c),
\end{align*}
where we let $\pi_{it} =  \1((\ubx_t'\beta_0,V_i) \in \mathcal{Z}_t)$. Note that if 
$(\ubx_t'\beta_0,V) \in \mathcal{Z}_t$ with probability 1, $\text{ASF}^\pi_t(\ubx_t) = \text{ASF}_t(\ubx_t)$ and the trimming does not alter the estimand. By expanding $\mathcal{Z}_t$ along with the sample size at a slow enough rate, which is sometimes called a vanishing, or random, trimming approach, we expect that $\text{ASF}_t(\ubx_t)$ is consistently estimated by $\widehat{\text{ASF}}_t(\ubx_t)$. However, since fixed trimming is often employed in the partial mean literature,\footnote{See, for example, \cite{Newey1994} or more recently \cite{Lee2018}.} we let $\mathcal{Z}_t$ be fixed.

To understand the effect of trimming on the estimand, we consider the scenario where $\prob((\ubx_t'\beta_0,V) \in \mathcal{Z}_t| C = c)$ is bounded away from zero. Formally, assume that $\prob((\ubx_t'\beta_0,V) \in \mathcal{Z}_t| C ) \in [1-\underline{\varepsilon},1]$ with probability 1. Then, if $\text{ASF}^\pi_t(\ubx_t) \geq 0$, we can show that $\text{ASF}^\pi_t(\ubx_t) \in [(1-\underline{\varepsilon})\text{ASF}_t(\ubx_t),\text{ASF}_t(\ubx_t)]$, and thus
\begin{align*}
       \text{ASF}_t(\ubx_t) &\in \left[\text{ASF}^\pi_t(\ubx_t), \frac{\text{ASF}^\pi_t(\ubx_t)}{1-\underline{\varepsilon}}\right]. 
\end{align*}
These bounds are reversed when $\text{ASF}^\pi_t(\ubx_t) < 0$. Note that the bounds collapse to a point as $\underline{\varepsilon}$ approaches zero, and are narrow when $\underline{\varepsilon}$ is small.

We make the following assumptions to obtain the limiting distribution of the ASF. We begin with a standard assumption on the kernel.

\begin{estsec_assump} [Kernel] \label{assn:kernel}
The kernel $\mathcal{K}$ satisfies $\mathcal{K}(z) = K(u) \cdot \prod_{k=1}^{d_V} K(v_k)$ where $K: \R \to \R_{\geq 0}$ such that 
(i) $K(u)$ is equal to zero for all $u$ outside of a compact set, (ii) $K$ is twice continuously differentiable on $\R$ with all these derivatives being Lipschitz continuous, (iii) $\int_{-\infty}^\infty K(u) \, du = 1$, (iv) $K$ is symmetric.
\end{estsec_assump}   

Note that we do not require the use of higher-order kernels in this local polynomial regression.

To state the next assumption precisely, let $\mathcal{C}_m(\mathcal{A})$ denote the set of $m$-times continuously differentiable functions $f : \mathcal{A} \rightarrow \R$. Here $m$ is an integer and $\mathcal{A}$ is a subset of $\R^{1+d_V}$. Denote the differential operator by
       \[\nabla^\lambda = \frac{\partial^{| \lambda |}}{\partial z_1^{\lambda_1} \cdots \partial z_{1+d_V}^{\lambda_{1+d_V}}}, \]
where $\lambda = (\lambda_1,\ldots,\lambda_{1+d_V}) \in \{0,1,\ldots\}^{1+d_V}$ is comprised of nonnegative integers such that $ \sum_{k=1}^{1+d_V} \lambda_k = | \lambda |$. For a given set $\mathcal{A}$, let
\[
       \| f \|_m^{\mathcal{A}} = \max_{|\lambda| \leq m} \sup_{z \in \text{int} (\mathcal{A})} \| \nabla^{\lambda} f(z) \| .
\]

We omit the $\mathcal{A}$ superscript when it does not cause confusion. Next, we impose smoothness and regularity conditions on the distribution of $(Y_t,Z_t(\beta))$ for $\beta$ in a neighborhood of $\beta_0$.

\begin{estsec_assump} [Smoothness] \label{assn:smoothness}
Let $\mathcal{B}_\varepsilon = \{\beta \in \mathcal{B}: \|\beta - \beta_0\| \leq \varepsilon\}$.
\begin{itemize}
       \item [(i)] There exists $\varepsilon > 0$ such that for all $\beta \in \mathcal{B}_\varepsilon$, $Z_t(\beta)$ has a density $f_{Z_t(\beta)}(z)$ with respect to the Lebesgue measure;
       \item [(ii)] $f_{Z_t(\beta)}(z)$ and $\left\|\frac{\partial}{\partial\beta} f_{Z_t(\beta)}(z)\right\|$ are uniformly bounded and uniformly bounded away from zero for $z \in \mathcal{Z}_t$ and $\beta\in \mathcal{B}_\varepsilon$, where $\mathcal{Z}_t$ is a compact set;
       \item [(iii)] $ \left\|f_{Z_t(\beta_0)}(z)\right\|_{\ell + 2}^{\mathcal{Z}_t} <\infty$ and $ \left\| \E[Y_t|Z_t(\beta_0) = z]\right\|_{\ell + 2}^{\mathcal{Z}_t} < \infty$;
       \item [(iv)] $\ubx_t'\beta_0$ is in the interior of $\mathcal{Z}_{1t} \equiv \{e_1'z :z \in \mathcal{Z}_t\}$;
       \item [(v)] $f_{Z_t(\beta_0)|Y_t}(z|y)$ exists and is bounded for $y \in \supp(Y_t)$.
\end{itemize}  
\end{estsec_assump}

Assumptions (i) and (ii) ensure the boundedness and sufficient smoothness of the distribution of $f_{Z_t(\beta)}$ as a function of $\beta$ in a neighborhood of $\beta_0$. Assumption (iii) ensures additional smoothness in $z$ for the distribution of $Z_t(\beta_0)$. The degree of smoothness is linked to the degree of the polynomial in the local polynomial regression. Assumptions (iv) and (v) are standard technical assumptions. We also impose the following moment existence condition.

\begin{estsec_assump} [Moment existence] \label{assn:moment}
       Let $\E[\|X_t\|^2] < \infty$ and $\E[|Y_t|^n] < \infty$ for all $n \in \mathbb{N}$.
\end{estsec_assump}
We can relax the assumption that all moments of $Y_t$ exist at the cost of some additional notation and derivations: see the proof of Lemma \ref{lemma:convASF}.

The following rate conditions govern the bandwidth's convergence rate. Let  $\lceil x\rceil$ denote the smallest integer greater than or equal to $x$.
\begin{estsec_assump} [Bandwidth] \label{assn:ASFrates}
For some $\kappa, \delta >0$, let $b_N = \kappa \cdot N^{-\delta}$ where $\delta$ satisfies 
       \[ \max\left\{\frac{1}{4\left\lceil\frac{\ell + 1}{2}\right\rceil + 1}, 1-2\epsilon\right\} < \delta < \min\left\{\frac{2\epsilon}{3+2d_V}, \frac{1}{1+2d_V}\right\}. \]
\end{estsec_assump} 
A consequence of this assumption is that $\ell$ must increase as $d_V$ increases. In particular, we require $\ell > d_V$ when $\widehat{\beta}$ is $\sqrt{N}$-consistent. 

We can now state the main convergence result for the ASF.

\begin{theorem}[ASF asymptotics] \label{thm:ASF}
Suppose the assumptions for Part 1 of Theorem \ref{thm:ASF-APE_id} hold. Suppose Assumptions B\ref{assn:iid}--B\ref{assn:ASFrates} hold. Then,
\begin{align*}
       \sqrt{Nb_N} \left(\widehat{\text{ASF}}_t(\ubx_t) - \text{ASF}^\pi_t(\ubx_t) \right) &\dconv \mathcal{N}(0,\sigma^2_{\text{ASF}_t}(\ubx_t'\beta_0)),
\end{align*}
where
\begin{align*}
       &\sigma^2_{\text{ASF}_t}(u) = \E\left[ \Var(Y_t|X_t'\beta_0 = u,V)  \frac{f_V(V)}{f_{Z_t(\beta_0)}(u,V)} \1((u,V) \in \mathcal{Z}_t) \right]  \\
       & \cdot e_1'  \left(\int \xi(z)\xi(z)'\mathcal{K}(z) \, dz\right)^{-1}  \left[\int \left(\int \mathcal{K} \left(z\right) \xi \left(z \right) \, dv\right) \left(\int \mathcal{K} \left( z\right) \xi \left(z\right) \, dv\right)'  \, du\right] \left(\int \xi(z)\xi(z)'\mathcal{K}(z) \, dz\right)^{-1} e_1.
\end{align*}

\end{theorem}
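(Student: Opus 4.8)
The plan is to view the estimator as a functional $\widehat{\text{ASF}}_t(\ubx;\beta) = \avg \widehat{h}_1(\ubx'\beta, V_i;\beta)\,\1((\ubx'\beta,V_i)\in\mathcal{Z}_t)$ of the first-step $\beta$, and to split
\begin{align*}
\widehat{\text{ASF}}_t(\ubx) - \text{ASF}^\pi_t(\ubx) = \big[\widehat{\text{ASF}}_t(\ubx;\widehat\beta) - \widehat{\text{ASF}}_t(\ubx;\beta_0)\big] + \big[\widehat{\text{ASF}}_t(\ubx;\beta_0) - \text{ASF}^\pi_t(\ubx)\big].
\end{align*}
For the first (generated-regressor) bracket I would take a mean-value expansion in $\beta$ and show $\tfrac{\partial}{\partial\beta}\widehat{\text{ASF}}_t(\ubx;\bar\beta) = O_p(1)$ uniformly over $\bar\beta\in\mathcal{B}_\varepsilon$. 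The subtle point is that differentiating the local-polynomial weights in $\beta$ through $Z_{jt}(\beta)=(X_{jt}'\beta,V_j)$ brings down a factor $X_{jt}/b_N$, but a change of variables shows each such factor is offset by an $O(b_N^2)$ kernel moment, so the derivative is bounded rather than $O_p(b_N^{-1})$; this is where the uniform bound on $\|\partial_\beta f_{Z_t(\beta)}\|$ in Assumption B\ref{assn:smoothness}.(ii) is used. With $\|\widehat\beta-\beta_0\|=O_p(N^{-\epsilon})$ from Assumption B\ref{assn:betahat}, this bracket is $O_p(N^{-\epsilon})=o_p((Nb_N)^{-1/2})$, which is exactly the lower bound $\delta>1-2\epsilon$ in Assumption B\ref{assn:ASFrates}; the upper bound $\delta<2\epsilon/(3+2d_V)$ controls the second-order remainder. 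Since $\widehat\beta$ is faster than the nonparametric rate, this bracket does not enter the limiting variance, consistent with the absence of a $\widehat\beta$ term in $\sigma^2_{\text{ASF}_t}$.

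For the oracle bracket I would use the asymptotic linear representation of the local polynomial estimator as in \cite{Masry1996}. Writing $M=\int\xi(z)\xi(z)'\mathcal{K}(z)\,dz$, $z_i=(\ubx'\beta_0,V_i)$, $h_1(z)=\E[Y_t\mid Z_t=z]$, and $\varepsilon_{jt}=Y_{jt}-h_1(Z_{jt})$, the uniform convergence $S_N(z;\beta_0)\pconv f_{Z_t(\beta_0)}(z)M$ on $\mathcal{Z}_t$ gives
\begin{align*}
\widehat{h}_1(z_i;\beta_0)-h_1(z_i) = \frac{1}{f_{Z_t(\beta_0)}(z_i)}\avgj e_1'M^{-1}\xi\big(\tfrac{Z_{jt}-z_i}{b_N}\big)\mathcal{K}_{b_N}\big(\tfrac{Z_{jt}-z_i}{b_N}\big)\varepsilon_{jt} + \mathrm{Bias}(z_i) + \mathrm{rem}.
\end{align*}
Averaging the noise-free part $h_1(z_i)\pi_{it}$ over $i$ yields a $\sqrt{N}$-consistent estimate of $\text{ASF}^\pi_t(\ubx)=\E[h_1(\ubx'\beta_0,V)\pi_t]$, hence is $o_p((Nb_N)^{-1/2})$ and drops out. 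The integrated bias is the $v$-integral of the pointwise local-polynomial bias, which under the $(\ell+2)$-smoothness in Assumption B\ref{assn:smoothness}.(iii) is $O(b_N^{\ell+1})$ for odd $\ell$ and $O(b_N^{\ell+2})$ for even $\ell$; requiring $\sqrt{Nb_N}\cdot\mathrm{Bias}\to0$ reproduces precisely the lower bounds $\delta>1/(2\ell+3)$ and $\delta>1/(2\ell+5)$.

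What remains is the leading stochastic term, the V-statistic
\begin{align*}
R_N = \avgij \pi_{it}\,\frac{1}{f_{Z_t(\beta_0)}(z_i)}\,e_1'M^{-1}\xi\big(\tfrac{Z_{jt}-z_i}{b_N}\big)\mathcal{K}_{b_N}\big(\tfrac{Z_{jt}-z_i}{b_N}\big)\varepsilon_{jt}.
\end{align*}
I would separate the diagonal $i=j$ (negligible once $\delta<1/(1+2d_V)$) and apply a Hoeffding projection to the off-diagonal part. Because $\E[\varepsilon_{jt}\mid Z_{jt}]=0$ the projection onto $i$ vanishes, while integrating $z_i$ over the marginal of $V_i$ (change of variables $s=(V_j-v)/b_N$) leaves the projection onto $j$ equal to $\avgj p_j$ with
\begin{align*}
p_j = \varepsilon_{jt}\,\frac{1}{b_N}\,\frac{\pi_t(\ubx'\beta_0,V_j)\,f_V(V_j)}{f_{Z_t(\beta_0)}(\ubx'\beta_0,V_j)}\,g\big(\tfrac{X_{jt}'\beta_0-\ubx'\beta_0}{b_N}\big), \qquad g(a)=e_1'M^{-1}\int\xi(a,s)\mathcal{K}(a,s)\,ds.
\end{align*}
A second-moment computation with the change of variables $a=(X_{jt}'\beta_0-\ubx'\beta_0)/b_N$ then gives $b_N\,\E[p_j^2]\to\sigma^2_{\text{ASF}_t}(\ubx'\beta_0)$, reproducing both the $\E[\Var(Y_t\mid \cdot)f_V/f_{Z_t}\,\1]$ factor and the kernel constant $e_1'M^{-1}\big(\int(\int\mathcal{K}\xi\,dv)(\int\mathcal{K}\xi\,dv)'\,du\big)M^{-1}e_1$. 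Since $\sqrt{Nb_N}\,R_N=\sqrt{b_N/N}\sum_j p_j+o_p(1)$ and $\Var(\sqrt{b_N/N}\sum_j p_j)=b_N\E[p_j^2]$, a Lyapunov CLT for the triangular array (its condition reduces to $(Nb_N)^{-\eta/2}\to0$) delivers the stated normality once the degenerate part of the projection is shown to be $o_p((Nb_N)^{-1/2})$.

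The hard part will be the uniform-in-$\beta$ control of the generated-regressor bracket: one must verify that differentiating the random inverse $S_N(z;\beta)^{-1}$ and the kernel weights in $\beta$ yields an $O_p(1)$ (not $O_p(b_N^{-1})$) derivative uniformly over $\mathcal{B}_\varepsilon$ and $z\in\mathcal{Z}_t$, and that the associated remainder is negligible — which is exactly what the two-sided bandwidth restriction in Assumption B\ref{assn:ASFrates} ties together through $\epsilon$. The V-statistic projection is comparatively standard, but care is needed to confirm that the diagonal and degenerate components are dominated given the relatively large bandwidth forced by $\delta<1/(1+2d_V)$.
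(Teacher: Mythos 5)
Your overall strategy coincides with the paper's: split off the effect of the first step from an oracle term evaluated at $\beta_0$, kill the former using the fast rate of $\widehat\beta$ together with uniform empirical-process bounds, and obtain the limit distribution of the latter from the Masry-type asymptotic linear representation of the local polynomial estimator, with the variance computed by the same change of variables (your bias orders, the diagonal-term condition $\delta<1/(1+2d_V)$, and the projection second-moment calculation all match; the paper invokes Kong--Linton--Xia and Masry rather than an explicit Hoeffding decomposition, but the computation is the same).

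Two steps as you state them would not go through literally, though both are repairable and the paper's proof shows how. First, your single ``generated-regressor bracket'' absorbs the trimming indicator $\1((\ubx'\beta,V_i)\in\mathcal{Z}_t)$, which is not differentiable in $\beta$, so a mean-value expansion of $\widehat{\text{ASF}}_t(\ubx;\beta)$ in $\beta$ is not available as written; the paper peels this piece off as a separate term in $\hat\pi_{it}-\pi_{it}$ and shows via Lemma \ref{lemma:lem5} that, because $\ubx'\beta_0$ is interior to $\mathcal{Z}_{1t}$, the estimated and true indicators coincide for all $i$ with probability tending to one. Second, the claim that the $\beta$-derivative of the local-polynomial functional is uniformly $O_p(1)$ is true only for its expectation (your kernel-moment cancellation, which is indeed where the bound on $\|\partial_\beta f_{Z_t(\beta)}\|$ enters); the centered empirical-process part of that derivative is $O_p\bigl((Nb_N^{4+2d_V})^{-1/2}\bigr)$, which is permitted to diverge under B\ref{assn:ASFrates} (e.g.\ $\epsilon=1/2$, $d_V=1$, $\delta\in(1/6,1/5)$). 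Its product with $\|\widehat\beta-\beta_0\|=O_p(a_N^{-1})$ is $o_p((Nb_N)^{-1/2})$ exactly when $\delta<2\epsilon/(3+2d_V)$ --- that is where this condition bites (Lemmas \ref{lemma:lem1} and \ref{lemma:lem2}), not in a second-order remainder of the expansion. With these two repairs your argument reproduces the paper's proof.
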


To understand the limiting distribution of this estimator, we break down its sampling variation into four separate sources. The terms associated with three of these are asymptotically negligible under our assumptions. We can write
\begin{align*}
       \sqrt{Nb_N} \left(\widehat{\text{ASF}}_t(\ubx_t) - \text{ASF}^\pi_t(\ubx_t) \right) &= \sqrt{Nb_N} \left(\avg \left(\widehat{h}_1(\ubx_t'\widehat{\beta}, V_i; \widehat{\beta}) - \widehat{h}_1(\ubx_t'\widehat{\beta}, V_i; \beta_0)\right)\hat\pi_{it}\right)\\
       &+ \sqrt{Nb_N} \left(\avg \left(\widehat{h}_1(\ubx_t'\widehat{\beta}, V_i; \beta_0) - \widehat{h}_1(\ubx_t'\beta_0, V_i; \beta_0)\right) \hat\pi_{it} \right)\\
       &+ \sqrt{Nb_N} \left(\avg \widehat{h}_1(\ubx_t'\beta_0, V_i; \beta_0)(\hat\pi_{it} - \pi_{it})\right)\\
       &+ \sqrt{Nb_N} \left(\avg \widehat{h}_1(\ubx_t'\beta_0, V_i; \beta_0)\pi_{it} - \E[h_1(\ubx_t'\beta_0, V; \beta_0) \pi_t]\right). 
\end{align*}
The first term reflects the impact of the generated regressors $X_t'\widehat\beta$ being used instead of $X_t'\beta_0$. The bandwidth constraints involving $\epsilon$---the rate of convergence of $\widehat\beta$ to $\beta_0$---ensure this term is asymptotically negligible. The second term reflects the impact of the approximation of the evaluation point $\ubx_t'\beta_0$ by $\ubx_t'\widehat\beta$. Once again, $\epsilon$ plays a crucial role and this term is asymptotically negligible as it is of asymptotic order $O_p(\sqrt{Nb_N} a_N^{-1}) = o_p(1)$ by our assumptions. The third term pertains to the estimation of the trimming function $\pi_{it}$ by $\hat\pi_{it}$. This term is asymptotically dominated due to the superconsistency of $\hat\pi_{it}$ to $\pi_{it}$ uniformly in $i = 1, \ldots, N$. The fourth and final term asymptotically dominates the other three and converges in distribution to a mean-zero Gaussian variable at the $\sqrt{Nb_N}$ rate. Some of the technical tools we use to show this convergence in distribution build on \cite{Masry1996} and \citet*{KongLintonXia2010}.

The rate of convergence of $\widehat{\text{ASF}}_t(\ubx_t)$ when $\epsilon = 1/2$ is  $N^{\delta_{\text{ASF}}}$, where $\delta_{\text{ASF}}$ ranges in the interval $\left(\frac{1+d_V}{3 + 2 d_V}, \frac{1 + \ell}{3 + 2\ell}\right)$. In the case where $d_V = 1$ and $\ell = 2$, this range corresponds to $\left(\frac{2}{5}, \frac{3}{7}\right)$. Recall that 2/5 is the standard rate of convergence of univariate kernel estimation when using second-order kernels. Again, we note that this rate of convergence does not depend on either $T$ or $d_X$. We discuss various implementation details of this estimator and others in Appendix \ref{app_sec:implementation}.

\subsection{Semiparametric Estimation of the APE}\label{sec:ape-est}

We focus here on the case where $X_t^{(k)}$ is continuously distributed. When $X_t^{(k)}$ is discretely distributed, the APE is a difference between two ASFs, in which case Theorem \ref{thm:ASF} can be used to obtain its limiting distribution.

Let $\widehat{h}_2(z; \widehat{\beta}) = \frac{1}{b_N} e_{2+d_V}' \widehat{h}(z;\widehat{\beta})$ denote the $(2 + d_V)$-th  component of the local polynomial regression coefficient vector. By the definition of the above lexicographical order, this is an estimator of the derivative of the conditional mean of $Y_t$ given $(X_t'\beta_0,V) = (u,v)$ with respect to $u$. This estimated derivative is used in the APE estimator, which is defined as
\begin{align*}
       \widehat{\text{APE}}_{k,t}(\ubx_t) &= \widehat{\beta}^{(k)} \cdot \avg \widehat{h}_2(\ubx_t'\widehat{\beta}, V_i; \widehat\beta)\hat\pi_{it},
\end{align*}     
where $\widehat\beta^{(k)}$ denotes the $k$th component of $\widehat\beta$. 

As for the ASF, we use a trimming function in the estimator for technical reasons. Therefore, the estimator is consistent for a trimmed APE defined by $\text{APE}^\pi_{k,t}(\ubx_t) \equiv \E\left[ \frac{\partial}{\partial \ubx_t^{(k)}} \E[Y_t|X_t'\beta_0 = \ubx_t'\beta_0, V] \cdot \pi_t \right]$.
As for the ASF, the untrimmed APE is bounded by $\text{APE}_{k,t}(\ubx_t) \in \left[\text{APE}_{k,t}^\pi(\ubx_t), (1-\underline{\varepsilon})^{-1}\text{APE}^\pi_{k,t}(\ubx_t)\right]$
when $\prob(\ubx_t'\beta_0,V)\in\mathcal{Z}_t|C) \in [1-\underline\varepsilon,1]$ with probability 1 and the APE is positive: the bounds are reversed when it is negative.

The following theorem shows that the APE is $\sqrt{Nb_N^3}$-consistent, where $b_N$ is a bandwidth satisfying Assumption B\ref{assn:ASFrates}. Like the ASF, the APE's rate of convergence does not depend on the dimensions of $\X$.

\begin{theorem}[APE asymptotics] \label{thm:APE}
Suppose the assumptions for Part 2 of Theorem \ref{thm:ASF-APE_id} hold. Suppose Assumptions B\ref{assn:iid}--B\ref{assn:ASFrates} hold. Suppose $X_{t}^{(k)}$ is continuously distributed. Then,
\begin{align*}
       \sqrt{Nb_N^3} \left(\widehat{\text{APE}}_{k,t}(\ubx_t) - \text{APE}^\pi_{k,t}(\ubx_t) \right) &\dconv \mathcal{N}\left(0,(\beta_0^{(k)})^2 \cdot \sigma^2_{\text{APE}_t}(\ubx_t'\beta_0)\right),
\end{align*}
where
\begin{align*}
       &\sigma^2_{\text{APE}_t}(u) = \E\left[ \Var(Y_t|X_t'\beta_0 = u, V)  \frac{f_V(V)}{f_{Z_t(\beta_0)}(u,V)} \1((u, V) \in \mathcal{Z}_t) \right] e_{2 + d_V}'  \left(\int \xi(z)\xi(z)'\mathcal{K}(z) \, dz\right)^{-1} \\
       & \cdot  \left[ \int \left(\int \mathcal{K} \left(z\right) \xi \left(z \right) \, dv\right) \left(\int \mathcal{K} \left( z\right) \xi \left(z\right) \, dv\right)'  du \right] \left(\int \xi(z)\xi(z)'\mathcal{K}(z) \, dz\right)^{-1} e_{2 + d_V}.
\end{align*}
\end{theorem}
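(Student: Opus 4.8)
The plan is to mirror the proof of Theorem \ref{thm:ASF}, exploiting that $\widehat{\text{APE}}_{k,t}(\ubx)$ has the same partial-mean structure as $\widehat{\text{ASF}}_t(\ubx)$ but reads off the scaled derivative coordinate $\widehat{h}_2 = b_N^{-1}e_{2+d_V}'\widehat{h}$ instead of the level coordinate $\widehat{h}_1 = e_1'\widehat{h}$, and carries the extra prefactor $\widehat{\beta}^{(k)}$. By the Taylor-matching underlying the local polynomial regression, $\widehat{h}_2(z;\beta_0)$ estimates $\partial_u\E[Y_t\mid X_t'\beta_0=u,V]$ at $z=(u,v)$, and it is precisely the $b_N^{-1}$ scaling that inflates the sampling variance by a factor $b_N^{-2}$ and produces the slower $\sqrt{Nb_N^3}$ rate. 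Writing $\widehat{D} \equiv \avg \widehat{h}_2(\ubx'\widehat{\beta},V_i;\widehat{\beta})\hat{\pi}_{it}$ and $D \equiv \E[\partial_u\E[Y_t\mid X_t'\beta_0=u,V]|_{u=\ubx'\beta_0}\,\pi_t]$, I would first split
\begin{align*}
        \widehat{\text{APE}}_{k,t}(\ubx) - \text{APE}^\pi_{k,t}(\ubx) = \widehat{\beta}^{(k)}\left(\widehat{D} - D\right) + \left(\widehat{\beta}^{(k)} - \beta_0^{(k)}\right)D.
\end{align*}

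The second, coefficient-perturbation term equals $O_p(a_N^{-1})$ because $D$ is a fixed constant. The bandwidth lower bound $1-2\epsilon<\delta$ in Assumption B\ref{assn:ASFrates} gives $1-3\delta<1-\delta<2\epsilon$, hence $\sqrt{Nb_N^3}\,a_N^{-1}=o_p(1)$ and this term does not contribute to the limit. Since $\widehat{\beta}^{(k)} \pconv \beta_0^{(k)}$, Slutsky's theorem then lets me replace $\widehat{\beta}^{(k)}$ by $\beta_0^{(k)}$ in the leading term, so the problem reduces to the asymptotic distribution of $\sqrt{Nb_N^3}\,\beta_0^{(k)}(\widehat{D} - D)$.

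Next I would decompose $\widehat{D} - D$ exactly as in the display following Theorem \ref{thm:ASF}, but with $\widehat{h}_2$ in place of $\widehat{h}_1$:
\begin{align*}
        \widehat{D} - D &= \avg\left(\widehat{h}_2(\ubx'\widehat{\beta},V_i;\widehat{\beta}) - \widehat{h}_2(\ubx'\widehat{\beta},V_i;\beta_0)\right)\hat{\pi}_{it}\\
        &\quad + \avg\left(\widehat{h}_2(\ubx'\widehat{\beta},V_i;\beta_0) - \widehat{h}_2(\ubx'\beta_0,V_i;\beta_0)\right)\hat{\pi}_{it}\\
        &\quad + \avg\widehat{h}_2(\ubx'\beta_0,V_i;\beta_0)\left(\hat{\pi}_{it} - \pi_{it}\right)\\
        &\quad + \left(\avg\widehat{h}_2(\ubx'\beta_0,V_i;\beta_0)\pi_{it} - D\right).
\end{align*}
The first three terms---the generated-regressor effect, the evaluation-point approximation, and the trimming-estimation error---are treated by the same arguments as in Theorem \ref{thm:ASF}, and I claim each is $o_p((Nb_N^3)^{-1/2})$ under Assumptions B\ref{assn:iid}--B\ref{assn:ASFrates}. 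The fourth term is a partial mean of the local-polynomial derivative estimator, averaged over the empirical distribution of $V_i$ and centered at its trimmed population analog. Applying the local polynomial theory of \cite{Masry1996} together with the partial-mean central limit argument of \citet*{KongLintonXia2010}---now reading off the $(2+d_V)$-th coordinate rather than the first---this term, scaled by $\sqrt{Nb_N^3}$, converges to $\mathcal{N}(0,\sigma^2_{\text{APE}_t}(\ubx'\beta_0))$, whose sandwich-form variance is identical to the ASF expression but with $e_1$ replaced by $e_{2+d_V}$. Combining with the Slutsky reduction above yields the stated limit with the factor $(\beta_0^{(k)})^2$.

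The main obstacle is controlling the first two terms of this decomposition, because the derivative estimator carries the extra $b_N^{-1}$ scaling that amplifies any perturbation of the index. In the generated-regressor term, replacing $\beta_0$ by $\widehat{\beta}$ shifts the first coordinate of each $Z_{jt}(\beta)$ by $X_{jt}'(\widehat{\beta}-\beta_0)=O_p(a_N^{-1})$; after differentiating the local-polynomial fit this perturbation enters with an additional $b_N^{-1}$ relative to the level estimator, so the naive bound is worse by a factor $b_N^{-1}$ than in the ASF proof. The upper bandwidth constraint $\delta < 2\epsilon/(3+2d_V)$ in Assumption B\ref{assn:ASFrates} is exactly what keeps this amplified term negligible. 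To make this rigorous I would differentiate the normal equations $\widehat{h}=S_N^{-1}T_N$ with respect to $\beta$, bound $\|\partial_\beta S_N\|$ and $\|\partial_\beta T_N\|$ using the Lipschitz properties of $\mathcal{K}$ from Assumption B\ref{assn:kernel} and the moment condition B\ref{assn:moment}, and apply a mean-value expansion in $\beta$ over the shrinking neighborhood $\mathcal{B}_\varepsilon$. The evaluation-point term is handled analogously using $\ubx'(\widehat{\beta}-\beta_0)=O_p(a_N^{-1})$ and the smoothness of $z\mapsto\partial_u\E[Y_t\mid Z_t=z]$ from Assumption B\ref{assn:smoothness}.(iii), while the trimming term is dominated by the superconsistency of $\hat{\pi}_{it}$ to $\pi_{it}$, exactly as in the ASF case.
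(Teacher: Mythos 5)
Your proposal is correct and mirrors the paper's own proof: the same five-term decomposition (the coefficient-perturbation term $(\widehat{\beta}^{(k)}-\beta_0^{(k)})\E[h_2(\ubx'\beta_0,V;\beta_0)\pi_t]$, the generated-regressor, evaluation-point, and trimming terms, and the leading partial-mean term handled via \cite{Masry1996} and \citet{KongLintonXia2010} applied to the $(2+d_V)$-th coordinate). The only cosmetic difference is your framing of the $b_N^{-1}$ derivative scaling as a new obstacle requiring extra bandwidth control: since $\sqrt{Nb_N^3}\,\widehat{h}_2(z;\beta) = \sqrt{Nb_N}\,e_{2+d_V}'\widehat{h}(z;\beta)$, the slower normalization exactly cancels the $b_N^{-1}$ factor, so the ASF lemmas bounding $\sup_{z\in\mathcal{Z}_t}\|S_N(z;\widehat{\beta})-S_N(z;\beta_0)\|$ and $\sup_{z\in\mathcal{Z}_t}\|T_N(z;\widehat{\beta})-T_N(z;\beta_0)\|$ apply verbatim and no additional rate condition beyond Assumption B\ref{assn:ASFrates} is needed.
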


\medskip

We can decompose the APE's sample variation into five components. The first four components are analogous to those in the earlier ASF decomposition. In particular, the fourth component is
\begin{align*}
       &\widehat{\beta}^{(k)} \cdot \sqrt{Nb_N^3} \left(\avg \widehat{h}_2(\ubx_t'\beta_0, V_i; \beta_0)\pi_{it} - \E[h_2(\ubx_t'\beta_0, V; \beta_0)\pi_t]\right).
\end{align*}
This component converges in distribution to a mean-zero Gaussian distribution while dominating the other components. The fifth component is due to the presence of $\widehat{\beta}^{(k)}$ and is of the same order as $\sqrt{Nb_N^3}(\widehat{\beta}^{(k)} - \beta_0^{(k)}) = O_p\left(\sqrt{Nb_N^3} a_N\right) = o_p(1)$ by B\ref{assn:ASFrates}. 

The rate of convergence of $\widehat{\text{APE}}_{k,t}(\ubx_t)$ when $\epsilon = 1/2$ is $N^{\delta_{\text{APE}}}$, where $\delta_{\text{APE}}$ ranges in the interval $\left(\frac{d_V}{3 + 2 d_V}, \frac{\ell}{3 + 2\ell}\right)$. When $d_V = 1$ and $\ell = 2$, this range equals $\left(\frac{1}{5}, \frac{2}{7}\right)$. Recall that 2/7 is the standard rate of convergence of derivatives of univariate kernel estimators when using second-order kernels. Our estimator can approach this rate whenever $\ell \geq 2$, i.e., the local polynomial contains quadratic terms.

\subsection{Estimation of the LAR and AME}
The previous analysis focused on the estimation and inference for the ASF and APE using sample analog estimators. Under the assumptions of Theorem \ref{thm:LAR-AME_id}, the LAR and AME are also point identified via a function of the distribution of $(Y,\X)$. Here are their sample analogs:
\begin{align*}
       \widehat{\text{LAR}}_{k,t}(\ubbx) &= \widehat{\beta}^{(k)} \cdot \widehat{h}_2(\ubx_t'\widehat{\beta},v(\ubbx);\widehat{\beta})\\
       \widehat{\text{AME}}_{k,t} &= \avg \widehat{\text{LAR}}_{k,t}(\X_i)\hat\pi_{it} = \widehat{\beta}^{(k)} \cdot \avg \widehat{h}_2(X_{it}'\widehat{\beta},v(\X_i);\widehat{\beta})\hat\pi_{it}.
\end{align*}
Establishing their consistency and asymptotic distribution can be done using the same tools used to establish the same properties for the ASF and APE. Since their proofs are likely similar to those for the ASF and APE, we leave formal asymptotic analyses for future work. The rate of convergence of the LAR estimator should be the same as the nonparametric rate used to estimate $h_2$, while we expect the rate of convergence of the AME to be $\sqrt{N}$ when $\widehat{\beta}^{(k)}$ is $\sqrt{N}$-consistent. This is because the AME averages over all conditioning variables in the local regression of $Y_t$ on $Z_t(\widehat{\beta})$.

\subsection{Estimation with Estimated Indices}

We now briefly consider the estimation of these partial effects under the assumption that $C \;$ $\indep \X|V'\gamma_0$. Following the notations in Appendix \ref{sec:betaid-vgamma}, $V$ is $d_V\times 1$, $\gamma_0$ is $d_V\times d_{V^*}$, and the new index $V'\gamma_0$ is $1 \times d_{V^*}$. Let $\gamma^*_1,\cdots,\gamma^*_{d_{V^*}}$ be the \textit{non-zero elements} in each column of $\gamma_0$, and $V^*_1,\ldots,V^*_{d_{V^*}}$ be the corresponding elements in $V$, and we have $V'\gamma_0 = \left(V^{*\prime}_1 \gamma^*_1, \ldots, V^{*\prime}_{d_{V^*}}\gamma^*_{d_{V^*}}\right).$ Then, $\gamma^*=\left(\gamma^{*\prime}_1,\cdots,\gamma^{*\prime}_{d_{V^*}}\right)'$ contains all unknown elements in $\gamma_0$. 

Suppose $\theta_0 \equiv (\beta_0',\gamma^{*\prime})' $ is consistently estimated. For example, \cite{IchimuraLee1991}'s estimator is $\sqrt{N}$-consistent for $\theta_0$ under their regularity conditions. Let $Z_t(\theta) = (X_t'\beta,V'\gamma) \in \R^{1 + d_{V^*}}$ and let
\begin{align*}
       \widehat{h}(z;\widehat{\theta}) &= \argmin_{h \in \R^{\bar{N}}} \sum_{j=1}^N  \left(Y_{jt} -  \xi\left(\frac{Z_{jt}(\widehat{\theta}) - z}{b_N}\right)'h \right)^2\mathcal{K}_{b_N}\left(\frac{Z_{jt}(\widehat{\theta}) - z}{b_N}\right).
\end{align*}
Then, we can propose the following estimators:
\begin{align*}
       \widehat{\text{ASF}}_t(\ubx_t) &= \avg \widehat{h}_1(\ubx_t'\widehat{\beta},V_i'\widehat{\gamma} ;\widehat\theta)\hat\pi_{it}\\
       \widehat{\text{APE}}_{k,t}(\ubx_t) &= \widehat{\beta}^{(k)} \cdot \avg \widehat{h}_2(\ubx_t'\widehat{\beta}, V_i'\widehat{\gamma}; \widehat\theta)\hat\pi_{it}\\
       \widehat{\text{AME}}_{k,t} &= \widehat{\beta}^{(k)} \cdot \avg \widehat{h}_2(X_{it}'\widehat{\beta}, V_i'\widehat{\gamma}; \widehat\theta)\hat\pi_{it}.
\end{align*}
The indices $V'\gamma_0$ are usually of lower dimension than $V$, which means the function $\E[Y_t|X_t'\beta_0 = \ubx_t'\beta_0, V'\gamma_0 = v'\gamma_0]$ has lower dimension than $\E[Y_t|X_t'\beta_0 = \ubx_t'\beta_0, V = v]$, which helps satisfy the rate assumption B\ref{assn:ASFrates}. This comes at the cost of an additional generated regressor of the form $V_i'\widehat{\gamma}$. From examining Lemmas \ref{lemma:lem1}--\ref{lemma:lem5}, we expect these additional generated regressors do not impact the estimators' limiting distributions, but we leave a detailed asymptotic analysis for future work.

\section{Implementation Details}\label{app_sec:implementation}
\subsection{General Choices}\label{app_sec:gen-choice}

Here are a few practical concerns related to the implementation of our estimators. We explored some of these in more detail in our simulations (Appendix \ref{sec:MonteCarlo}) and empirical illustration (Section \ref{sec:application}).

\paragraph{Local polynomial regression.} First, a common practice in kernel-based methods is the standardization and orthogonalization of the conditioning variables, in our case $Z_t(\widehat{\beta}) = (X_t'\widehat\beta,V)$, before the nonparametric estimation step. The standardization leads to more comparable scales across different components of $Z_t(\widehat{\beta})$. The orthogonalization facilitates the practical implementation of using a product of one-dimensional kernels as our joint kernel, as formulated in Assumption B\ref{assn:kernel}. Also note that the orthogonalization, which can be done via a Cholesky decomposition, is performed on $V$ alone rather than all of $Z_t(\widehat{\beta})$. This is for technical reasons that $\ubx_t'\widehat\beta$ and $V$ should enter in the kernel as a product since the latter is averaged out based on its empirical distribution: see the proofs in Appendix \ref{app_sec:est-proofs}, such as the proof of Lemma \ref{lemma:lem1}.

Second, according to Assumption B\ref{assn:ASFrates}, the required polynomial order increases with $d_V$, the number of continuous index variables. When $d_V$ is 1 or 2, as in our Monte Carlo and empirical illustration, any $\ell \ge 2$ is sufficient. Larger values of $\ell$ reduce the bias in the nonparametric approximation but may cause overfitting, especially in small samples. Our estimates are generally not sensitive to $\ell$ around 2 to 4 in our Monte Carlo simulations and empirical illustration. We use $\ell = 3$ in the Monte Carlo simulations and for estimators conditioning on \(V'\gamma_0\) in the empirical illustration, and use $\ell = 2$ for estimators conditioning on \(V\) in the empirical illustration. The smaller $\ell$ is adopted for the latter case because we divide the observations into cells for discrete index variables, resulting in fewer observations in each cell: see Section \ref{sec:app-back-spec}.

Third, we modified the Gaussian kernel as follows to satisfy Assumption B\ref{assn:kernel}:
\[
K(u) = \begin{cases} \frac 1 {\sqrt{2\pi}}\exp(-u^2/2) &\text{for }|u| \le 5,  \\
        \frac {1} {\sqrt{2\pi}} \exp(-5^2/2) \cdot \left(4(6-|u|)^5-6(6-|u|)^4+3(6-|u|)^3\right) & \text{for } 5<|u|\le 6,  \\
        0 &\text{for }|u|>6.
\end{cases}
\]
This kernel is equivalent to the Gaussian kernel for $|u|\leq 5$ and their results are generally indistinguishable. The truncation at $\pm 6$ ensures the compact support assumed in B\ref{assn:kernel}.(i). The quintic polynomial for $5 < |u|\le 6$ guarantees the twice continuous differentiability assumed in B\ref{assn:kernel}.(ii). 

Fourth, one needs to select a bandwidth $b_N = \kappa \cdot N^{-\delta}$. In principle, we could choose different bandwidths for $\ubx_t'\widehat\beta$ and $V$ (after standardization and orthogonalization), but for simplicity, we keep it the same in the numerical exercises. 

Here we first choose  $\delta$ that satisfies our rate conditions in Assumption B\ref{assn:ASFrates}. We then find the scaling constant $\kappa^*$ using a bootstrap over a finite grid: see Appendix \ref{sec:bw-bs} for details. In our simulations and empirical illustration, $\kappa^*$ usually ranges from 0.6 to 4, and the estimated ASF and APE are generally robust with respect to the scaling constant $\kappa$ within a neighborhood of $\kappa^*$, such as within the range $[\kappa^*-0.2,\,\kappa^*+0.2]$. 

Lastly, the compact set  $\mathcal{Z}_t$ in the trimming function $\hat\pi_{it} = \1((\ubx_t'\widehat\beta,V_i)\in\mathcal{Z}_t)$ helps bound $f_{Z_t(\widehat\beta)}(z)$ away from zero. Candidate criteria could be: a lower bound directly on $\widehat{f}_{Z_t(\widehat\beta)}(z)=\avgj \mathcal{K}_{b_N} \left(\frac{Z_{jt}(\widehat\beta) - z}{b_N}\right)$, an upper bound on the condition number of $S_N(z;\widehat\beta)$, and a lower bound on its determinant. We incorporate all three criteria to construct the trimming set in the numerical implementation. 

\paragraph{Asymptotic variance estimation.} To conduct inference on the ASF and APE, one could, in principle, estimate $\sigma_{\text{ASF}_t}(\ubx_t'\beta_0)$ and $\sigma_{\text{APE}_t}(\ubx_t'\beta_0)$ analytically. This can be done by estimating $\Var(Y_t|X_t'\beta_0 = \ubx_t'\beta_0,V_i)$ via local polynomial regressions of $(Y_t,Y_t^2)$ on $(X_t'\widehat\beta,V)$, and replacing $f_{Z_t}(\ubx_t'\beta_0,V_i)$ by $\avgj \mathcal{K}_{b_N}\left(\frac{Z_{jt}(\widehat{\beta}) - (\ubx_t'\widehat\beta,V_i)}{b_N}\right)$, and $f_V(V_i)$ by  $\avgj \mathcal{K}^V_{b_N}\left(\frac{V_j - V_i}{b_N}\right)$. In the numerical implementation, we instead focus on bootstrap-based inference, which may better capture higher-order terms in the asymptotic expansion of our estimator. 

\paragraph{Multiple time periods.} Finally, note that the above estimator is for the ASF (or APE/AME), at period $t$, which may vary with $t$ in the population. If stationarity is further assumed, i.e., $(g_t,F_{U_t|C}) = (g_{t'},F_{U_{t'}|C})$ for all $t,t' \in \{1,\ldots,T\}$, then $\text{ASF}_t(\ubx_*) = \text{ASF}_{t'}(\ubx_*)$ for any pair of time periods assuming that $\ubx_* \in \supp(X_t) \cap \supp(X_{t'})$. Then, the ASF does not depend on $t$, and we can combine ASF estimates from multiple time periods to obtain a more precise estimate. For example, we can average the estimated ASFs over time:
\begin{align*}
        \overline{\widehat{\text{ASF}}}(\ubx_*) &= \frac{1}{T}\sum_{t=1}^T \widehat{\text{ASF}}_t(\ubx_*).
\end{align*}
We can further reduce its asymptotic variance by selecting weights that depend on $t$. However, weights that minimize the asymptotic variance of the weighted ASF depend on the inverse of an estimate of the joint asymptotic covariance matrix of $\widehat{\text{ASF}}_t(\ubx_*)$ across all $t=1,\cdots,T$. For simplicity, we propose the simple time average as our rule of thumb.

\subsection{Bandwidth Selection via Bootstrap}\label{sec:bw-bs}

Let us take the APE as an example. The bandwidth selection for the ASF and AME can be implemented in a similar fashion. Recall that the bandwidth $b_N$ equals $\kappa \cdot N^{-\delta}$ for a given $\delta > 0$ satisfying our rate conditions. Then, we want to select the tuning parameter $\kappa$ by  minimizing the integrated mean squared error
        \[ \text{IMSE}(\kappa) = \int_{\supp(X_t)} \E\left[\left(\widehat{\text{APE}}_{k,t}(\ubx_t;\kappa N^{-\delta}) - \text{APE}_{k,t}(\ubx_t; 0)\right)^2\right] dF_{X_t}(\ubx_t),\]
where $\widehat{\text{APE}}_{k,t}(\ubx_t;b)$ is our estimated APE with bandwidth $b$, and $\text{APE}_{k,t}(\ubx_t;b)$ denotes the probability limit of $\widehat{\text{APE}}_{k,t}(\ubx_t;b)$ for a fixed bandwidth $b$. Note that $\text{APE}_{k,t}(\ubx_t; 0)$ is the true APE.

Since the IMSE depends on unknown population quantities, we  first approximate \[\E\left[\left(\widehat{\text{APE}}_{k,t}(\ubx_t;\kappa N^{-\delta}) - \text{APE}_{k,t}(\ubx_t; 0)\right)^2\right]\]
 via
        \[  \frac{1}{S} \sum_{s=1}^S \left(\widehat{\text{APE}}_{k,t}^{\ast (s)}(\ubx_t;\kappa N^{-\delta}) - \widehat{\text{APE}}_{k,t}(\ubx_t;\kappa_0 N^{-\delta})\right)^2.\]
Here $\left\{\widehat{\text{APE}}_{k,t}^{\ast (s)}(\ubx_t;b)\right\}_{s=1}^S$ denote $S$ draws of the estimated APE according to its bootstrap distribution. We let $\kappa_0$ be a constant close to 0 and small relative to potential choices of $\kappa$. Note that we cannot set $\kappa_0 = 0$ since the estimated APE is defined only when $\kappa > 0$. We also approximate $F_{X_t}(\ubx_t)$ via the empirical distribution of $X_t$.

More specifically, the bandwidth constant $\kappa$ can be selected according to the following procedure.

\paragraph{Implementation procedure.}
\begin{enumerate}
        \item Generate a range of evaluation points $\ubx_{t,j}$, $j=1,\ldots,J$, with weights $\hat w(\ubx_{t,j})$ determined from the empirical distribution of $X_t$.
        \item Choose $\kappa_0$ to be a small value and estimate $\widehat{\text{APE}}_{k,t}(\ubx_{t,j};\kappa_0 N^{-\delta})$, $j=1,\ldots,J$, based on the original data $\{Y_i,\X_i\}_{i=1}^N$.
        \item Generate bootstrap samples $\{Y^{(s)}_i,\X_i^{(s)}\}_{i=1}^N$ for $s=1,\ldots,S$.
        \item For each bootstrap sample $s = 1,\ldots,S$ and each bandwidth $\kappa$ on grid $\{\kappa_1,\ldots,\kappa_K\}$, calculate $\widehat{\text{APE}}_{k,t}^{\ast (s)}(\ubx_{t,j};\kappa N^{-\delta})$ for $j=1,\ldots,J$.
        \item Choose $\kappa \in \{\kappa_1,\ldots,\kappa_K\}$ that minimizes \[\widehat{\text{IMSE}}(\kappa;w) = \sum_{j=1}^J \left(\frac{1}{S} \sum_{s=1}^S  \left(\widehat{\text{APE}}_{k,t}^{\ast (s)}(\ubx_{t,j};\kappa N^{-\delta}) - \widehat{\text{APE}}_{k,t}(\ubx_{t,j};\kappa_0 N^{-\delta})\right)^2\right) \hat w(\ubx_{t,j}).\]
\end{enumerate}
In the Monte Carlo simulations and empirical illustration, we choose the number of bootstrap samples to be $S=100$. We initialize $\kappa_0$ at 0.6 and increase it by 0.1 if a numerical issue occurs. The bandwidth grid ranges from $\kappa_0$ to 4 with increments of 0.1.

\subsection{Estimated Indices}\label{sec:est-ind}
When the conditioning variable(s) take the form \(V'\gamma_0\), we can implement the following three variations of the semiparametric estimator:
\begin{enumerate}
\item SP: the original three-step estimator.
\begin{enumerate}
\item First, estimate \(\beta_0\) (possibly with smoothed maximum score if $Y_t$ is binary).
\item Second, perform a local polynomial regression of \(Y_{it}\) on \((X_{it}'\widehat\beta,\; V_i)\).
\item Third, average over \(V_i\).
\end{enumerate}
\item SP (\(V'\gamma_0\)): a three-step estimator for estimated indices.
\begin{enumerate}
\item First, estimate \((\beta_0,\gamma_0)\) using \cite{IchimuraLee1991}.
\item Second, perform a local polynomial regression of \(Y_{it}\) on \((X_{it}'\widehat\beta,\; V_i'\widehat\gamma)\).
\item Third, average over \(V_i\).
\end{enumerate}
\item SP (\(V'\gamma_0\), iter.): a four-step estimator for estimated indices.
\begin{enumerate}
\item First, estimate \(\beta_0\) (possibly with smoothed maximum score if $Y_t$ is binary).
\item Second, plug in \(\widehat\beta\) into the objective function in \cite{IchimuraLee1991} to estimate \(\gamma_0\).
\item Third, perform a local polynomial regression of \(Y_{it}\) on \((X_{it}'\widehat\beta,\; V_i'\widehat\gamma)\).
\item Fourth, average over \(V_i\).
\end{enumerate}
\end{enumerate}
Note that: (i) SP (\(V'\gamma_0\)) and SP (\(V'\gamma_0\), iter.) assume the multiple index structure, which is more efficient when the assumption holds but is less robust to misspecification. (ii) SP (\(V'\gamma_0\), iter.) reduces the dimension of numerical optimization in \cite{IchimuraLee1991} and can achieve better numerical performance than SP (\(V'\gamma_0\)) for applications with higher dimensions of parameters.

\section{Extension to a Dynamic Model} \label{sec:extensions}

We now present an extension of our identification results to a dynamic panel model.
\subsection{Related Literature}\label{sec:dyn-panel-lit} 

There is a large literature on dynamic binary response models going back to \cite{Cox1958}. In particular, see \cite{Chamberlain1985}, \cite{Magnac2000}, \cite{HonoreKyriazidou2000}, and \cite{HonoreTamer2006} for results on the identification of common coefficients. For recent results under a logistic error distribution, see \cite{HonoreWeidner2020}, and \cite{Kitazawa2021} for identification results for common coefficients, and \cite{AguirregabiriaCarro2020} and \cite{DobronyiGuKim2021} for other functionals such as AMEs. \cite{KhanPonomarevaTamer2023} establish a number of identification results for $\beta_0$ without assuming logistic errors, ranging from point identification to sharp bounds under minimal assumptions.
\cite{Torgovitsky2019} also obtains partial identification results without parametric restrictions. See \cite{Aristodemou2021} and \cite{KhanOuyangTamer2021} for results on dynamic discrete response models.
Also see \cite{ArellanoBonhomme2017} for a review of nonlinear dynamic panel data models. 

\subsection{Model and Identification}
Our previous Assumption A\ref{assn:model}.(ii) in the main text rules out the dependence of $X_t$ on past $U_{t'}$, thus preventing $\X$ from containing lagged outcome variables. We consider a model that assumes weak or sequential exogeneity. We distinguish between predetermined and exogenous regressors and denote them by $X_t \equiv \begin{pmatrix}
X_{t,\text{pre}} & X_{t,\text{exog}} \end{pmatrix}$. Let $\X_{\text{exog}} = (X_{1,\text{exog}},\ldots,X_{T,\text{exog}})$ denote all past, current, and future values of the exogenous regressor, and let $\X^t_\text{pre} = (X_{1,\text{pre}},\ldots,X_{t,\text{pre}})$ denote all current and past values of the predetermined regressors. We assume that errors are conditionally independent of past, current, and future values of the exogenous regressors, as well as past and current values of the predetermined regressors.

\begin{assump}{A\ref{assn:model}$^\dagger$.(ii)}(Sequential exogeneity)  For each $t = 1,\ldots,T$, $U_t \indep (\X_{\text{exog}},\X^t_\text{pre})|C$.
\end{assump}
This assumption replaces A\ref{assn:model}.(ii) and allows the future covariates to depend on the current error term $U_t$. In particular, it allows for the inclusion of lagged dependent variables in $\X$. 

We also maintain Assumption A\ref{assn:betaid} that states $\beta_0$ is point-identified. When compared to strict exogeneity, point-identification of the common coefficients under sequential exogeneity can be more challenging: see the literature in Appendix \ref{sec:dyn-panel-lit}. To fix ideas, let us consider a relatively simple dynamic binary model as a running example, where the only predetermined regressor is the lagged dependent variable, i.e., $X_{t,\text{pre}} = Y_{t-1}$. For notational simplicity, denote $\widetilde X_{t}=X_{t,\text{exog}}$. Let
\begin{align}
       Y_{it} &= \1(\widetilde X_{it}'\widetilde\beta_0 + \rho_0 Y_{i\, t-1} + C_i - U_{it} \geq 0),\label{eq:mod-dyn-binary-panel}
\end{align}
and define $\beta_0 = (\widetilde{\beta}_0,\rho_0)$.
Versions of this binary outcome model with lagged dependent variables have been studied in \cite{Chamberlain1985} and \cite{HonoreKyriazidou2000}, where they study the identification of $\beta_0$. Its identification generally requires the presence of units whose covariate values do not change over time, known as ``stayers,'' which rules out the inclusion of time dummies in $\widetilde{X}_{t}$. It also requires a minimum number of time periods which is usually greater than 2. As shown in \cite{HonoreKyriazidou2000}, identification of $\beta_0$ can be achieved when $U_{t}$ follows a logistic distribution. Furthermore, identification of $\beta_0$ can still be attained when $U_{t}$ does not follow a logistic distribution, given additional conditions, such as the existence of a continuous regressor with full support. For example, \cite{KhanPonomarevaTamer2023} provide such a result in their Theorem 3.  
 
Given the identification of $\beta_0$, we can make a modified index assumption to help identify partial effects.
\begin{assump}{A\ref{assn:index}$^\dagger$}(Dynamic index sufficiency) For $t \in \{1,\ldots,T\}$, given  $V^t = v_t(\X_{\text{exog}},\X^t_\text{pre}) \in\R^{d_V}$, where $v_t$ is known, let
$C|(\X_{\text{exog}},\X^t_\text{pre}) \deq C|V^t$.
\end{assump}
This assumption replaces A\ref{assn:index} and allows the index to depend on all regressors except for future values of the predetermined regressor. The following theorem shows the identification of our partial effects in these models.

\begin{theorem}[Identification under weak exogeneity]\label{thm:ID_weakexog}
Let $t \in \{1,\ldots,T\}$, $\ubx_t \in \supp(X_t)$, and $\ubbx^t \in \supp(\X_{\text{exog}},\X^t_\text{pre})$. Let Assumptions A\ref{assn:model}--A\ref{assn:index} hold with A\ref{assn:model}$^\dagger$.(ii) replacing A\ref{assn:model}.(ii), and A\ref{assn:index}$^\dagger$ replacing A\ref{assn:index}. Then,
\begin{enumerate}
       \item $\text{ASF}_t(\ubx_t) = \E[\E[Y_t|X_t'\beta_0 = \ubx_t'\beta_0,V^t]]$ is point identified when $\supp(V^t|X_t'\beta_0 = \ubx_t'\beta_0) = \supp(V^t)$;
       \item Let the partial derivative of $\text{ASF}_t(\ubx_t)$ with respect to $\ubx_t^{(k)}$ exist.  $\text{APE}_{k,t}(\ubx_t) = \E[\frac{\partial}{\partial \ubx_t^{(k)}}\E[Y_t|X_t'\beta_0 = \ubx_t'\beta_0,V^t]]$ is point identified when $\supp(V^t|X_t'\beta_0 = u) = \supp(V^t)$ for all $u$ in a neighborhood of $\ubx_t'\beta_0$;
\item $\text{LAR}_{k,t}(\ubbx^t)= \frac{\partial}{\partial \ubx_t^{(k)}} \E[Y_t|X_t'\beta_0 = \ubx_t'\beta_0, V = v] |_{v =v_t(\ubbx^t)}$ is point identified when the derivative exists and when $v_t(\ubbx^t) \in \supp(V^t|X_t'\beta_0 = u)$ for all $u$ in a neighborhood of $\ubx_t'\beta_0$;
                \item $\text{AME}_{k,t}= \E[\frac{\partial}{\partial X_t^{(k)}} \E[Y_t|X_t'\beta_0, V^t]]$ is point identified if the above condition on $\text{LAR}_{k,t}(\ubbx^t)$ holds for all $\ubbx^t \in \supp(\X_{\text{exog}},\X^t_\text{pre})$ up to a $\prob_{\X_{\text{exog}},\X^t_\text{pre}}$-measure zero set.
       
\end{enumerate}
\end{theorem}

In the dynamic binary outcome model in equation \eqref{eq:mod-dyn-binary-panel} above, to identify the ASF at time $t=1$, one can consider an index that depends on $(\widetilde{X}_1,\ldots,\widetilde{X}_T,Y_0)$, where $Y_0$ is the initial time period outcome. Specifically, let $V^1 = (\widetilde{v}(\widetilde\X),Y_0)$, where $\widetilde\X=\X_{\text{exog}}$ for notational simplicity. Assume that $\beta_0$ is identified, for example, from the identification results in \cite{HonoreKyriazidou2000}. Define $\mathcal V^1 = \supp(V^1)$ and $\mathcal V^1(\widetilde{\ubx}_1'\widetilde\beta_0 + \uby_0 \rho_0 ) = \supp(V^1|\widetilde{X}_1'\beta_0 + Y_0 \rho_0 =\widetilde{\ubx}_1'\widetilde\beta_0 + \uby_0 \rho_0 )$. Then,
\begingroup
\allowdisplaybreaks
\begin{align*}
       &\text{ASF}_1(\widetilde{\ubx}_1,\uby_0) =  \E[\1(U_1 \leq \widetilde{\ubx}_1'\widetilde\beta_0 + \uby_0 \rho_0 + C )]\\
       &= \int_{\mathcal V^1} \E[\1(U_1 \leq \widetilde{\ubx}_1'\widetilde\beta_0 + \uby_0 \rho_0 + C )|\tilde{v}(\widetilde{\X}) = v,Y_0 = y_0]\; dF_{\widetilde{v}(\widetilde{\X}),Y_0}(v,y_0)\\
       &= \int_{\mathcal V^1(\widetilde{\ubx}_1'\widetilde\beta_0 + \uby_0 \rho_0 )} \E[\1(U_1 \leq \widetilde{\ubx}_1'\widetilde\beta_0 + \uby_0 \rho_0 + C )|\tilde{v}(\widetilde{\X}) = v,Y_0 = y_0]\; dF_{\widetilde{v}(\widetilde{\X}),Y_0}(v,y_0)\\
       &= \int_{\mathcal V^1(\widetilde{\ubx}_1'\widetilde\beta_0 + \uby_0 \rho_0 )} \E[\1(U_1 \leq \widetilde{\ubx}_1'\widetilde\beta_0 + \uby_0 \rho_0 + C )|\widetilde{X}_1'\beta_0 + Y_0 \rho_0 = \widetilde{\ubx}_1'\widetilde\beta_0 + \uby_0\rho_0, \tilde{v}(\widetilde{\X}) = v,Y_0 = y_0]\; dF_{\widetilde{v}(\widetilde{\X}),Y_0}(v,y_0)\\
       &= \int_{\mathcal V^1(\widetilde{\ubx}_1'\widetilde\beta_0 + \uby_0 \rho_0 )} \E[\1(U_1 \leq \widetilde{X}_1'\widetilde\beta_0 +Y_0 \rho_0 + C )|\widetilde{X}_1'\beta_0 + Y_0 \rho_0 = \widetilde{\ubx}_1'\widetilde\beta_0 + \uby_0\rho_0, \tilde{v}(\widetilde{\X}) = v,Y_0 = y_0]\; dF_{\widetilde{v}(\widetilde{\X}),Y_0}(v,y_0)\\
       &= \E[\E[Y_1|\widetilde{X}_1'\beta_0 + Y_0 \rho_0 = \widetilde{\ubx}_1'\widetilde\beta_0 + \uby_0\rho_0, \tilde{v}(\widetilde{\X}),Y_0 ]].
\end{align*}
\endgroup
The first equality follows from $U_1 \indep (\widetilde{X}_1,Y_0)|C$, the second from iterated expectations, and the third from the support assumption in the statement of Theorem \ref{thm:ID_weakexog}. The fourth follows from $(C,U_1) \indep (\widetilde{X}_1'\beta_0 + Y_0 \rho_0)|V^{1}$, which is implied by Assumptions A\ref{assn:model}$^\dagger$.(ii) and A\ref{assn:index}$^\dagger$, and the proof is similar to step 1 in the proof of Theorem \ref{thm:ASF-APE_id}. Finally, the last two equalities follow directly.

One can also identify the APE or AME under the appropriate support conditions on the index variables. 
Finding sufficient index variables in dynamic models is potentially more delicate than in static ones because the exchangeability of covariates across time is an unlikely justification in dynamic models. We leave an analysis of this task for future work.

\clearpage
\section{Monte Carlo Simulations} \label{sec:MonteCarlo}

We conduct two sets of Monte Carlo simulation experiments based on binary panel data models with the conditioning variable being \(V\) in Case 1 and \(V'\gamma_0\) in Case 2. We focus on the former while streamlining the discussion of the latter, as their main messages are similar. Both cases account for two key features: multidimensional index variables and flexible error distributions. For Monte Carlo simulations with logistic errors, please see the previous version of this paper \citep{liu2021identification}.

\subsection{Case 1: Conditioning on \(V\)}

The Monte Carlo design is summarized in Table \ref{tbl:sim-gen-dgp}. Note that both $X_{t}$ and $V$ are 2-by-1 vectors. Covariates $X^{(k)}_{t},\;k=1,2,$ are drawn from a bivariate standard normal distribution, which satisfies the support conditions in Theorem \ref{thm:ASF-APE_id}. Our choices of $N=1500$ and $T=10$ are directly comparable with the dataset in our empirical illustration on female labor force participation, in which $N = 1461$ and $T = 9$. We use ``DGP xy'' to indicate the data-generating process (DGP) with $f_{C|V}$ being type x and $f_{U_t}$ being type y. The distribution of individual effects, $f_{C|V}$, is skewed in DGP 1y and bimodal in DGP 2y.\footnote{Many empirical applications feature skewed and/or multimodal distributions of unobserved individual heterogeneity.  For example, \cite{liu2018density} estimated the latent productivity distribution of young firms, which exhibits a long right tail since good ideas are scarce. Also, \citet*{FisherJensenTkac2019}  found two modes in the underlying skill distribution of mutual fund management---a primary mode with average ability and a secondary mode with poor performance.} For the error term, we consider error distributions $f_{U_t}$ that exhibit skewness in DGP x1 and fat-tails in DGP x2. 

\begin{table}
       \caption{Monte Carlo Design - Case 1}
       \label{tbl:sim-gen-dgp}
       \begin{center}
               \begin{tabular}{ll} \hline \hline
                               Model:& $Y_{it} = \1\left(X_{it}'\beta_0 + C_i - U_{it} \geq 0 \right)$ \\
Common param.:&$\beta_0=(1,2)'$\\
                               Covariates:& $X_{it}\sim \mathcal{N}\left(0_{2\times 1},I_2\right)$ \\
                               Index:& $V_i=\frac 1 T\sum_{t=1}^T X_{it} $ \\
                               
                               Sample Size:& $N=1500$, $T=10$ \\
                               \# Repetitions:& $N_{sim}=100$ \\ \hline
                               $f_{C|V}$:&\\
                               \hspace{1em}DGP 1y, skewed: & $C_i|V_i \sim \left(\sum_{k=1}^2\left(V_{i}^{(k)}\right)^2+1\right)\cdot \mathcal{SN}(0,1,10)$\\
\hspace{1em}DGP 2y, bimodal: & $C_i|V_i \sim \frac 1 2 \mathcal{N}\left(\sum_{k=1}^2\left(V_{i}^{(k)}\right)^2+2,\;1\right)+\frac 1 2 \mathcal{N}\left(-\sum_{k=1}^2\left(V_{i}^{(k)}\right)^2-2,\;1\right)$\\ \vspace{-0.51em}
\\

\multicolumn{2}{l}{$f_{U_t}$, with $\mathbb{E}\left(U_{t}\right)=0$ and $\Var\left(U_{t}\right)=1$:}\\
\hspace{1em}DGP x1, skewed: & $U_{it} \sim \frac{1}{9}\mathcal{N}\left(2,\frac{1}{2}\right)+\frac{8}{9}\mathcal{N}\left(-\frac{1}{4},\frac{1}{2}\right)$\\
\hspace{1em}DGP x2, fat-tailed: & $U_{it} \sim \frac{1}{5}\mathcal{N}\left(0,4\right)+\frac{4}{5}\mathcal{N}\left(0,\frac{1}{4}\right)$\\
                               \hline
                               \\
                       \end{tabular}

\includegraphics[width=0.8\textwidth]{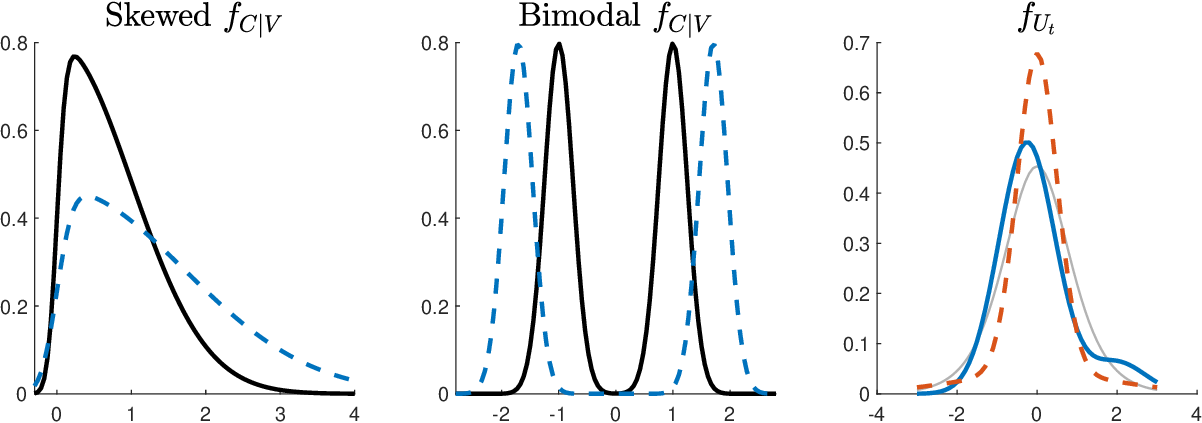}
                      
       \end{center}
       {\footnotesize {\em Notes: }$\mathcal{SN}(\xi,\omega,\alpha)$ denotes a skewed normal distribution with location parameter $\xi$, scale parameter $\omega$, and shape parameter $\alpha$, and its pdf is given by $f(x)={\frac  {2}{\omega }}\phi \left({\frac  {x-\xi }{\omega }}\right)\Phi \left(\alpha \left({\frac  {x-\xi }{\omega }}\right)\right)$, where $\phi(\cdot)$ and $\Phi(\cdot)$ denote the pdf and cdf of a standard normal distribution. The two left panels depict $f_{C|V}$. The black solid and blue dashed lines are conditional on $\sqrt{\sum_{k=1}^2\left(V_{i}^{(k)}\right)^2}=0$ and $0.5$, respectively. The rightmost panel depicts $f_{U_t}$. The blue solid and red dashed lines are $f_{U_t}$ in DGPs G.x1 (skewed) and G.x2 (fat-tailed), respectively. For reference, the thin gray line plots a rescaled logistic distribution with zero mean and unit variance. 
        }\setlength{\baselineskip}{4mm}
\end{table}

We evaluate the estimated ASF and APE based on a collection of $\ubx_t=\left(\ubx_t^{(1)},\ubx_t^{(2)}\right)'$. We fix $\ubx_t^{(1)}$ at its population mean (i.e., $\ubx_t^{(1)}=0$) and vary $\ubx_t^{(2)}\in[-1,\;1]$, which covers 68\% of the distribution of $X^{(2)}_{t}$. Given the non-logistic error distributions, we estimate $\beta_0$ using a smoothed maximum score estimator as in \cite{CharlierMelenbergSoest1995} and \cite{Kyriazidou1995}, employing a fourth-order cdf kernel to satisfy the bandwidth requirement in Assumption B\ref{assn:ASFrates}. We normalize $|\widehat\beta^{(1)}|=1$ since the identification of $\beta_0$ is up to scale. We use a local cubic regression (i.e., polynomial order $\ell = 3$) to estimate the conditional expectation of $Y_{t}$ evaluated at $(\ubx_t'\widehat\beta,V)$. Finally, given the DGPs, the ASFs and APEs do not change over time, so we average the estimated ASFs and APEs across time periods. See Section \ref{app_sec:gen-choice} for more details.

Figure \ref{fig:sim-gen-ape-hair} compares the estimated APEs to the true APEs based on 100 Monte Carlo repetitions in each setup, and Figure \ref{fig:sim-gen-ape-stat} plots the biases, standard deviations, and root mean square errors (RMSEs). Figures \ref{fig:sim-gen-asf-hair} and \ref{fig:sim-gen-asf-stat} show corresponding graphs for the ASF estimates. We see that the proposed semiparametric estimator better captures the peak in the skewed case and the valley in the bimodal case, whereas the RE and CRE reverse the valley in the bimodal case due to their parametric restrictions. As expected, the semiparametric estimator generates smaller biases and larger standard deviations than the RE and CRE. The improvement in bias dominates the deterioration in standard deviation for most covariate values in all setups. The difference between the RE and CRE is relatively negligible---their parametric assumptions in $f_{C|V}$ seem too restrictive and lead to considerable misspecification biases given current DGPs.

In Table \ref{tbl:sim-gen-ape}, the first three columns summarize the APE estimator's performance by computing weighted averages of biases, standard deviations, and RMSEs across the collection of evaluation points $\ubx_t$ with weights proportional to $f_{X_{t}}(\ubx_t)$. Similar to what we observed in Figures \ref{fig:sim-gen-ape-hair} and \ref{fig:sim-gen-ape-stat}, the semiparametric estimator yields the smallest RMSE in all cases. The last three columns present the minimum, median, and maximum of the ratios of $\text{RMSE}(\ubx_t)$ to the true $\text{APE}(\ubx_t)$. The minimum, median, and maximum are taken over the collection of evaluation points $\ubx_t$. We see that the ratios range between 2.5\% and 120\% across all setups. Therefore, the RMSEs are generally sizeable compared to the true APEs, and thus the more precise semiparametric estimator would indeed make a significant difference. The RE and CRE have lower \textit{minimal} ratios, which occurs at $\ubx_t$'s where the grey bands ``intersect'' with true APE curves; at the same time, the semiparametric estimator largely reduces the \textit{median} and \textit{maximal} ratios. For example, in DGP 22, the median (maximal) ratio of the semiparametric estimator is less than 1/3 (1/4) of its RE and CRE counterparts.

We also examine the estimation of the common parameter and the ASF in Table \ref{tbl:sim-gen-beta-asf}. The structure of the ASF part of the table is the same as Table \ref{tbl:sim-gen-ape} for the APE. The ratios of $\text{RMSE}(\ubx_t)$ to the true $\text{ASF}(\ubx_t)$ are generally smaller than their APE counterparts, and the semiparametric estimator again dominates the RE and CRE.

For $\widehat\beta$, the nonparametric smoothed maximum score estimator produces less biased but noisier estimates, and their RMSEs are larger than those of the RE and CRE. Nevertheless, the semiparametric estimator still better traces the shapes of the ASFs, hence providing the most accurate ASF estimates. Its RMSEs are around or less than half that of the RE and CRE. To take a closer look at how the $\beta_0$ estimation affects the APE estimation, we further examine an infeasible semiparametric estimator with known $\beta_0$ (see Table 8 in the previous version of this paper, \cite{liu2021identification}). Results show that the smoothed maximum score estimates of $\beta_0$ slightly increase the absolute value of the bias, the standard deviation, and the RMSE, but the difference is minor---the flexible semiparametric estimator of the APE partially absorbs the effect of the slightly imprecisely estimated $\beta_0$.
\clearpage
\begin{figure}
       \caption{Estimated APEs vs True APEs - Monte Carlo Case 1}
       \label{fig:sim-gen-ape-hair}
       \begin{center}
       \begin{tabular}{cc}
       &\textbf{Semiparam.\hspace{1.25in}RE\hspace{1.55in}CRE}\\
       \rotatebox{90}{\hspace{.6in}\textbf{DGP 11}}
       &\adjustbox{trim={0.08\width} {0\height} {0.08\width} {0\height},clip}
       {\includegraphics[width=0.99\textwidth]{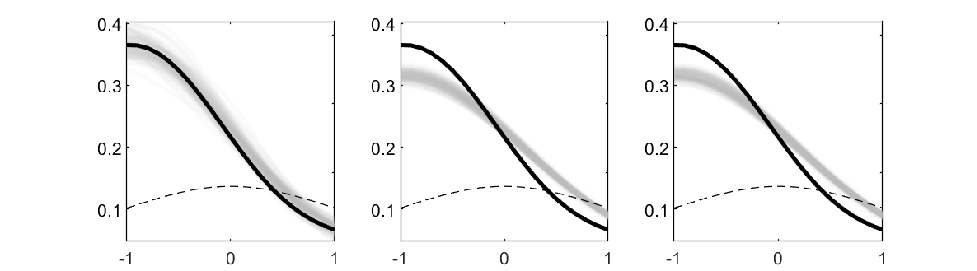}}\\
       \rotatebox{90}{\hspace{.6in} \textbf{DGP 12}}
       &\adjustbox{trim={0.08\width} {0\height} {0.08\width} {0\height},clip}
       {\includegraphics[width=0.99\textwidth]{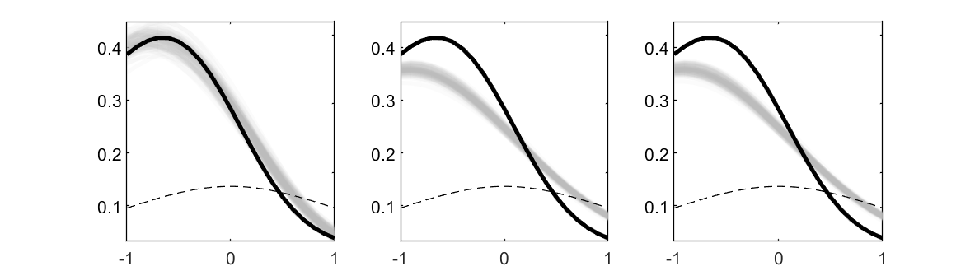}}\\
       \rotatebox{90}{\hspace{.6in} \textbf{DGP 21}}
       &\adjustbox{trim={0.08\width} {0\height} {0.08\width} {0\height},clip}
       {\includegraphics[width=0.99\textwidth]{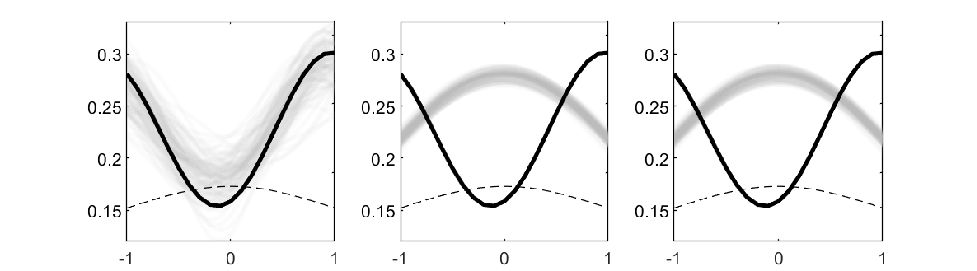}}\\
       \rotatebox{90}{\hspace{.6in} \textbf{DGP 22}}
       &\adjustbox{trim={0.08\width} {0\height} {0.08\width} {0\height},clip}
       {\includegraphics[width=0.99\textwidth]{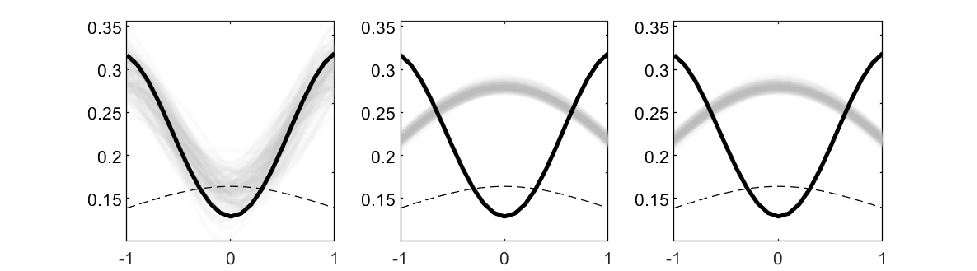}}
       \end{tabular}
       \end{center}
       {\footnotesize {\em Notes:} The x-axes are potential values $\ubx_t^{(2)}$. The black solid lines are the true APEs. The gray bands are collections of lines where each line corresponds to the estimated APE based on one simulation repetition. The thin dashed lines at the bottom of all panels show $f_{X_{t}^{(2)}}\left(\ubx_t^{(2)}\right)$. 
        }\setlength{\baselineskip}{4mm}                                     
\end{figure}

\begin{figure}
       \caption{Bias, Standard Deviation, and RMSE in APE Estimation - Monte Carlo Case 1}
       \label{fig:sim-gen-ape-stat}
       \begin{center}
       \begin{tabular}{cc}
       &\textbf{Semiparam.\hspace{1.35in}RE\hspace{1.6in}CRE}\vspace{1em}\\
       \rotatebox{90}{\hspace{.6in}\textbf{DGP 11}}
       &\adjustbox{trim={0\width} {0\height} {0\width} {0\height},clip}
       {\includegraphics[width=.85\textwidth]{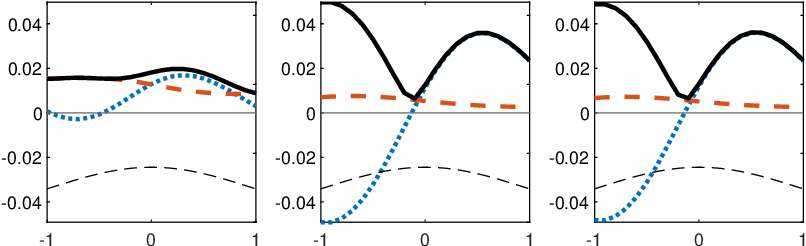}}\vspace{1em}\\
       \rotatebox{90}{\hspace{.6in} \textbf{DGP 12}}
       &\adjustbox{trim={0\width} {0\height} {0\width} {0\height},clip}
       {\includegraphics[width=.85\textwidth]{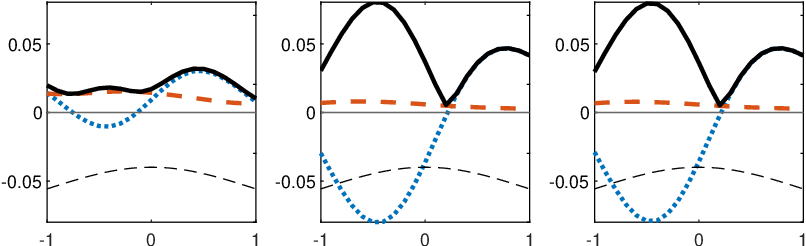}}\vspace{1em}\\
       \rotatebox{90}{\hspace{.6in} \textbf{DGP 21}}
       &\adjustbox{trim={0\width} {0\height} {0\width} {0\height},clip}
       {\includegraphics[width=.85\textwidth]{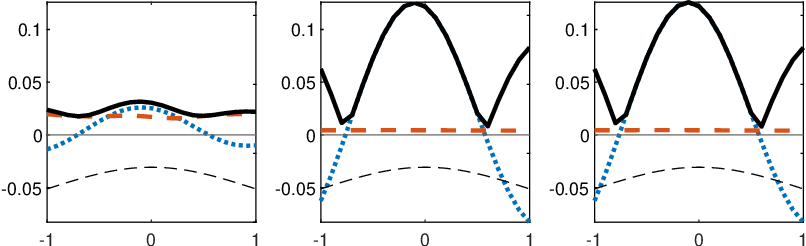}}\vspace{1em}\\
       \rotatebox{90}{\hspace{.6in} \textbf{DGP 22}}
       &\adjustbox{trim={0\width} {0\height} {0\width} {0\height},clip}
       {\includegraphics[width=.85\textwidth]{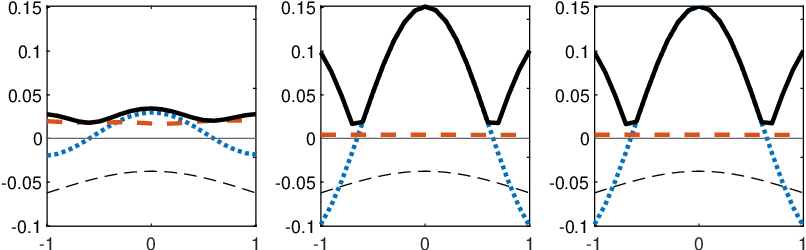}}\vspace{1em}
       \end{tabular}
       \end{center}
       {\footnotesize {\em Notes:} The x-axes are potential values $\ubx_t^{(2)}$. The black solid / blue dotted / red dashed lines represent the RMSEs / biases / standard deviations of the APE estimates. The thin dashed lines at the bottom of all panels show $f_{X_{t}^{(2)}}\left(\ubx_t^{(2)}\right)$. 
        }\setlength{\baselineskip}{4mm}                                     
\end{figure}

\begin{table}
       \caption{APE Estimation - Monte Carlo Case 1}
       \label{tbl:sim-gen-ape}
       \begin{center}
   \begin{tabular}{llrrrrrr} \\  \hline \hline
& &      \textbar Bias\textbar & SD & RMSE & Min & Med. & Max \\ \hline
\multirow{3}{*}{DGP 11}
& Semiparam. &   0.013 &   0.012 &   \textbf{0.016} &     4.2\% &     8.4\% &    15.7\%  \\ 
& RE         &   0.028 &   0.005 &   0.029 &     2.7\% &    13.6\% &    39.1\%  \\ 
& CRE        &   0.028 &   0.005 &   0.029 &     2.7\% &    13.3\% &    39.1\%  \\ 
&&&&&&&\\ 
\multirow{3}{*}{DGP 12}
& Semiparam. &   0.018 &   0.012 &   \textbf{0.020} &     3.3\% &     6.0\% &    35.2\%  \\ 
& RE         &   0.047 &   0.006 &   0.047 &     2.6\% &    18.3\% &   107.5\%  \\ 
& CRE        &   0.046 &   0.006 &   0.047 &     2.5\% &    18.4\% &   107.3\%  \\ 
&&&&&&&\\ 
\multirow{3}{*}{DGP 21}
& Semiparam. &   0.019 &   0.018 &   \textbf{0.023} &     7.2\% &     8.5\% &    20.5\%  \\ 
& RE         &   0.071 &   0.004 &   0.071 &     3.1\% &    23.8\% &    81.5\%  \\ 
& CRE        &   0.071 &   0.004 &   0.071 &     3.0\% &    23.7\% &    81.7\%  \\ 
&&&&&&&\\ 
\multirow{3}{*}{DGP 22}
& Semiparam. &   0.022 &   0.019 &   \textbf{0.026} &     7.4\% &     9.3\% &    26.6\%  \\ 
& RE         &   0.086 &   0.004 &   0.086 &     6.3\% &    31.1\% &   116.9\%  \\ 
& CRE        &   0.086 &   0.004 &   0.086 &     6.2\% &    31.0\% &   117.2\%  \\     

\hline
   \end{tabular}       
\end{center}
       {\footnotesize {\em Notes:} \textbar Bias\textbar\ indicates the absolute value of the bias. The reported \textbar Bias\textbar, SD, and RMSE are weighted averages across the collection of evaluation points $\ubx_t$, where the weights are proportional to $f_{X_{t}}(\ubx_t)$. The bold entries indicate the best estimator (i.e., with the smallest RMSE) for each DGP. The last three columns are the minimum/median/maximum of $\text{RMSE}(\ubx_t)/\text{APE}(\ubx_t)\times100\%$ over $\ubx_t$.}\setlength{\baselineskip}{4mm}
\end{table}

\begin{figure}
       \caption{Estimated ASFs vs True ASFs - Monte Carlo Case 1}
       \label{fig:sim-gen-asf-hair}
       \begin{center}
       \begin{tabular}{cc}
       &\textbf{Semiparam.\hspace{1.25in}RE\hspace{1.55in}CRE}\\
       \rotatebox{90}{\hspace{.6in}\textbf{DGP 11}}
       &\adjustbox{trim={0.08\width} {0\height} {0.08\width} {0\height},clip}
       {\includegraphics[width=0.99\textwidth]{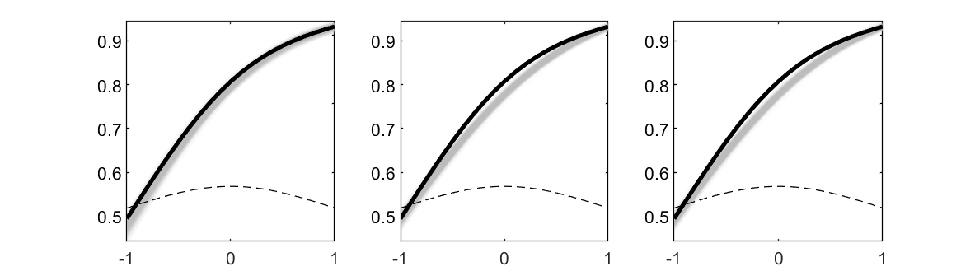}}\\
       \rotatebox{90}{\hspace{.6in} \textbf{DGP 12}}
       &\adjustbox{trim={0.08\width} {0\height} {0.08\width} {0\height},clip}
       {\includegraphics[width=0.99\textwidth]{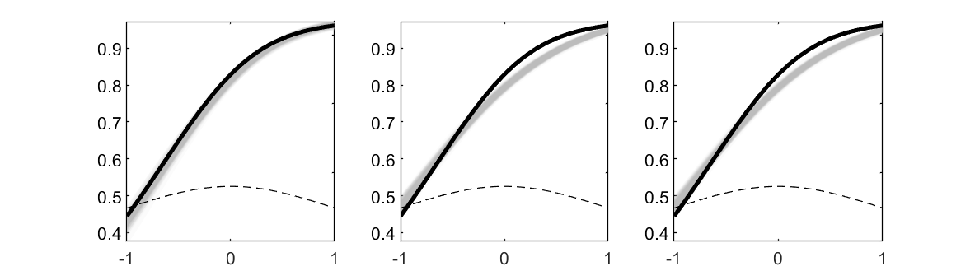}}\\
       \rotatebox{90}{\hspace{.6in} \textbf{DGP 21}}
       &\adjustbox{trim={0.08\width} {0\height} {0.08\width} {0\height},clip}
       {\includegraphics[width=0.99\textwidth]{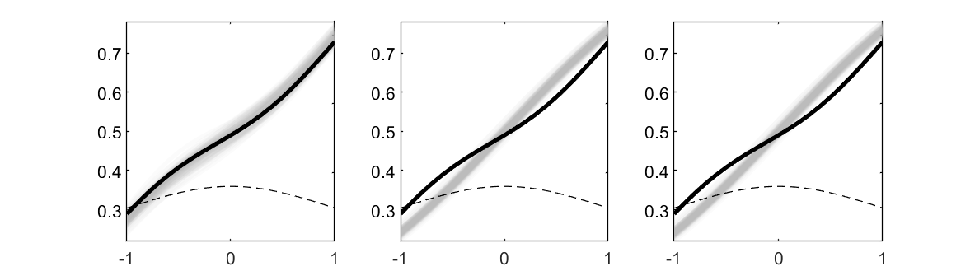}}\\
       \rotatebox{90}{\hspace{.6in} \textbf{DGP 22}}
       &\adjustbox{trim={0.08\width} {0\height} {0.08\width} {0\height},clip}
       {\includegraphics[width=0.99\textwidth]{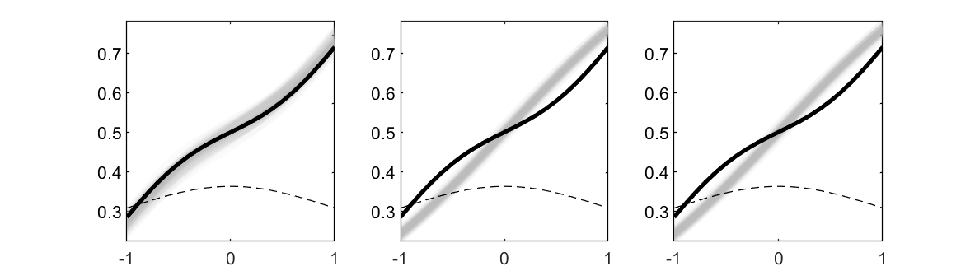}}
               
       \end{tabular}
       \end{center}
       {\footnotesize {\em Notes:} The x-axes are potential values $\ubx_t^{(2)}$. The black solid lines are the true ASFs. The gray bands are collections of lines where each line corresponds to the estimated ASF based on one simulation repetition. The thin dashed lines at the bottom of all panels show $f_{X_{t}^{(2)}}\left(\ubx_t^{(2)}\right)$. 
        }\setlength{\baselineskip}{4mm}                                     
\end{figure}

\begin{figure}
       \caption{Bias, Standard Deviation, and RMSE in ASF Estimation - Monte Carlo Case 1}
       \label{fig:sim-gen-asf-stat}
       \begin{center}
       \begin{tabular}{cc}
       &\textbf{Semiparam.\hspace{1.35in}RE\hspace{1.6in}CRE}\vspace{1em}\\
       \rotatebox{90}{\hspace{.6in}\textbf{DGP 11}}
       &\adjustbox{trim={0\width} {0\height} {0\width} {0\height},clip}
       {\includegraphics[width=.85\textwidth]{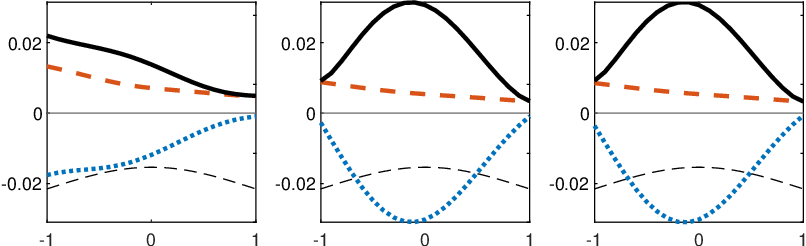}}\vspace{1em}\\
       \rotatebox{90}{\hspace{.6in} \textbf{DGP 12}}
       &\adjustbox{trim={0\width} {0\height} {0\width} {0\height},clip}
       {\includegraphics[width=.85\textwidth]{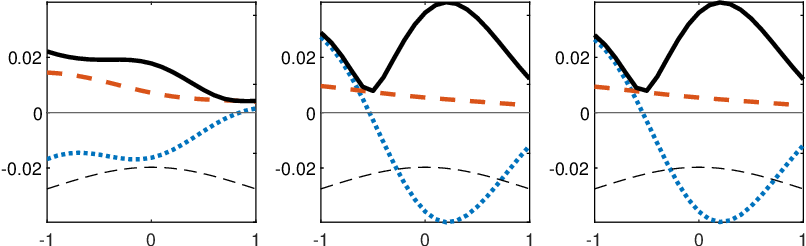}}\vspace{1em}\\
       \rotatebox{90}{\hspace{.6in} \textbf{DGP 21}}
       &\adjustbox{trim={0\width} {0\height} {0\width} {0\height},clip}
       {\includegraphics[width=.85\textwidth]{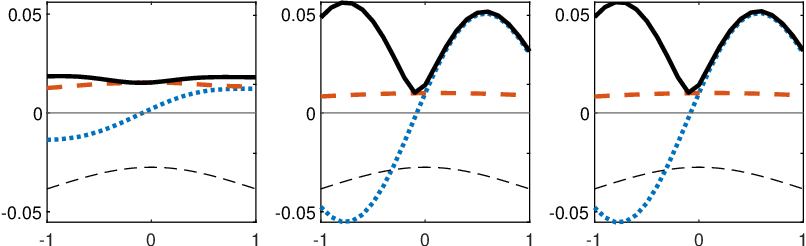}}\vspace{1em}\\
       \rotatebox{90}{\hspace{.6in} \textbf{DGP 22}}
       &\adjustbox{trim={0\width} {0\height} {0\width} {0\height},clip}
       {\includegraphics[width=.85\textwidth]{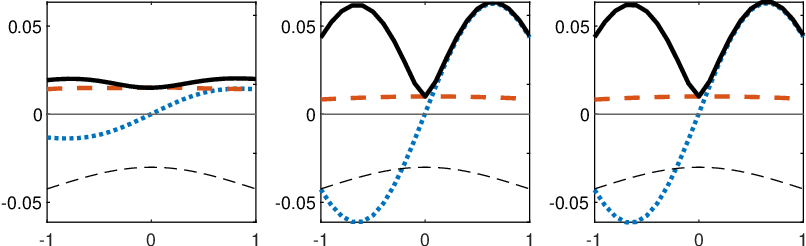}}\vspace{1em}
       \end{tabular}
       \end{center}
       {\footnotesize {\em Notes:} The x-axes are potential values $\ubx_t^{(2)}$. The black solid / blue dotted / red dashed lines represent the RMSEs / biases / standard deviations of the ASF estimates. The thin dashed lines at the bottom of all panels show $f_{X_{t}^{(2)}}\left(\ubx_t^{(2)}\right)$. 
        }\setlength{\baselineskip}{4mm}   
                                          
\end{figure}

\begin{table}
       \caption{Estimation of Common Parameter and ASF - Monte Carlo Case 1}
       \label{tbl:sim-gen-beta-asf}
       \begin{center}
   \begin{tabular}{ll|rrr|rrrrrr} \\  \hline \hline   
   &  &\multicolumn{3}{c|}{$\widehat\beta^{(2)}$} &\multicolumn{6}{c}{ASF}\\
& &      Bias & SD & RMSE &\textbar Bias\textbar & SD & RMSE & Min & Med. & Max \\ \hline
\multirow{3}{*}{DGP 11}
& Semiparam. &   0.011 &   0.031 &   0.033 &   0.011 &   0.008 &   \textbf{0.013} &     0.5\% &     1.7\% &     4.4\%  \\ 
& RE         &   0.004 &   0.023 &   0.023 &   0.020 &   0.006 &   0.021 &     0.4\% &     2.8\% &     4.1\%  \\ 
& CRE        &   0.005 &   0.022 &   0.023 &   0.020 &   0.006 &   0.021 &     0.4\% &     2.8\% &     4.2\%  \\ 
&&&&&&&&&&\\ 
\multirow{3}{*}{DGP 12}
& Semiparam. &   0.012 &   0.026 &   0.028 &   0.013 &   0.008 &   \textbf{0.015} &     0.4\% &     2.1\% &     5.0\%  \\ 
& RE         &   0.005 &   0.019 &   0.020 &   0.025 &   0.006 &   0.026 &     1.2\% &     3.4\% &     6.5\%  \\ 
& CRE        &   0.006 &   0.019 &   0.020 &   0.025 &   0.006 &   0.026 &     1.2\% &     3.4\% &     6.3\%  \\ 
&&&&&&&&&&\\ 
\multirow{3}{*}{DGP 21}
& Semiparam. &   0.015 &   0.064 &   0.065 &   0.014 &   0.014 &   \textbf{0.017} &     2.5\% &     3.2\% &     6.4\%  \\ 
& RE         &   0.007 &   0.041 &   0.042 &   0.037 &   0.010 &   0.038 &     2.2\% &     7.7\% &    16.8\%  \\ 
& CRE        &   0.008 &   0.043 &   0.043 &   0.037 &   0.010 &   0.039 &     2.2\% &     7.7\% &    16.9\%  \\ 
&&&&&&&&&&\\ 
\multirow{3}{*}{DGP 22}
& Semiparam. &   0.011 &   0.072 &   0.073 &   0.014 &   0.015 &   \textbf{0.018} &     2.8\% &     3.3\% &     6.8\%  \\ 
& RE         &   0.004 &   0.043 &   0.043 &   0.044 &   0.010 &   0.045 &     2.0\% &     9.5\% &    16.9\%  \\ 
& CRE        &   0.005 &   0.044 &   0.044 &   0.044 &   0.010 &   0.045 &     2.0\% &     9.5\% &    17.0\%  \\
\hline
   \end{tabular}       
\end{center}
       {\footnotesize {\em Notes:}  For the RE and CRE, we normalize $\widehat\beta$ such that $|\widehat\beta^{(1)}|=1$ to ensure comparability across estimators. \textbar Bias\textbar\ indicates the absolute value of the bias. The reported \textbar Bias\textbar, SD, and RMSE of the ASF are weighted averages across the collection of evaluation points $\ubx_t$, where the weights are proportional to $f_{X_{t}}(\ubx_t)$. The bold entries indicate the best ASF estimator (i.e., with the smallest RMSE) for each DGP. The last three columns are the minimum/median/maximum of $\text{RMSE}(\ubx_t)/\text{ASF}(\ubx_t)\times100\%$ over $\ubx_t$.}\setlength{\baselineskip}{4mm}
\end{table}

\clearpage
\subsection{Conditioning on \(V'\gamma_0\), Estimated Indices}\label{sec:appendix-sim}
The Monte Carlo design is described in Table \ref{tbl:sim-gen-dgp-2}, which is modified from Case 1. Now the distributions of individual effects, $f_{C|V}$, depend on a linear combination of \(V\).
Here we consider the three variations of the semiparametric estimator discussed in Appendix \ref{sec:est-ind}: SP, SP (\(V'\gamma_0\)), and SP (\(V'\gamma_0\), iter.).
In the current setup, there is no misspecification for all three variations of the semiparametric estimator.

Figure \ref{fig:sim-gen-ape-hair-2} shows the estimated APEs, and Figure \ref{fig:sim-gen-ape-stat-2} depicts their corresponding biases, standard deviations, and RMSEs. Similarly, Figures \ref{fig:sim-gen-asf-hair-2} and \ref{fig:sim-gen-asf-stat-2} present the estimated ASFs and their corresponding statistics. Table \ref{tbl:sim-gen-ape-2} reports these statistics for the APE estimators, and Table \ref{tbl:sim-gen-beta-asf-2} for the common parameter and the ASF. In terms of estimation performance, the differences across the three variations of the semiparametric estimator are relatively small and, similar to Case 1, they dominate the RE and CRE.

\begin{table}[t]
       \caption{Monte Carlo Design - Case 2}
       \label{tbl:sim-gen-dgp-2}
       \begin{center}
               \begin{tabular}{ll} \hline \hline
                               Model:& $Y_{it} = \1\left(X_{it}'\beta_0 + C_i - U_{it}>0 \right)$ \\
Common param.:&$\beta_0=(1,2)'$, $\gamma_0 = (1,1)'$\\
                               Covariates:& $X_{it}\sim \mathcal{N}\left(0_{2\times 1},I_2\right)$ \\
                               Index:& $V_i'\gamma_0 = 
                               \frac 1 T \sum_{t=1}^T X_{it}'\gamma_0 $ \\
                               
                               Sample Size:& $N=1500$, $T=10$ \\
                               \# Repetitions:& $N_{sim}=100$ \\ \hline
                               $f_{C|V}$:&\\
                               \hspace{1em}DGP 1y, skewed: & $C_i|V_i \sim V_i'\gamma_0+ \left((V_i'\gamma_0)^2+1\right)\cdot \mathcal{SN}(0,1,10)$\\
\hspace{1em}DGP 2y, bimodal: & $C_i|V_i \sim V_i'\gamma_0  +\frac 1 2 \mathcal{N}\left(\left(V_i'\gamma_0\right)^2+2,1\right)+\frac 1 2 \mathcal{N}\left(-\left(V_i'\gamma_0\right)^2-2,1\right)$\\ \vspace{-0.5em}
\\

\multicolumn{2}{l}{$f_{U_t}$, with $\mathbb{E}\left(U_{it}\right)=0$ and $\Var\left(U_{it}\right)=1$:}\\
\hspace{1em}DGP x1, skewed: & $U_{it} \sim \frac{1}{9}\mathcal{N}\left(2,\frac{1}{2}\right)+\frac{8}{9}\mathcal{N}\left(-\frac{1}{4},\frac{1}{2}\right)$\\
\hspace{1em}DGP x2, fat-tailed: & $U_{it} \sim \frac{1}{5}\mathcal{N}\left(0,4\right)+\frac{4}{5}\mathcal{N}\left(0,\frac{1}{4}\right)$\\
                               \hline
                               \\
                       \end{tabular}
                      
       \end{center}
       {\footnotesize {\em Notes: }See the description in Table~\ref{tbl:sim-gen-dgp}.
        }\setlength{\baselineskip}{4mm}
\end{table}
\begin{figure}[p]
       \caption{Estimated APEs vs True APEs - Monte Carlo Case 2}
       \label{fig:sim-gen-ape-hair-2}
       \begin{center}
       \begin{tabular}{cc}
       &\textbf{SP ($V'\gamma_0$, iter.)\hspace{1.2in}RE\hspace{1.55in}CRE}\\
       \rotatebox{90}{\hspace{.6in}\textbf{DGP 11}}
       &\adjustbox{trim={0.08\width} {0\height} {0.08\width} {0\height},clip}
       {\includegraphics[width=0.99\textwidth]{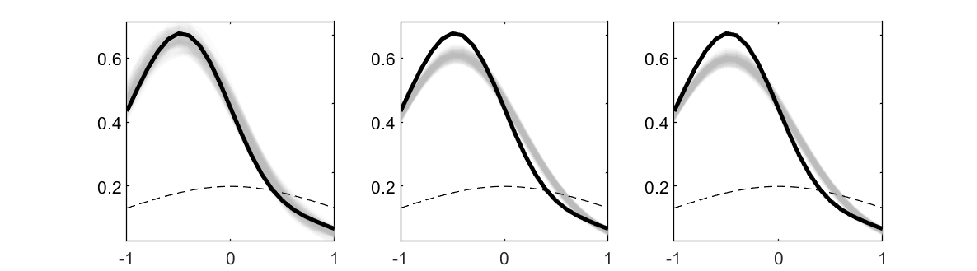}}\\
       \rotatebox{90}{\hspace{.6in} \textbf{DGP 12}}
       &\adjustbox{trim={0.08\width} {0\height} {0.08\width} {0\height},clip}
       {\includegraphics[width=0.99\textwidth]{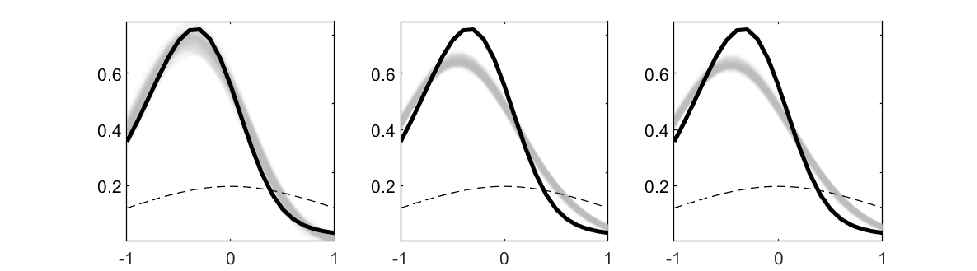}}\\
       \rotatebox{90}{\hspace{.6in} \textbf{DGP 21}}
       &\adjustbox{trim={0.08\width} {0\height} {0.08\width} {0\height},clip}
       {\includegraphics[width=0.99\textwidth]{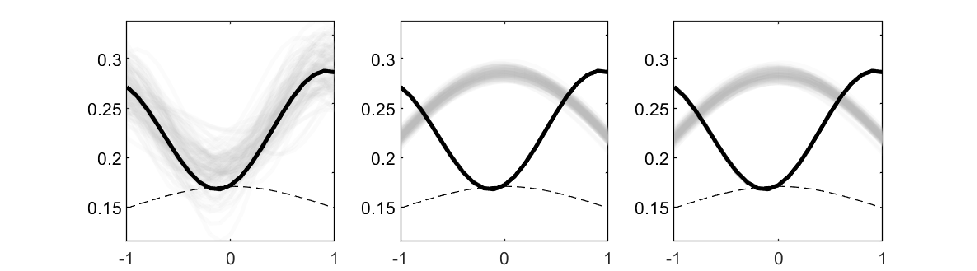}}\\
       \rotatebox{90}{\hspace{.6in} \textbf{DGP 22}}
       &\adjustbox{trim={0.08\width} {0\height} {0.08\width} {0\height},clip}
       {\includegraphics[width=0.99\textwidth]{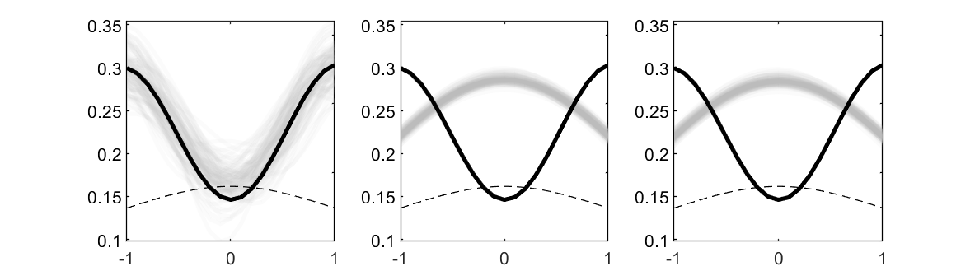}}
               
       \end{tabular}
       \end{center}
       {\footnotesize {\em Notes:} The x-axes are potential values $\ubx_t^{(2)}$. The black solid lines are the true APEs. The gray bands are collections of lines where each line corresponds to the estimated APE based on one simulation repetition. The thin dashed lines at the bottom of all panels show $f_{X_{t}^{(2)}}\left(\ubx_t^{(2)}\right)$. 
        }\setlength{\baselineskip}{4mm}                                     
\end{figure}

\begin{figure}[p]
       \caption{Bias, Standard Deviation, and RMSE in APE Estimation - Monte Carlo Case 2}
       \label{fig:sim-gen-ape-stat-2}
       \begin{center}
       \begin{tabular}{cc}
       &\textbf{SP ($V'\gamma_0$, iter.)\hspace{1.3in}RE\hspace{1.6in}CRE}\vspace{1em}\\
       \rotatebox{90}{\hspace{.6in}\textbf{DGP 11}}
       &\adjustbox{trim={0\width} {0\height} {0\width} {0\height},clip}
       {\includegraphics[width=.85\textwidth]{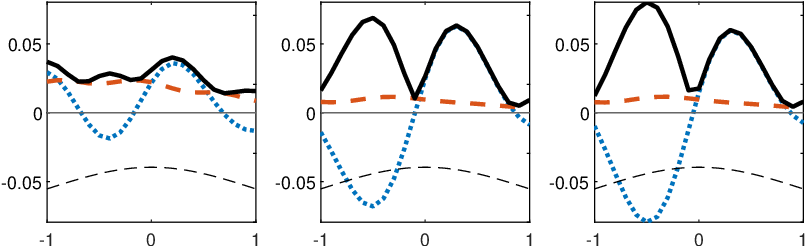}}\vspace{1em}\\
       \rotatebox{90}{\hspace{.6in} \textbf{DGP 12}}
       &\adjustbox{trim={0\width} {0\height} {0\width} {0\height},clip}
       {\includegraphics[width=.85\textwidth]{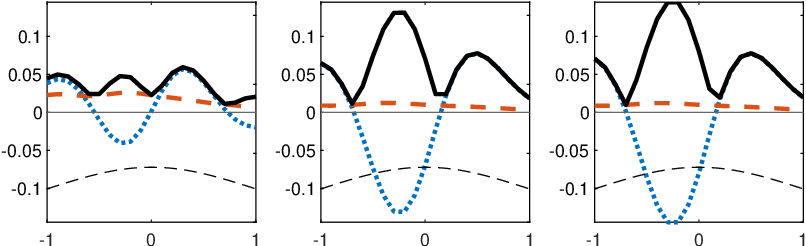}}\vspace{1em}\\
       \rotatebox{90}{\hspace{.6in} \textbf{DGP 21}}
       &\adjustbox{trim={0\width} {0\height} {0\width} {0\height},clip}
       {\includegraphics[width=.85\textwidth]{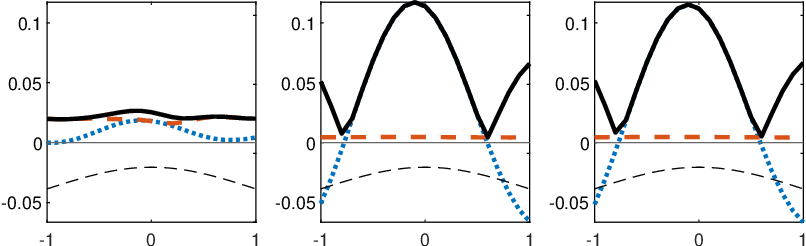}}\vspace{1em}\\
       \rotatebox{90}{\hspace{.6in} \textbf{DGP 22}}
       &\adjustbox{trim={0\width} {0\height} {0\width} {0\height},clip}
       {\includegraphics[width=.85\textwidth]{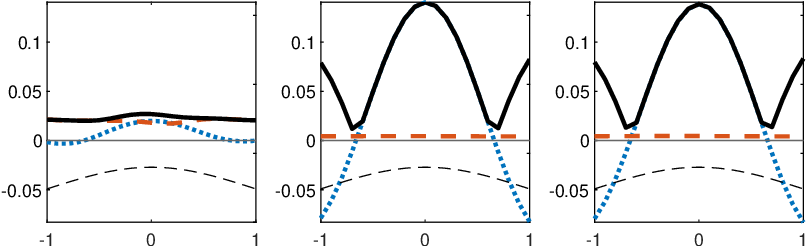}}\vspace{1em}
       \end{tabular}
       \end{center}
       {\footnotesize {\em Notes:} The x-axes are potential values $\ubx_t^{(2)}$. The black solid / blue dotted / red dashed lines represent the RMSEs / biases / standard deviations of the APE estimates. The thin dashed lines at the bottom of all panels show $f_{X_{t}^{(2)}}\left(\ubx_t^{(2)}\right)$. 
        }\setlength{\baselineskip}{4mm}   
                                          
\end{figure}
\begin{figure}[p]
       \caption{Estimated ASFs vs True ASFs - Monte Carlo Case 2}
       \label{fig:sim-gen-asf-hair-2}
       \begin{center}
       \begin{tabular}{cc}
       &\textbf{SP ($V'\gamma_0$, iter.)\hspace{1.2in}RE\hspace{1.55in}CRE}\\
       \rotatebox{90}{\hspace{.6in}\textbf{DGP 11}}
       &\adjustbox{trim={0.08\width} {0\height} {0.08\width} {0\height},clip}
       {\includegraphics[width=0.99\textwidth]{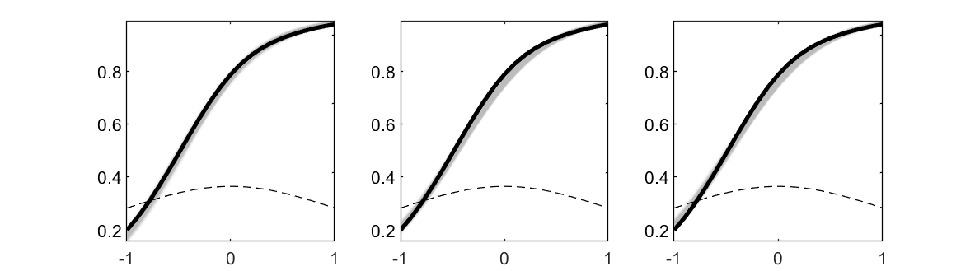}}\\
       \rotatebox{90}{\hspace{.6in} \textbf{DGP 12}}
       &\adjustbox{trim={0.08\width} {0\height} {0.08\width} {0\height},clip}
       {\includegraphics[width=0.99\textwidth]{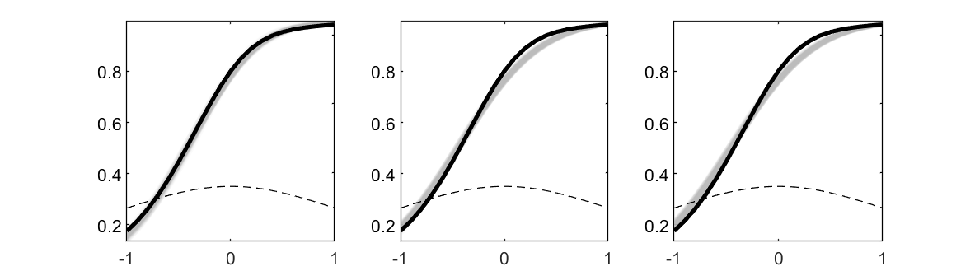}}\\
       \rotatebox{90}{\hspace{.6in} \textbf{DGP 21}}
       &\adjustbox{trim={0.08\width} {0\height} {0.08\width} {0\height},clip}
       {\includegraphics[width=0.99\textwidth]{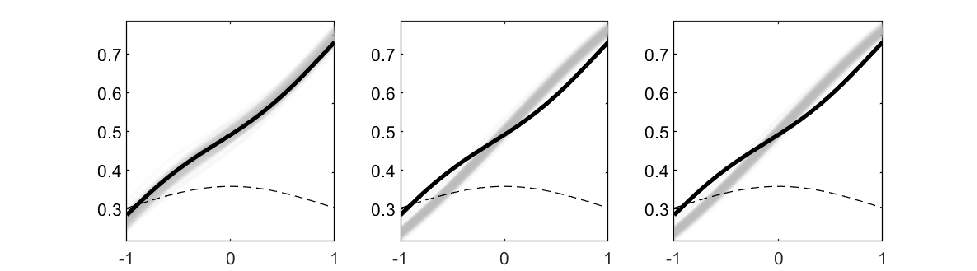}}\\
       \rotatebox{90}{\hspace{.6in} \textbf{DGP 22}}
       &\adjustbox{trim={0.08\width} {0\height} {0.08\width} {0\height},clip}
       {\includegraphics[width=0.99\textwidth]{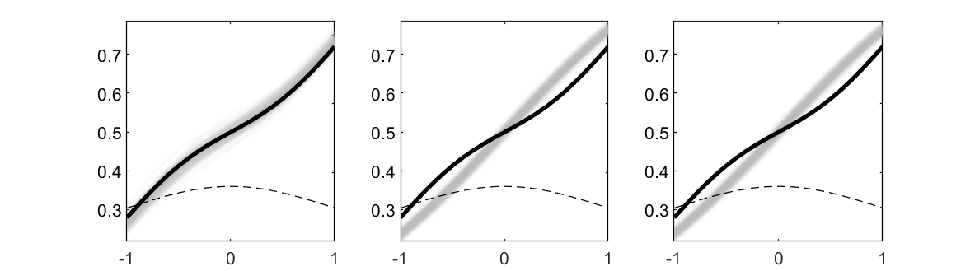}}
               
       \end{tabular}
       \end{center}
       {\footnotesize {\em Notes:} The x-axes are potential values $\ubx_t^{(2)}$. The black solid lines are the true ASF. The gray bands are collections of lines where each line corresponds to the estimated ASF based on one simulation repetition. The thin dashed lines at the bottom of all panels show $f_{X_{t}^{(2)}}\left(\ubx_t^{(2)}\right)$.
        }\setlength{\baselineskip}{4mm}                                     
\end{figure}

\begin{figure}[p]
       \caption{Bias, Standard Deviation, and RMSE in ASF Estimation - Monte Carlo Case 2}
       \label{fig:sim-gen-asf-stat-2}
       \begin{center}
       \begin{tabular}{cc}
       &\textbf{SP ($V'\gamma_0$, iter.)\hspace{1.3in}RE\hspace{1.6in}CRE}\vspace{1em}\\
       \rotatebox{90}{\hspace{.6in}\textbf{DGP 11}}
       &\adjustbox{trim={0\width} {0\height} {0\width} {0\height},clip}
       {\includegraphics[width=.85\textwidth]{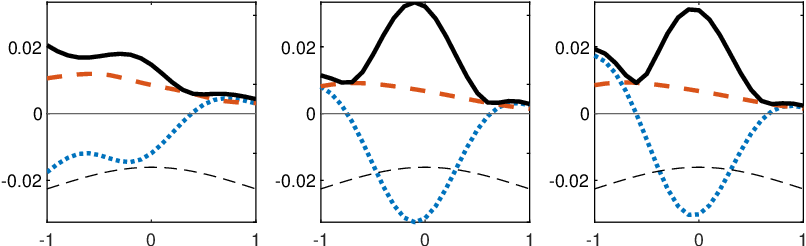}}\vspace{1em}\\
       \rotatebox{90}{\hspace{.6in} \textbf{DGP 12}}
       &\adjustbox{trim={0\width} {0\height} {0\width} {0\height},clip}
       {\includegraphics[width=.85\textwidth]{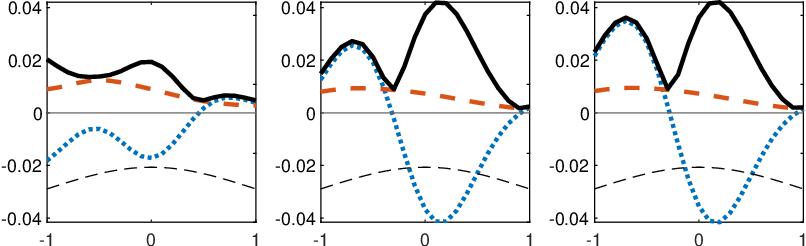}}\vspace{1em}\\
       \rotatebox{90}{\hspace{.6in} \textbf{DGP 21}}
       &\adjustbox{trim={0\width} {0\height} {0\width} {0\height},clip}
       {\includegraphics[width=.85\textwidth]{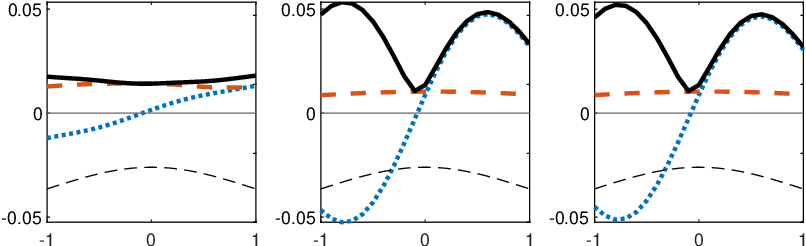}}\vspace{1em}\\
       \rotatebox{90}{\hspace{.6in} \textbf{DGP 22}}
       &\adjustbox{trim={0\width} {0\height} {0\width} {0\height},clip}
       {\includegraphics[width=.85\textwidth]{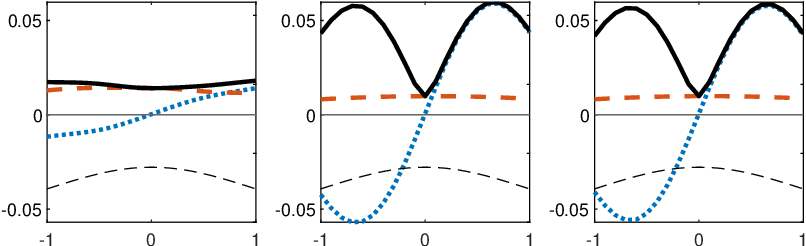}}\vspace{1em}
       \end{tabular}
       \end{center}
       {\footnotesize {\em Notes:} The x-axes are potential values $\ubx_t^{(2)}$. The black solid / blue dotted / red dashed lines represent the RMSEs / biases / standard deviations of the ASF estimates. The thin dashed lines at the bottom of all panels show $f_{X_{t}^{(2)}}\left(\ubx_t^{(2)}\right)$. 
        }\setlength{\baselineskip}{4mm}   
                                          
\end{figure}

\begin{table}[t]
       \caption{APE Estimation - Monte Carlo Case 2}
       \label{tbl:sim-gen-ape-2}
       \begin{center}
   \begin{tabular}{llrrrrrr} \\  \hline \hline
& &      \textbar Bias\textbar & SD & RMSE & Min & Med. & Max \\ \hline
\multirow{5}{*}{DGP 11}
& SP ($V'\gamma_0$, iter.) &   0.023 &   0.019 &   0.027 &     3.5\% &     8.5\% &    23.6\%  \\ 
& SP ($V'\gamma_0$)        &   0.021 &   0.023 &   \textbf{0.026} &     4.1\% &     7.8\% &    23.3\%  \\ 
& SP             &   0.024 &   0.019 &   0.028 &     3.1\% &     8.4\% &    30.3\%  \\ 
& RE             &   0.039 &   0.008 &   0.040 &     2.0\% &     9.4\% &    31.3\%  \\ 
& CRE            &   0.041 &   0.008 &   0.042 &     2.8\% &    10.6\% &    31.1\%  \\ 
&&&&&&&\\ 
\multirow{5}{*}{DGP 12}
& SP ($V'\gamma_0$, iter.) &   0.032 &   0.019 &   0.036 &     3.4\% &    11.6\% &    66.5\%  \\ 
& SP ($V'\gamma_0$)        &   0.024 &   0.024 &   \textbf{0.030} &     3.9\% &     8.7\% &    59.2\%  \\ 
& SP             &   0.033 &   0.019 &   0.037 &     2.8\% &    10.2\% &    78.9\%  \\ 
& RE             &   0.064 &   0.009 &   0.065 &     2.0\% &    17.3\% &    98.6\%  \\ 
& CRE            &   0.069 &   0.009 &   0.070 &     1.7\% &    19.2\% &   100.6\%  \\ 
&&&&&&&\\ 
\multirow{5}{*}{DGP 21}
& SP ($V'\gamma_0$, iter.) &   0.018 &   0.019 &   0.022 &     7.0\% &     9.0\% &    15.8\%  \\ 
& SP ($V'\gamma_0$)        &   0.017 &   0.020 &   \textbf{0.021} &     7.6\% &     9.0\% &    13.5\%  \\ 
& SP             &   0.018 &   0.018 &   0.022 &     6.9\% &     7.8\% &    17.1\%  \\ 
& RE             &   0.064 &   0.005 &   0.065 &     1.7\% &    20.9\% &    69.7\%  \\ 
& CRE            &   0.063 &   0.005 &   0.064 &     2.0\% &    20.2\% &    68.6\%  \\ 
&&&&&&&\\ 
\multirow{5}{*}{DGP 22}
& SP ($V'\gamma_0$, iter.) &   0.019 &   0.020 &   \textbf{0.023} &     6.8\% &     9.4\% &    18.4\%  \\ 
& SP ($V'\gamma_0$)        &   0.019 &   0.023 &   0.024 &     7.9\% &    11.1\% &    16.4\%  \\ 
& SP             &   0.019 &   0.019 &   \textbf{0.023} &     7.3\% &     8.7\% &    20.9\%  \\ 
& RE             &   0.078 &   0.004 &   0.078 &     4.5\% &    26.3\% &    96.1\%  \\ 
& CRE            &   0.077 &   0.004 &   0.077 &     5.0\% &    26.6\% &    94.8\%  \\

\hline
   \end{tabular}       
\end{center}
       {\footnotesize {\em Notes:} \textbar Bias\textbar\ indicates the absolute value of the bias. The reported \textbar Bias\textbar, SD, and RMSE are weighted averages across the collection of evaluation points $\ubx_t$, where the weights are proportional to $f_{X_{t}}(\ubx_t)$. The bold entries indicate the best estimator (i.e., with the smallest RMSE) for each DGP. The last three columns are the minimum/median/maximum of $\text{RMSE}(\ubx_t)/\text{APE}(\ubx_t)\times100\%$ over $\ubx_t$.}\setlength{\baselineskip}{4mm}
\end{table}

\begin{table}[p]
       \caption{Estimation of Common Parameter and ASF - Monte Carlo Case 2}
       \label{tbl:sim-gen-beta-asf-2}
       \begin{center}
   \begin{tabular}{ll|rrr|rrrrrr} \\  \hline \hline   
   &  &\multicolumn{3}{c|}{$\widehat\beta^{(2)}$} &\multicolumn{6}{c}{ASF}\\
& &      Bias & SD & RMSE &\textbar Bias\textbar & SD & RMSE & Min & Med. & Max \\ \hline
\multirow{5}{*}{DGP 11}
& SP ($V'\gamma_0$, iter.) &   0.019 &   0.050 &   0.053 &   0.011 &   0.008 &   \textbf{0.012} &     0.4\% &     1.9\% &    10.3\%  \\ 
& SP ($V'\gamma_0$)        &   0.006 &   0.068 &   0.068 &   0.018 &   0.010 &   0.020 &     0.5\% &     1.7\% &    24.7\%  \\ 
& SP             &   0.019 &   0.050 &   0.053 &   0.019 &   0.009 &   0.021 &     0.6\% &     2.4\% &    24.8\%  \\ 
& RE             &  -0.030 &   0.036 &   0.047 &   0.016 &   0.006 &   0.017 &     0.3\% &     2.9\% &     5.8\%  \\ 
& CRE            &   0.013 &   0.038 &   0.040 &   0.015 &   0.006 &   0.016 &     0.2\% &     2.7\% &     9.8\%  \\ 
&&&&&&&&&&\\ 
\multirow{5}{*}{DGP 12}
& SP ($V'\gamma_0$, iter.) &   0.006 &   0.042 &   0.042 &   0.011 &   0.008 &   \textbf{0.013} &     0.5\% &     2.4\% &    11.5\%  \\ 
& SP ($V'\gamma_0$)        &   0.006 &   0.067 &   0.067 &   0.019 &   0.010 &   0.021 &     0.4\% &     2.1\% &    26.9\%  \\ 
& SP             &   0.006 &   0.042 &   0.042 &   0.020 &   0.009 &   0.022 &     0.6\% &     3.1\% &    27.9\%  \\ 
& RE             &  -0.036 &   0.030 &   0.047 &   0.022 &   0.006 &   0.023 &     0.2\% &     3.9\% &     9.6\%  \\ 
& CRE            &   0.005 &   0.032 &   0.032 &   0.024 &   0.006 &   0.025 &     0.2\% &     3.6\% &    13.7\%  \\ 
&&&&&&&&&&\\ 
\multirow{5}{*}{DGP 21}
& SP ($V'\gamma_0$, iter.) &   0.010 &   0.063 &   0.064 &   0.012 &   0.013 &   0.015 &     2.5\% &     2.9\% &     6.2\%  \\ 
& SP ($V'\gamma_0$)        &  -0.057 &   0.120 &   0.132 &   0.011 &   0.013 &   \textbf{0.014} &     2.0\% &     2.9\% &     4.5\%  \\ 
& SP             &   0.010 &   0.063 &   0.064 &   0.014 &   0.015 &   0.018 &     2.9\% &     3.1\% &     7.7\%  \\ 
& RE             &  -0.007 &   0.041 &   0.042 &   0.035 &   0.010 &   0.037 &     2.2\% &     7.3\% &    16.7\%  \\ 
& CRE            &   0.003 &   0.042 &   0.042 &   0.035 &   0.010 &   0.036 &     2.2\% &     7.1\% &    16.2\%  \\ 
&&&&&&&&&&\\ 
\multirow{5}{*}{DGP 22}
& SP ($V'\gamma_0$, iter.) &   0.002 &   0.070 &   0.069 &   0.012 &   0.013 &   0.016 &     2.5\% &     2.8\% &     6.2\%  \\ 
& SP ($V'\gamma_0$)        &  -0.058 &   0.110 &   0.123 &   0.011 &   0.014 &   \textbf{0.015 }&     2.2\% &     2.8\% &     5.3\%  \\ 
& SP             &   0.002 &   0.070 &   0.069 &   0.014 &   0.015 &   0.018 &     3.0\% &     3.1\% &     7.7\%  \\ 
& RE             &  -0.010 &   0.043 &   0.044 &   0.041 &   0.009 &   0.043 &     2.0\% &     8.8\% &    16.4\%  \\ 
& CRE            &   0.000 &   0.044 &   0.044 &   0.041 &   0.009 &   0.042 &     2.0\% &     8.6\% &    16.0\%  \\ 

\hline
   \end{tabular}       
\end{center}
       {\footnotesize {\em Notes:}  For the RE and CRE, we normalize $\widehat\beta$ such that $|\widehat\beta^{(1)}|=1$ to ensure comparability across estimators. \textbar Bias\textbar\ indicates the absolute value of the bias. The reported \textbar Bias\textbar, SD, and RMSE of the ASF are weighted averages across the collection of evaluation points $\ubx_t$, where the weights are proportional to $f_{X_{t}}(\ubx_t)$. The bold entries indicate the best ASF estimator (i.e., with the smallest RMSE) for each DGP. The last three columns are the minimum/median/maximum of $\text{RMSE}(\ubx_t)/\text{ASF}(\ubx_t)\times100\%$ over $\ubx_t$.}\setlength{\baselineskip}{4mm}
\end{table}

\clearpage
\section{Additional Figures and Tables for the Empirical Illustration}\label{sec:appendix-app}

Figure \ref{fig:app-dist-obs} plots the distributions of the covariates, and Table \ref{tbl:app-descriptive-stats} summarizes the corresponding descriptive statistics.

Figure \ref{fig:app-dtime} depicts the estimated coefficients on time dummies which capture time-variation in aggregate participation rates. Point estimates of the time profiles are generally parallel to each other (from top to bottom: the smoothed maximum score, RE, and CRE) and yield higher participation rates after 1983, which coincides with the beginning of the Great Moderation. Most of the time-variation within each estimator and differences across estimators are insignificant at the 10\% level, and standard errors generally increase with time for all three estimators. The smoothed maximum score yields the widest confidence band, as expected.

Figure \ref{fig:app-asf-ape-alt} plots the estimated ASF and APE based on alternative specifications. In the benchmark specification for the results in the main paper, we construct the indices based on the initial value of the covariates \(X_{i1}\) and use our original three-step semiparametric estimator without estimated indices (see Supplemental Appendix \ref{sec:est-ind} for a detailed comparison across different variations of the semiparametric estimator). To assess the sensitivity of our empirical findings to these choices, we conduct robustness checks with various alternative specifications. Specifically, we consider (i) \(V_i\) constructed from \(X_{i1}\) or \(\overline X_i=\frac 1 T \sum_t X_{it}\), and (ii) with or without estimated indices ($V'\gamma_0$). Comparing with the benchmark specification in Figure \ref{fig:app-asf-ape}, we see that, in general, the estimates do not change much as we vary the timing of \(V\) or incorporate estimated indices. For robustness checks with alternative coarsening schemes and the local logit estimator, see the previous version of this paper \citep{liu2021identification}.

\begin{figure}[t]
        \caption{Distribution of Observables - Female Labor Force Participation}
        \label{fig:app-dist-obs}
        \begin{center}
        \includegraphics[width=.8\textwidth]{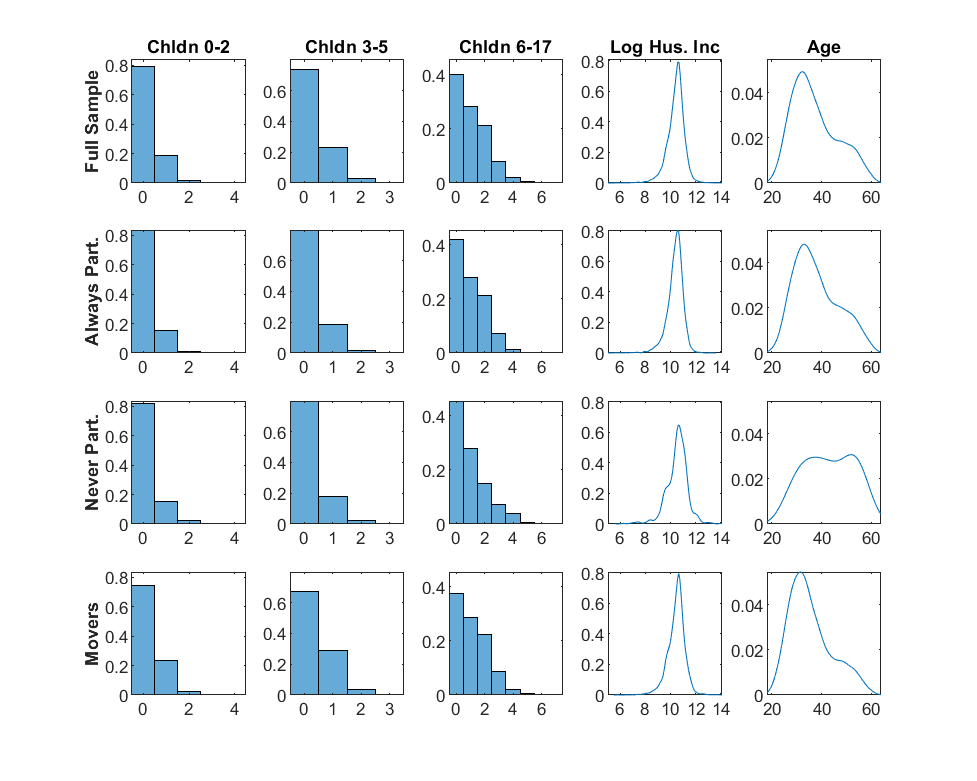}
        \end{center}\vspace{-2em}
        {\footnotesize {\em Notes:} The sample consists of  $N=1461$ married women observed for $T=9$ years from the PSID between 1980--1988. See \cite{Fernandez-Val2009} for details.}\setlength{\baselineskip}{4mm}                                     
\end{figure}

\begin{table}[t]
        \caption{Descriptive Statistics - Female Labor Force Participation}
        \label{tbl:app-descriptive-stats}
        \begin{center}
    \begin{tabular}{lrrrrrrr} \\  \hline \hline
 &       25\%   &      Med.    &      75\%    &  Mean  &     SD        &       Skew.  & Kurt.\\ \hline
\multicolumn{8}{l}{\textit{(a) Full Sample, \#obs = $N\times T$ = 13,149}}\\                                                                                                                             
Participate     &       -       &       -       &       -       &       0.72    &       0.45    &       -       &       -       \\      
Children 0--2        &       0       &       0       &       0       &       0.23    &       0.47    &       1.99    &       6.79    \\      
Children 3--5        &       0       &       0       &       1       &       0.29    &       0.51    &       1.60    &       4.85    \\      
Children 6--17       &       0       &       1       &       2       &       1.05    &       1.10    &       0.91    &       3.46    \\      
Log Husband's Income      &       10.09   &       10.51   &       10.83   &       10.43   &       0.69    &       -0.89   &       7.27    \\      
Age     &       30.00   &       35.00   &       43.00   &       37.30   &       9.22    &       0.56    &       2.50    \\      \\
\multicolumn{8}{l}{\textit{(b) Always Participate, \%obs = 46.27\%}}\\                                                                                                                          
Children 0--2        &       0       &       0       &       0       &       0.18    &       0.41    &       2.25    &       7.56    \\      
Children 3--5        &       0       &       0       &       0       &       0.23    &       0.46    &       1.93    &       6.12    \\      
Children 6--17       &       0       &       1       &       2       &       1.00    &       1.06    &       0.91    &       3.47    \\      
Log Husband's Income      &       10.08   &       10.47   &       10.77   &       10.37   &       0.65    &       -1.36   &       8.89    \\      
Age     &       31.00   &       36.00   &       44.00   &       37.98   &       9.04    &       0.51    &       2.45    \\      \\
\multicolumn{8}{l}{\textit{(c) Never Participate, \%obs = 8.28\%}}\\                                                                                                                            
Children 0--2        &       0       &       0       &       0       &       0.21    &       0.47    &       2.35    &       8.50    \\      
Children 3--5        &       0       &       0       &       0       &       0.23    &       0.48    &       2.05    &       6.79    \\      
Children 6--17       &       0       &       1       &       2       &       0.99    &       1.19    &       1.30    &       4.54    \\      
Log Husband's Income      &       10.13   &       10.62   &       11.04   &       10.53   &       0.85    &       -0.74   &       6.52    \\      
Age     &       35.00   &       43.00   &       52.00   &       42.98   &       10.09   &       -0.06   &       1.90    \\      \\
\multicolumn{8}{l}{\textit{(d) Movers, \%obs = 45.45\%}}\\                                                                                                                               
Participate     &       -       &       -       &       -       &       0.57    &       0.49    &       -       &       -       \\      
Children 0--2        &       0       &       0       &       1       &       0.28    &       0.51    &       1.70    &       5.74    \\      
Children 3--5        &       0       &       0       &       1       &       0.36    &       0.56    &       1.27    &       3.82    \\      
Children 6--17       &       0       &       1       &       2       &       1.11    &       1.11    &       0.83    &       3.18    \\      
Log Husband's Income      &       10.11   &       10.55   &       10.87   &       10.47   &       0.69    &       -0.59   &       5.81    \\      
Age     &       29.00   &       34.00   &       40.00   &       35.57   &       8.71    &       0.73    &       2.88    \\      

\hline
    \end{tabular}       
\end{center}
        {\footnotesize {\em Notes:} The sample consists of  $N=1461$ married women observed for $T=9$ years from the PSID between 1980--1988. ``Movers'' refers to women who have participated in the labor market in some years but not all years. See \cite{Fernandez-Val2009} for details.}\setlength{\baselineskip}{4mm}
\end{table}

\begin{figure}[t]
        \caption{Estimated Coefficients on Time Dummies - Female Labor Force Participation}
        \label{fig:app-dtime}
        \begin{center}
        \includegraphics[width=.8\textwidth]{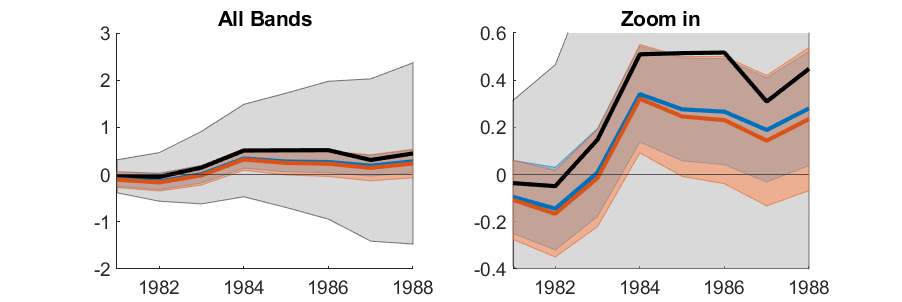}
        \end{center}
        {\footnotesize {\em Notes:} The black/blue/orange solid lines represent point estimates of the coefficients on time dummies using the smoothed maximum score/RE/CRE. The bands with corresponding colors indicate the 90\% symmetric percentile-$t$ confidence intervals based on bootstrap standard deviations. The right panel further zooms in on y-axis values between $-0.4$ and 0.6.}\setlength{\baselineskip}{4mm}                                     
\end{figure}

\begin{figure}[t]
        \caption{Estimated ASF and APE - Female Labor Force Participation, Alternative Specifications}
        \label{fig:app-asf-ape-alt}
        \begin{center}
        \begin{tabular}{cc}
        \rotatebox{90}{\hspace{.7in}\textbf{$V=v(\overline X)$, SP}}
        &\adjustbox{trim={.05\width} {0\height} {.05\width} {0\height},clip}
        {\includegraphics[width=.8\textwidth]{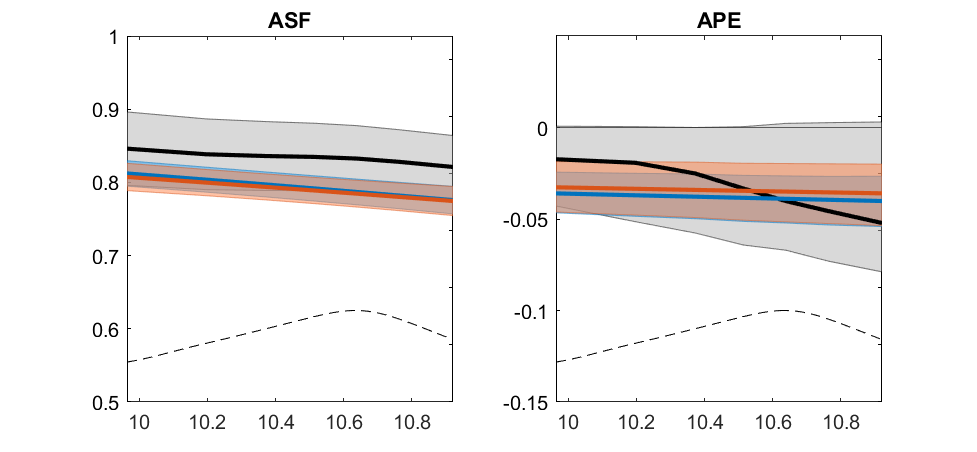}}\\
        \rotatebox{90}{\hspace{.2in} \textbf{$V=v(X_{1})$, SP ($V'\gamma_0$, iter.)}}
        &\adjustbox{trim={.05\width} {0\height} {.05\width} {0.07\height},clip}
        {\includegraphics[width=.8\textwidth]{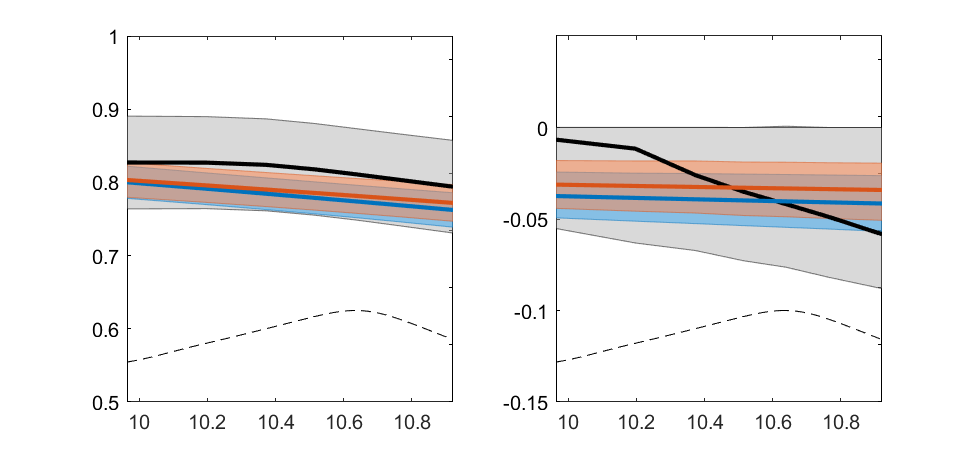}}\\
        \rotatebox{90}{\hspace{.2in} \textbf{$V=v(\overline X)$, SP ($V'\gamma_0$, iter.)}}
        &\adjustbox{trim={.05\width} {0\height} {0.05\width} {0.07\height},clip}
        {\includegraphics[width=.8\textwidth]{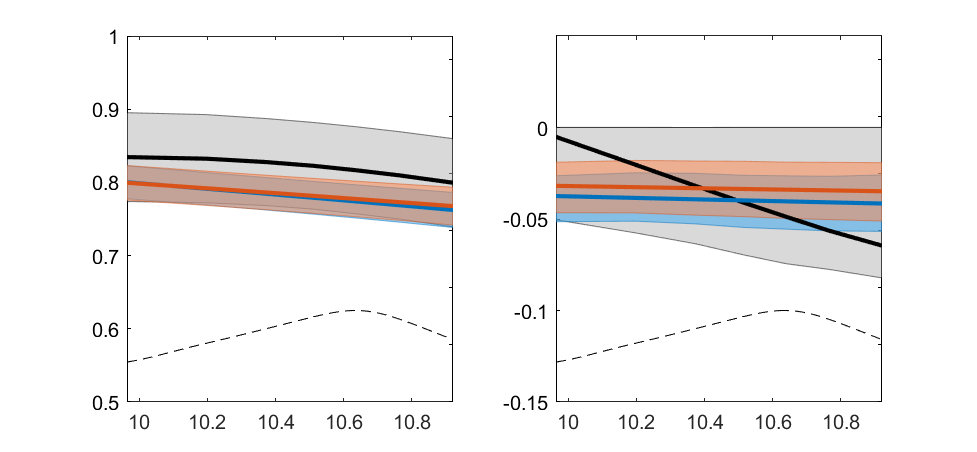}}
        \end{tabular}
        \end{center}
        {\footnotesize {\em Notes:} The x-axes are potential values of log husband's income. The blue/orange solid lines represent point estimates of the ASF and APE using the RE/CRE. The bands with corresponding colors indicate the 90\% bootstrap confidence intervals. The thin dashed lines at the bottom of all panels show the distribution of log husband's income.  
         }\setlength{\baselineskip}{4mm}                                     
\end{figure}

\clearpage
\section{Proofs}\label{supp-app-sec:proofs}
\subsection{Proofs for Appendix \ref{sec:estimation2}}\label{app_sec:est-proofs}

We now present a sequence of lemmas that are used to prove our two main theorems of Appendix \ref{sec:estimation2}: Theorem \ref{thm:ASF} and Theorem \ref{thm:APE}. When applied to matrices, let $\|\cdot\|$ denote the spectral norm.

\begin{lemma}[Convergence of $S_N$] \label{lemma:lem1}
Suppose B\ref{assn:iid}--B\ref{assn:ASFrates} hold. Then,
\begin{align*}
       \sup_{z \in \mathcal{Z}_t} \left\|S_N(z;\widehat{\beta}) - S_N(z;\beta_0)\right\| &= o_p\left(\frac{1}{\sqrt{Nb_N}}\right).
\end{align*}
\end{lemma}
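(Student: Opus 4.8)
The plan is to reduce the whole difference to a smooth, compactly supported function of the single scalar coordinate in which $\beta$ enters. Writing $z=(u,v)$ and $g(a) = \xi(a)\xi(a)'\mathcal{K}(a)$, we have $S_N(z;\beta) = \frac{1}{N}\sum_{j=1}^N b_N^{-(1+d_V)} g\!\left(\frac{Z_{jt}(\beta)-z}{b_N}\right)$, and only the first coordinate of $Z_{jt}(\beta)=(X_{jt}'\beta,V_j)$ depends on $\beta$: replacing $\beta_0$ by $\beta$ shifts that coordinate's argument by $X_{jt}'(\beta-\beta_0)/b_N$. Since $\mathcal{K}$ is twice continuously differentiable with compact support and Lipschitz derivatives (B\ref{assn:kernel}) and the entries of $\xi(a)\xi(a)'$ are polynomials, $g$ is continuously differentiable with $\frac{\partial g}{\partial a_1}$ bounded and compactly supported. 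By B\ref{assn:betahat}, $\|\widehat\beta-\beta_0\|=O_p(a_N^{-1})=O_p(N^{-\epsilon})$, so with probability approaching one $\widehat\beta$ lies in $\mathcal{B}_{\eta_N}=\{\beta:\|\beta-\beta_0\|\le CN^{-\epsilon}\}$, which is eventually contained in $\mathcal{B}_\varepsilon$. It therefore suffices to bound $\sup_{z\in\mathcal{Z}_t}\sup_{\beta\in\mathcal{B}_{\eta_N}}\|S_N(z;\beta)-S_N(z;\beta_0)\|$.

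I would split this into a deterministic (mean) part and a centered (stochastic) part via $\bar m(z,\beta)=\E[S_N(z;\beta)]=\int g(a)\, f_{Z_t(\beta)}(z+b_N a)\,da$, obtained from the change of variables $\zeta=z+b_N a$. For the mean part, differentiating under the integral and using that $\|\frac{\partial}{\partial\beta}f_{Z_t(\beta)}\|$ is uniformly bounded on $\mathcal{Z}_t\times\mathcal{B}_\varepsilon$ (B\ref{assn:smoothness}(ii)) together with $\int\|g(a)\|\,da<\infty$ gives $\sup_z\|\frac{\partial}{\partial\beta}\bar m(z,\beta)\|=O(1)$, hence $\sup_z\|\bar m(z,\widehat\beta)-\bar m(z,\beta_0)\|=O_p(N^{-\epsilon})$. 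This is $o_p((Nb_N)^{-1/2})$ precisely because B\ref{assn:ASFrates} imposes $\delta>1-2\epsilon$, which is equivalent to $N^{-\epsilon}=o(N^{-1/2}b_N^{-1/2})$.

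For the centered part $\bar\Psi_N(z,\beta)=S_N(z;\beta)-\bar m(z,\beta)$, I would establish stochastic equicontinuity, namely $\sup_z\sup_{\beta\in\mathcal{B}_{\eta_N}}\|\bar\Psi_N(z,\beta)-\bar\Psi_N(z,\beta_0)\|=o_p((Nb_N)^{-1/2})$. A first-order Taylor expansion of $g$ in its first argument writes the per-observation increment as $b_N^{-(1+d_V)}\frac{\partial g}{\partial a_1}(\tilde a_j)\,X_{jt}'(\beta-\beta_0)/b_N$; since $\frac{\partial g}{\partial a_1}$ is compactly supported, the event that $\tilde a_j$ falls in its support has probability $O(b_N^{1+d_V})$, so using $\E\|X_t\|^2<\infty$ (B\ref{assn:moment}) the second moment of the increment is $O(b_N^{-(3+d_V)}\|\beta-\beta_0\|^2)$. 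Averaging over the $N$ iid terms, the centered process then has standard deviation of order $N^{-1/2}b_N^{-(3+d_V)/2}\|\beta-\beta_0\|$, and plugging in $\|\beta-\beta_0\|\le CN^{-\epsilon}$ yields order $N^{-1/2-\epsilon}b_N^{-(3+d_V)/2}$. This is $o((Nb_N)^{-1/2})$ iff $\delta<2\epsilon/(2+d_V)$, which is implied by the stronger constraint $\delta<2\epsilon/(3+2d_V)$ in B\ref{assn:ASFrates}; the gap between these exponents is what absorbs the logarithmic/polynomial inflation incurred in passing to the supremum.

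The main obstacle is exactly this last step: upgrading the pointwise variance bound to a uniform bound over the compact set $\mathcal{Z}_t$ (and over $\beta\in\mathcal{B}_{\eta_N}$). I would handle it by a covering/chaining argument on a grid of mesh equal to a small power of $b_N$, controlling oscillation between grid points through the Lipschitz continuity of the summand in $z$ and $\beta$ (which follows from the Lipschitz derivatives assumed in B\ref{assn:kernel}) and applying an exponential, Bernstein-type inequality at each grid point. These are precisely the tools developed in \cite{Masry1996} and \citet*{KongLintonXia2010}, and the slack built into the bandwidth conditions B\ref{assn:ASFrates} is what ensures the extra factors from the maximal inequality do not overturn the $o_p((Nb_N)^{-1/2})$ conclusion.
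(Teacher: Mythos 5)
Your decomposition is essentially the paper's: you reduce to a generic entry, observe that $\beta$ enters only through the scalar coordinate $X_{jt}'\beta$, Taylor-expand the kernel factor in that coordinate, and split the resulting increment into a mean part (controlled by the uniform bound on $\|\partial f_{Z_t(\beta)}/\partial\beta\|$ from B4(ii) together with $\delta>1-2\epsilon$) and a centered part. The mean part of your argument matches the paper's term (B.2) exactly. Where you diverge is the device for making the centered part uniform over $(z,\beta)$: the paper shows that the class of summands $\{\gamma((X_t'\beta-u)/b)\,((V-v)/b)^{\tau_{-1}+\tau_{-1}'}\mathcal{K}^V((V-v)/b)\,X_t\}$ is Euclidean (Nolan--Pollard Lemma 22, Pakes--Pollard Lemma 2.14(ii)) and hence Donsker, so the centered empirical process is uniformly $O_p(N^{-1/2})$; after dividing by $b_N^{2+d_V}$ and multiplying by $\|\widehat\beta-\beta_0\|=O_p(a_N^{-1})$ this gives $O_p(N^{-1/2}b_N^{-(2+d_V)}a_N^{-1})$, which is $o_p((Nb_N)^{-1/2})$ precisely under $\delta<2\epsilon/(3+2d_V)$. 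You instead propose a sharper pointwise variance bound (exploiting the $O(b_N^{1+d_V})$ probability that the localized derivative is active) followed by chaining with a Bernstein inequality. That buys a weaker sufficient rate condition $\delta<2\epsilon/(2+d_V)$, but the paper's cruder Donsker bound is already exactly what B6 is calibrated to, so the extra sharpness is not needed.

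Two loose ends in your version deserve flagging. First, your localized second-moment bound $O(b_N^{-(3+d_V)}\|\beta-\beta_0\|^2)$ implicitly uses $\E[\|X_t\|^2\,\1(\tilde a_j\in\mathrm{supp})]=O(b_N^{1+d_V})$, which does not follow from $\E\|X_t\|^2<\infty$ (B5) alone; it would require boundedness of $\E[\|X_t\|^2\mid Z_t(\beta)=z]$ or a higher moment. This is harmless here because dropping the localization gain still gives the paper's bound $N^{-1/2}b_N^{-(2+d_V)}a_N^{-1}$, which suffices under B6 --- but you should either add the conditional-moment hypothesis or retreat to the cruder bound. Second, a Bernstein-type inequality at each grid point requires bounded summands, and your increments carry the unbounded factor $X_{jt}$; you would need a truncation step (with the tail controlled via B5) before applying the exponential inequality. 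The paper's Euclidean-class route avoids both issues by working with a square-integrable envelope directly, which is why it is the cleaner path under the stated assumptions.
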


\begin{proof}[Proof of Lemma \ref{lemma:lem1}]
Select the same generic entry from matrices $S_N(z; \widehat\beta)$ and $S_N(z; \beta_0)$. These entries can respectively be written as
\begin{align*}
       S_N^{\tau,\tau'}(z;\widehat{\beta}) &\equiv \avgj \left(\frac{Z_{jt}(\widehat{\beta}) - z}{b_N}\right)^{\tau}  \left(\frac{Z_{jt}(\widehat{\beta}) - z}{b_N}\right)^{\tau'} \mathcal{K}_{b_N} \left(\frac{Z_{jt}(\widehat{\beta}) - z}{b_N}\right) 
\end{align*}
and
\begin{align*}
       S_N^{\tau,\tau'}(z;\beta_0) &\equiv \avgj \left(\frac{Z_{jt}(\beta_0) - z}{b_N}\right)^{\tau}  \left(\frac{Z_{jt}(\beta_0) - z}{b_N}\right)^{\tau'}\mathcal{K}_{b_N}\left(\frac{Z_{jt}(\beta_0) - z}{b_N}\right),
\end{align*}
where $\tau$, $\tau'$ are vectors of exponents which satisfy $0 \leq |\tau|, |\tau'| \leq \ell$. Let $\tau_1$ and $\tau_1'$ denote the first components of $\tau$ and $\tau'$, and let $\tau_{-1}$ and $\tau_{-1}'$ denote vectors with all other components of $\tau$ and $\tau'$. We can write
\begin{align*}
       &S_N^{\tau,\tau'}(z;\widehat{\beta}) - S_N^{\tau,\tau'}(z;\beta_0)\\
   &= \avgj   \left[\left(\frac{X_{jt}'\widehat{\beta}  -u}{b_N}\right)^{\tau_1 + \tau_1'} \frac{1}{b_N} K\left(\frac{X_{jt}'\widehat\beta - u}{b_N}\right) - \left(\frac{X_{jt}'\beta_0  -u}{b_N}\right)^{\tau_1 + \tau_1'} \frac{1}{b_N} K\left(\frac{X_{jt}'\beta_0 - u}{b_N}\right)\right] \\
   &\cdot\left(\frac{V_j - v}{b_N}\right)^{\tau_{-1} + \tau_{-1}'} \mathcal{K}^V_{b_N}\left(\frac{V_j - v}{b_N}\right)\\
   &= \avgj   \left[\frac{1}{b_N}\Gamma\left(\frac{X_{jt}'\widehat{\beta}  -u}{b_N}\right) - \frac{1}{b_N}\Gamma\left(\frac{X_{jt}'\beta_0  -u}{b_N}\right)\right] \left(\frac{V_j - v}{b_N}\right)^{\tau_{-1} + \tau_{-1}'} \mathcal{K}^V_{b_N}\left(\frac{V_j - v}{b_N}\right)
\end{align*}
where $\mathcal{K}^V_{b_N}(v) = b_N^{-d_V} \cdot \prod_{k=1}^{d_V} K(v_k)$, and $\Gamma(u) \equiv u^{\tau_1 + \tau_1'} K(u)$ for generic $u \in \R$. 

By B\ref{assn:kernel}, $\Gamma$ is continuously differentiable. A first-order Taylor expansion yields
\begin{align*}
       S_N^{\tau,\tau'}(z;\widehat{\beta}) - S_N^{\tau,\tau'}(z;\beta_0)
       &= \avgj \frac{1}{b_N^2} \gamma\left(\frac{X_{jt}'\widetilde{\beta} - u}{b_N}\right)  \left(\frac{V_j - v}{b_N}\right)^{\tau_{-1} + \tau_{-1}'}  \mathcal{K}^V_{b_N}\left(\frac{V_j - v}{b_N}\right)X_{jt}'(\widehat{\beta} - \beta_0)
\end{align*}
where $\widetilde\beta$ is such that $X_{jt}'\widetilde\beta$ is between $X_{jt}'\widehat\beta$ and $X_{jt}'\beta_0$, and where $\gamma(u) \equiv \Gamma'(u) = (\tau_1 + \tau_1')u^{\tau_1 + \tau'_1 - 1}K(u) + u^{\tau_1 + \tau_1'} K'(u)$.

Since $\prob(\widehat\beta \in \mathcal{B}_\varepsilon) \to 1$ as $N \to \infty$, with probability arbitrarily close to 1, we have that
\begin{align}
       &\sup_{z \in \mathcal{Z}_t} \left|S_N^{\tau,\tau'}(z;\widehat{\beta}) - S_N^{\tau,\tau'}(z;\beta_0)\right|\notag\\
       &\leq \frac{1}{b_N^2}\sup_{z \in \mathcal{Z}_t}\left\|\avgj   \gamma\left(\frac{X_{jt}'\widetilde{\beta}  -u}{b_N}\right) \left(\frac{V_j - v}{b_N}\right)^{\tau_{-1} + \tau_{-1}'} \mathcal{K}^V_{b_N}\left(\frac{V_j - v}{b_N}\right) X_{jt} \right. \notag\\
       &\left.- \E\left[   \gamma\left(\frac{X_{t}'\widetilde{\beta}  -u}{b_N}\right) \left(\frac{V - v}{b_N}\right)^{\tau_{-1} + \tau_{-1}'} \mathcal{K}^V_{b_N}\left(\frac{V - v}{b_N}\right)X_{t}\right]\right\|\|\widehat{\beta}-\beta_0\| \notag\\
       &+ \sup_{z \in \mathcal{Z}_t} \frac{1}{b_N^2}\left\| \E\left[ \mathcal{K}^V_{b_N}\left(\frac{V - v}{b_N}\right) \left(\frac{V - v}{b_N}\right)^{\tau_{-1} + \tau_{-1}'} \gamma\left(\frac{X_{t}'\widetilde{\beta}  -u}{b_N}\right)X_{t}\right]\right\|\|\widehat{\beta}-\beta_0\| \notag \\
       &\leq \frac{1}{b_N^2}\sup_{z \in \mathcal{Z}_t, \beta \in \mathcal{B}_\varepsilon,b \in (0,\bar{b}]}\left\|\avgj   \gamma\left(\frac{X_{jt}'\beta  -u}{b}\right) \left(\frac{V_j - v}{b}\right)^{\tau_{-1} + \tau_{-1}'} \mathcal{K}^V_{b}\left(\frac{V_j - v}{b}\right) X_{jt} \right. \notag\\
       &\left.- \E\left[  \gamma\left(\frac{X_{t}'\beta  - u}{b}\right)  \left(\frac{V - v}{b}\right)^{\tau_{-1} + \tau_{-1}'} \mathcal{K}^V_{b}\left(\frac{V - v}{b}\right) X_{t}\right]\right\|\|\widehat{\beta}-\beta_0\| \label{eq:lemma1:1}\\
       &+ \sup_{z \in \mathcal{Z}_t,\beta \in \mathcal{B}_\varepsilon} \frac{1}{b_N^2} \left\| \E\left[   \gamma\left(\frac{X_{t}'\beta  -u}{b_N}\right) \left(\frac{V - v}{b_N}\right)^{\tau_{-1} + \tau_{-1}'} \mathcal{K}^V_{b_N}\left(\frac{V - v}{b_N}\right) X_{t}\right]\right\|\|\widehat{\beta}-\beta_0\|, \label{eq:lemma1:2}
\end{align}
where $\bar{b} > 0$.
To obtain the stochastic order of term \eqref{eq:lemma1:1}, define the class of functions
\begin{align*}
       \widetilde{\mathcal{F}} &= \left\{ \gamma\left(\frac{X_t'\beta - u}{b}\right): u \in \R, \beta \in \mathcal{B}_\varepsilon, b \in (0,\bar{b}]\right\}.
\end{align*}
These functions are of the form $\gamma(X_t'c + d)$ where $c = \beta/b$ and $d = - u/b$. Since $K$ has a bounded domain and is twice continuously differentiable with bounded derivatives (Assumption B\ref{assn:kernel}), the function $\gamma(u)$ is of bounded variation on $\R$. By \cite{NolanPollard1987} Lemma 22.(ii), the above class of functions is Euclidean. It is also bounded since $K$ is bounded. Similarly, the classes
\begin{align*}
       \mathcal{F}_{V_k} &= \left\{\left(\frac{V_k - v_k}{b}\right)^{\tau_{k+1} + \tau_{k+1}'} K\left(\frac{V_k - v_k}{b}\right): v_k \in \R, b \in (0,\bar{b}]\right\}
\end{align*}
are Euclidean and bounded for $k = 1, \ldots, d_V$ by the same argument as above. Here $\tau_{k+1}$ and $\tau'_{k+1}$ denote the $(k+1)$th components of $\tau$ and $\tau'$. The product of bounded Euclidean classes is also bounded and Euclidean, hence
\begin{align*}
       \mathcal{F}_{V} &= \left\{ \gamma\left(\frac{X_t'\beta  -u}{b}\right) \left(\frac{V - v}{b}\right)^{\tau_{-1} + \tau_{-1}'} \mathcal{K}^V\left(\frac{V - v}{b}\right): z \in \mathcal{Z}_t, \beta \in \mathcal{B}_\varepsilon, b \in (0,\bar{b}]\right\}
\end{align*}
is bounded and Euclidean. By B\ref{assn:moment}, $\E[\|X_t\|^2]<\infty$. Hence, by Lemma 2.14 (ii) in \cite{PakesPollard1989}, the class
\begin{align*}
       \mathcal{F} &= \left\{\gamma\left(\frac{X_t'\beta - u}{b}\right) \left(\frac{V - v}{b}\right)^{\tau_{-1} + \tau_{-1}'} \mathcal{K}^V\left(\frac{V - v}{b}\right) X_t: z \in \mathcal{Z}_t, \beta \in \mathcal{B}_\varepsilon, b \in (0,\bar{b}]\right\}
\end{align*}
is also Euclidean, and hence Donsker. Therefore, by the continuous mapping theorem,
\begin{align*}
       &\frac{1}{\sqrt{N}b_N^{2+d_V}}\sup_{z \in \mathcal{Z}_t, \beta \in \mathcal{B}_\varepsilon, b \in (0,\bar{b}]}\left\|\frac{1}{\sqrt{N}}\sum_{j=1}^N\left\{ \gamma\left(\frac{X_{jt}'\beta  -u}{b}\right) \left(\frac{V_j - v}{b}\right)^{\tau_{-1} + \tau_{-1}'} \mathcal{K}^V\left(\frac{V_j - v}{b}\right)  X_{jt} \right.\right.\\
       &\left.\left.- \E\left[ \gamma\left(\frac{X_{t}'\beta - u}{b}\right) \left(\frac{V - v}{b}\right)^{\tau_{-1} + \tau_{-1}'} \mathcal{K}^V \left(\frac{V - v}{b}\right) X_{t}\right]\right\}\right\|\\
       &= \frac{1}{\sqrt{Nb_N^{4+2d_V}}} \cdot O_p(1)\\
       &= O_p\left( (Nb_N^{4+2d_V})^{-1/2} \right).
\end{align*}

Thus, term \eqref{eq:lemma1:1} can be written as
\begin{align*}
       &\frac{1}{b_N^2}\sup_{z \in \mathcal{Z}_t, \beta \in \mathcal{B}_\varepsilon,b \in (0,\bar{b}]}\left\|\avgj   \gamma\left(\frac{X_{jt}'\beta  -u}{b}\right) \left(\frac{V_j - v}{b}\right)^{\tau_{-1} + \tau_{-1}'} \mathcal{K}^V_{b}\left(\frac{V_j - v}{b}\right) X_{jt} \right. \notag\\
       &\left.- \E\left[  \gamma\left(\frac{X_{t}'\beta  - u}{b}\right)  \left(\frac{V - v}{b}\right)^{\tau_{-1} + \tau_{-1}'} \mathcal{K}^V_{b}\left(\frac{V - v}{b}\right) X_{t}\right]\right\|\|\widehat{\beta}-\beta_0\|\\
       &= O_p\left( (Nb_N^{4+2d_V})^{-1/2} \right) \cdot O_p(a_N^{-1})\\
       &= o_p\left( (Nb_N)^{-1/2}\right),
\end{align*}
where the last line follows from $a_N^2 b_N^{3+2d_V} \to \infty$ as $N\to\infty$ (Assumption B\ref{assn:ASFrates}).

To bound term \eqref{eq:lemma1:2}, we first note that
\begin{align*}
       &\frac{1}{b_N^2} \left\| \E\left[ \gamma\left(\frac{X_{t}'\beta  -u}{b_N}\right) \left(\frac{V - v}{b_N}\right)^{\tau_{-1} + \tau_{-1}'} \mathcal{K}^V_{b_N}\left(\frac{V - v}{b_N}\right) X_{t}\right]\right\|\\
       &=  \left\| \E\left[\frac{\partial}{\partial\beta} \left(\frac{Z_t(\beta) - z}{b_N}\right)^{\tau + \tau'} \mathcal{K}_{b_N}\left(\frac{Z_t(\beta) - z}{b_N}\right)\right]\right\|\\
       &=  \left\| \int \frac{\partial}{\partial\beta} \left(\frac{\widetilde{z} - z}{b_N}\right)^{\tau + \tau'} \mathcal{K}_{b_N}\left(\frac{\widetilde{z} - z}{b_N}\right)f_{Z_t(\beta)}(\widetilde{z}) \, d\widetilde{z} \right\|\\
       &= \left\|  \int  a^{\tau + \tau'} \mathcal{K}\left(a\right) \frac{\partial}{\partial\beta} f_{Z_t(\beta)}(z + ab_N) \, da\right\|.
\end{align*}
The last equality follows from the change of variables $\widetilde{z} = z + ab_N$. We then have that
\begin{align*}
       \sup_{z \in \mathcal{Z}_t,\beta \in \mathcal{B}_\varepsilon} \left\|  \int  a^{\tau + \tau'} \mathcal{K}\left(a\right)\frac{\partial}{\partial\beta} f_{Z_t(\beta)}(z + ab_N)\, da\right\|
       &\leq \sup_{z \in \mathcal{Z}_t,\beta \in \mathcal{B}_\varepsilon} \left\|\frac{\partial}{\partial\beta} f_{Z_t(\beta)}(z)\right\| \left|  \int  a^{\tau + \tau'} \mathcal{K}\left(a\right) \, da\right|\\
       &<\infty.
\end{align*}
To see that the last inequality holds, recall Assumption B\ref{assn:smoothness}.(ii), and that $\mathcal{K}$ is a bounded function with compact support, hence $a^{\tau + \tau'}\mathcal{K}(a)$ is bounded with compact support. Therefore, term \eqref{eq:lemma1:2} is of order $O(1) \cdot \|\widehat{\beta} - \beta_0\| = O_p(a_N^{-1}) = o_p\left( (Nb_N)^{-1/2} \right)$ since, by B\ref{assn:ASFrates}, $N b_N a_N^{-2} \to 0$ as $N \to \infty$.

Combining the rates of convergence of terms \eqref{eq:lemma1:1} and \eqref{eq:lemma1:2}, we obtain
\begin{align*}
       \sup_{z \in \mathcal{Z}_t} \left|S_N^{\tau,\tau'}(z;\widehat{\beta}) - S_N^{\tau,\tau'}(z;\beta_0)\right| &= o_p\left(\frac{1}{\sqrt{Nb_N}}\right)
\end{align*}
Since this rate of convergence applies uniformly in $z \in \mathcal{Z}_t$ to a generic element of $S_N^{\tau,\tau'}(z;\widehat{\beta}) - S_N^{\tau,\tau'}(z;\beta_0)$, it also applies uniformly in $z \in \mathcal{Z}_t$ to the matrix norm of $S_N(z;\widehat{\beta}) - S_N(z;\beta_0)$, which concludes the proof.
\end{proof}

Define 
       \[S(z;\beta_0) = \int \xi(a)\xi(a)'\mathcal{K}(a)\, da \cdot f_{Z_t(\beta_0)}(z).\]

\begin{lemma}[Convergence of $S_N$ to $S$] \label{lemma:lem1b}
Suppose B\ref{assn:iid}--B\ref{assn:ASFrates} hold. Then,
\begin{align*}
       \sup_{z \in \mathcal{Z}_t} \left\|S_N(z;\beta_0) - S(z;\beta_0)\right\| &=  O_p\left(\left(\frac{\log(N)}{Nb_N^{1+d_V}}\right)^{1/2}\right) + O(b_N).
\end{align*}
\end{lemma}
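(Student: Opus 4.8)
The plan is to treat $S_N(z;\beta_0)$ as a standard $(1+d_V)$-dimensional kernel estimator and to split its uniform deviation from $S(z;\beta_0)$ into a centered stochastic part and a deterministic bias part,
\[
        S_N(z;\beta_0) - S(z;\beta_0) = \big(S_N(z;\beta_0) - \E[S_N(z;\beta_0)]\big) + \big(\E[S_N(z;\beta_0)] - S(z;\beta_0)\big).
\]
I would work entrywise, controlling a generic entry $S_N^{\tau,\tau'}(z;\beta_0)$ uniformly over $z \in \mathcal{Z}_t$, and then pass to the spectral norm using that $\mathcal{Z}_t$ is compact and the number of entries is fixed, exactly as in the concluding step of the proof of Lemma \ref{lemma:lem1}.

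For the bias term I would compute the mean by the change of variables $a = (\widetilde z - z)/b_N$, which gives
\[
        \E[S_N^{\tau,\tau'}(z;\beta_0)] = \int a^{\tau+\tau'}\,\mathcal{K}(a)\, f_{Z_t(\beta_0)}(z + a b_N)\, da.
\]
A first-order Taylor expansion of $f_{Z_t(\beta_0)}$ around $z$, valid by the smoothness in Assumption B\ref{assn:smoothness}.(iii) (which yields $f_{Z_t(\beta_0)} \in \mathcal{C}_{\ell+2}(\mathcal{Z}_t)$), together with the boundedness and compact support of $\mathcal{K}$ from B\ref{assn:kernel}, delivers $\E[S_N^{\tau,\tau'}(z;\beta_0)] = f_{Z_t(\beta_0)}(z)\int a^{\tau+\tau'}\mathcal{K}(a)\,da + O(b_N)$ uniformly in $z \in \mathcal{Z}_t$. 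The leading term is precisely the corresponding entry of $S(z;\beta_0)$, so the bias contributes $O(b_N)$ uniformly.

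For the stochastic term the key observation is that the class
\[
        \left\{ z \mapsto \left(\tfrac{Z_t(\beta_0) - z}{b}\right)^{\tau + \tau'} \mathcal{K}\!\left(\tfrac{Z_t(\beta_0)-z}{b}\right) : z \in \mathcal{Z}_t,\ b \in (0,\bar b]\right\}
\]
is bounded and Euclidean, by the same Nolan--Pollard (\cite{NolanPollard1987}) and Pakes--Pollard (\cite{PakesPollard1989}) arguments already invoked in the proof of Lemma \ref{lemma:lem1}. Each summand of $S_N^{\tau,\tau'}$ has second moment of order $b_N^{-(1+d_V)}$, since $\E[\mathcal{K}_{b_N}(\cdot)^2] = O(b_N^{-(1+d_V)})$, so the pointwise variance is $O\big((N b_N^{1+d_V})^{-1}\big)$. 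Applying a maximal/exponential inequality for uniformly bounded Euclidean classes—or, equivalently, the uniform kernel-convergence results of \cite{Masry1996}—then yields
\[
        \sup_{z \in \mathcal{Z}_t}\big|S_N^{\tau,\tau'}(z;\beta_0) - \E[S_N^{\tau,\tau'}(z;\beta_0)]\big| = O_p\!\left(\left(\frac{\log N}{N b_N^{1+d_V}}\right)^{1/2}\right),
\]
where the $\sqrt{\log N}$ inflation comes from the supremum over the compact set $\mathcal{Z}_t$ via a covering/chaining argument. Combining the two orders across all entries gives the stated rate.

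I expect the stochastic term to be the main obstacle: one must control the supremum over $z$ with the correct $(N b_N^{1+d_V})^{-1/2}$ variance scaling and the attendant logarithmic factor, rather than merely a pointwise bound. The advantage here is that the Euclidean property of the relevant function class is already established in Lemma \ref{lemma:lem1}, so this step reduces to invoking a standard uniform-in-bandwidth concentration inequality instead of re-deriving entropy bounds; the bias term, by contrast, is routine once the change of variables and Taylor expansion are carried out.
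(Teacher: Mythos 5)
Your proposal is correct, but it takes a more self-contained route than the paper. The paper's entire proof of this lemma is a citation: it observes that the statement is exactly Corollary 1.(ii) of \cite{Masry1996} (with $\theta=1$ in his notation) and then verifies that his conditions 1(b), 2, 3 and the rate conditions of his Theorem 2 follow from Assumptions B\ref{assn:kernel}, B\ref{assn:smoothness} and B\ref{assn:ASFrates}. You instead re-derive the result from scratch: the bias/stochastic decomposition, the change of variables and first-order Taylor expansion giving the $O(b_N)$ bias uniformly over the compact set $\mathcal{Z}_t$, and the $O_p\bigl((\log N/(Nb_N^{1+d_V}))^{1/2}\bigr)$ control of the centered part via the Euclidean-class and second-moment calculation (each summand has variance $O(b_N^{-(1+d_V)})$, hence pointwise variance $O((Nb_N^{1+d_V})^{-1})$, with the $\log N$ inflation from chaining over $\mathcal{Z}_t$). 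All of these steps are sound, and the entrywise-to-spectral-norm passage mirrors what the paper does in Lemma \ref{lemma:lem1}; indeed the paper itself uses essentially your direct argument for the analogous statement about $T_N$ in Lemma \ref{lemma:lem2b}, where it invokes "an argument similar to that used in Corollary 1.(ii) in Masry (1996)." What your approach buys is transparency and independence from Masry's mixing-process machinery (which is overkill for iid data); what the paper's approach buys is brevity, at the cost of asking the reader to check that Masry's conditions are implied by B\ref{assn:kernel}--B\ref{assn:ASFrates}. One small point worth tightening: the Euclidean classes established in Lemma \ref{lemma:lem1} involve the derivative function $\gamma$ and the factor $X_t$, so for the class you need here, built from $\Gamma(u)=u^{\tau_1+\tau_1'}K(u)$ and the $V$-components, you should note explicitly that the same Nolan--Pollard bounded-variation argument applies to $\Gamma$ itself; this is immediate but is a distinct instance rather than literally the same class.
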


\begin{proof}[Proof of Lemma \ref{lemma:lem1b}]
This is Corollary 1.(ii) in \cite{Masry1996} with $\theta = 1$ (in his notation), therefore we verify its assumptions. His condition 1(b) holds by B\ref{assn:smoothness}.(iv). His conditions 2 and 3 hold by B\ref{assn:kernel} and B\ref{assn:smoothness}.(iii). Finally, the rate conditions of Theorem 2 in \cite{Masry1996} hold by B\ref{assn:ASFrates}. Therefore, all assumptions of his corollary hold and the above result holds.
\end{proof}

\begin{lemma}[Convergence of $T_N$] \label{lemma:lem2}
Suppose B\ref{assn:iid}--B\ref{assn:ASFrates} hold. Then,
\begin{align*}
       \sup_{z \in \mathcal{Z}_t} \left\|T_N(z;\widehat{\beta}) - T_N(z;\beta_0)\right\| &= o_p\left(\frac{1}{\sqrt{Nb_N}}\right).
\end{align*}
\end{lemma}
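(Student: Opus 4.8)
The plan is to follow the proof of Lemma~\ref{lemma:lem1} almost verbatim, the only structural difference being that each entry of the vector $T_N(z;\beta)$ carries a single exponent vector $\tau$ with $0\le|\tau|\le\ell$ (rather than a pair $\tau,\tau'$) and an extra multiplicative factor $Y_{jt}$ that does not depend on $\beta$ and is bounded, $Y_{jt}\in\{0,1\}$. Fixing a generic component $T^{\tau}_N(z;\beta)$, I would write $Z_{jt}(\beta)=(X_{jt}'\beta,V_j)$ so that $\beta$ enters only through the first coordinate, isolate the univariate map $\Gamma(u)=u^{\tau_1}K(u)$, and apply a first-order Taylor expansion in $\beta$ around $\beta_0$ at an intermediate value $\widetilde\beta$. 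This produces, exactly as in Lemma~\ref{lemma:lem1}, a representation of $T^{\tau}_N(z;\widehat\beta)-T^{\tau}_N(z;\beta_0)$ as a $b_N^{-2}$-scaled average of $\gamma((X_{jt}'\widetilde\beta-u)/b_N)((V_j-v)/b_N)^{\tau_{-1}}\mathcal{K}^V_{b_N}((V_j-v)/b_N)Y_{jt}X_{jt}'(\widehat\beta-\beta_0)$, with $\gamma=\Gamma'$.

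Next I would split the supremum over $z\in\mathcal{Z}_t$ into a centered empirical-process term and an expectation term, the analogs of \eqref{eq:lemma1:1} and \eqref{eq:lemma1:2}. For the empirical-process term, the relevant class of functions is the one already used for $S_N$, now multiplied by the bounded factor $Y_t$. Since $Y_t\in\{0,1\}$ is uniformly bounded, multiplying the bounded Euclidean class by $Y_t$ (and by $X_t$) preserves the Euclidean property by \cite{NolanPollard1987} Lemma~22 and \cite{PakesPollard1989} Lemma~2.14, and the envelope is still controlled by $\|X_t\|$, whose square is integrable by B\ref{assn:moment}. Hence the class is Donsker, the centered process is $O_p(1)$, and the term is of order $(Nb_N^{4+2d_V})^{-1/2}\cdot O_p(a_N^{-1})=o_p((Nb_N)^{-1/2})$, using $a_N^2 b_N^{3+2d_V}\to\infty$ from B\ref{assn:ASFrates}; this is identical bookkeeping to Lemma~\ref{lemma:lem1}.

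The expectation term is where the presence of $Y_t$ actually matters, and I expect it to be the main obstacle. As for $S_N$, I would recognize the $b_N^{-2}$-scaled expectation as the $\beta$-derivative of $\E[((Z_t(\beta)-z)/b_N)^{\tau}\mathcal{K}_{b_N}((Z_t(\beta)-z)/b_N)Y_t]$. Because $Y_t$ is binary, this expectation equals $\prob(Y_t=1)$ times an integral against the conditional density $f_{Z_t(\beta)\mid Y_t}(\cdot\mid 1)$ (equivalently, an integral against the sub-density $\E[Y_t\mid Z_t(\beta)=\cdot]f_{Z_t(\beta)}(\cdot)$). After the change of variables $\widetilde z=z+ab_N$, the derivative falls on this conditional density, and the term is bounded by $\sup_{z\in\mathcal{Z}_t,\beta\in\mathcal{B}_\varepsilon}\|\frac{\partial}{\partial\beta}f_{Z_t(\beta)\mid Y_t}(z\mid 1)\|$ times $|\int a^{\tau}\mathcal{K}(a)\,da|<\infty$. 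Here the conditional density of $Z_t(\beta)$ given $Y_t$ plays exactly the role that the unconditional density $f_{Z_t(\beta)}$ played in Lemma~\ref{lemma:lem1}: its boundedness is the content of B\ref{assn:smoothness}.(v), while the smoothness in $\beta$ needed to bound the displayed derivative is the conditional analog of the control provided by B\ref{assn:smoothness}.(ii). This is precisely the step that invokes B\ref{assn:smoothness}.(v), which played no role in Lemma~\ref{lemma:lem1}. The term is then $O(1)\cdot O_p(a_N^{-1})=o_p((Nb_N)^{-1/2})$ since $Nb_N a_N^{-2}\to0$ by B\ref{assn:ASFrates}.

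Finally, since the $o_p((Nb_N)^{-1/2})$ rate is uniform in $z\in\mathcal{Z}_t$ and holds for each of the finitely many components indexed by $\tau$, it transfers to the Euclidean (vector) norm of $T_N(z;\widehat\beta)-T_N(z;\beta_0)$, which gives the claim. The only genuinely new element relative to Lemma~\ref{lemma:lem1} is the verification that the bounded factor $Y_t$ neither breaks the Euclidean/Donsker structure (handled by the product-of-classes argument) nor the derivative-of-density bound (handled by replacing $f_{Z_t(\beta)}$ with the $Y_t$-conditional density and appealing to B\ref{assn:smoothness}.(v)).
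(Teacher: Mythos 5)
Your proposal is correct and follows essentially the same route as the paper's proof, which likewise reduces Lemma \ref{lemma:lem2} to the argument of Lemma \ref{lemma:lem1} by noting that the extra factor $Y_{jt}\in\{0,1\}$ is bounded and $\beta$-free, so the same Taylor expansion in $\beta$, the same Euclidean-class/Donsker bound for the centered term, and the same derivative-of-density bound for the expectation term go through with identical rate bookkeeping. If anything, you are more explicit than the paper at the one place where $Y_t$ genuinely matters: the paper simply asserts the $O(1)\cdot O_p(a_N^{-1})$ order for the expectation term ``by the same arguments\ldots and by $Y_{jt}$ being bounded,'' whereas you spell out that the object to control is the $\beta$-derivative of the sub-density $\E[Y_t\mid Z_t(\beta)=z]\,f_{Z_t(\beta)}(z)$ (equivalently $\prob(Y_t=1)\,f_{Z_t(\beta)\mid Y_t}(z\mid 1)$), whose boundedness is the conditional analog of B\ref{assn:smoothness}.(ii) and B\ref{assn:smoothness}.(v).
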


\begin{proof}[Proof of Lemma \ref{lemma:lem2}]
Select the same generic component from $T_N(z;\widehat{\beta})$ and $T_N(z;\beta_0)$. These components can respectively be written as
\begin{align*}
       T_N^{\tau}(z;\widehat{\beta}) &\equiv \avgj \left(\frac{Z_{jt}(\widehat{\beta}) - z}{b_N}\right)^{\tau}  Y_{jt} \mathcal{K}_{b_N}\left(\frac{Z_{jt}(\widehat{\beta}) - z}{b_N}\right) \\
       T_N^{\tau}(z;\beta_0) &\equiv \avgj \left(\frac{Z_{jt}(\beta_0) - z}{b_N}\right)^{\tau} Y_{jt} \mathcal{K}_{b_N}\left(\frac{Z_{jt}(\beta_0) - z}{b_N}\right),
\end{align*}
where $\tau$ is a vector of exponents which satisfies $0 \leq |\tau| \leq \ell$. Again let $\tau_1$ denote the first component of $\tau$ and let $\tau_{-1}$  denote all other components of $\tau$. Let $\Gamma(u) \equiv u^{\tau_1} K(u)$ and $\gamma(u) \equiv \Gamma'(u) = \tau_1 u^{\tau_1 - 1} K(u) + u^{\tau_1}K'(u)$. As in the proof of Lemma \ref{lemma:lem1}, we write
\begin{align*}
       &T_N^{\tau}(z;\widehat{\beta}) - T_N^{\tau}(z;\beta_0)\\
       &= \avgj Y_{jt} \left[\frac{1}{b_N}\Gamma\left(\frac{X_{jt}'\widehat{\beta}  -u}{b_N}\right) - \frac{1}{b_N}\Gamma\left(\frac{X_{jt}'\beta_0  -u}{b_N}\right)\right] \left(\frac{V_j - v}{b_N}\right)^{\tau_{-1}} \mathcal{K}^V_{b_N}\left(\frac{V_j - v}{b_N}\right)\\
       &= \avgj Y_{jt} \frac{1}{b_N^2}\gamma\left(\frac{X_{jt}'\widetilde\beta - u}{b_N}\right) \left(\frac{V_j - v}{b_N}\right)^{\tau_{-1}} \mathcal{K}^V_{b_N}\left(\frac{V_j - v}{b_N}\right) X_{jt}'(\widehat{\beta}-\beta_0)
\end{align*}

By the same arguments as in the proof of Lemma \ref{lemma:lem1}, and by $\E[Y_{jt}^2] < \infty$, we can show that
\begin{align*}
       &\sup_{z\in\mathcal{Z}_t} \left|T_N^{\tau}(z;\widehat{\beta}) - T_N^{\tau}(z;\beta_0)\right|\\
       &\leq \frac{1}{b_N^2}\sup_{z \in \mathcal{Z}_t, \beta \in \mathcal{B}_\varepsilon, b \in (0,\bar{b}]}\left\|\avgj  Y_{jt} X_{jt} \gamma\left(\frac{X_{jt}'\beta  -u}{b}\right) \left(\frac{V_j - v}{b}\right)^{\tau_{-1} } \mathcal{K}^V_{b}\left(\frac{V_j - v}{b}\right) \right.\\
       &\left.- \E\left[   Y_{t}X_{t} \gamma\left(\frac{X_{t}'\beta  - u}{b}\right) \left(\frac{V - v}{b}\right)^{\tau_{-1}} \mathcal{K}^V_{b}\left(\frac{V - v}{b}\right)\right]\right\|\|\widehat{\beta}-\beta_0\|\\
       &+ \sup_{z \in \mathcal{Z}_t,\beta \in \mathcal{B}_\varepsilon} \frac{1}{b_N^2} \left\| \E\left[   Y_{t}X_{t} \gamma\left(\frac{X_{t}'\beta  - u}{b_N}\right) \left(\frac{V - v}{b_N}\right)^{\tau_{-1}} \mathcal{K}^V_{b_N}\left(\frac{V - v}{b_N}\right)\right]\right\|\|\widehat{\beta}-\beta_0\|\\
       &= O_p\left(\frac{1}{\sqrt{Nb_N^{4+2d_V}}}\right) \cdot O_p(a_N^{-1}) + O(1)\cdot O_p(a_N^{-1})\\
       &= o_p\left(\frac{1}{\sqrt{Nb_N}}\right)
\end{align*}
holds with probability arbitrarily close to 1 as $N\to\infty$ since $\prob(\widehat{\beta} \in \mathcal{B}_\varepsilon) \to 1$. The last equality follows from B\ref{assn:ASFrates}.

Since this rate of convergence applies uniformly in $z \in \mathcal{Z}_t$ to generic components of the vector $T_N(z; \widehat\beta) - T_N(z; \beta_0)$, it applies to its vector norm uniformly in $z \in \mathcal{Z}_t$ as well, which concludes the proof.
\end{proof}

Let 
       \[ T(z;\beta_0) = \int \xi(a) \mathcal{K}(a)\, da \cdot  \E[Y_t| Z_t(\beta_0) = z] f_{Z_t(\beta_0)}(z). \]
Also, recall that $Z_t \equiv Z_t(\beta_0)$.

\begin{lemma}[Convergence of $T_N$ to $T$] \label{lemma:lem2b}
Suppose B\ref{assn:iid}--B\ref{assn:ASFrates} hold. Then,
\begin{align*}
       \sup_{z \in \mathcal{Z}_t} \left\|T_N(z;\beta_0) - T(z;\beta_0)\right\| &= O_p\left(\left(\frac{\log(N)}{Nb_N^{1+d_V}}\right)^{1/2}\right) + O(b_N).
\end{align*}
\end{lemma}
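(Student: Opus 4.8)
The plan is to follow the same template as the proof of Lemma \ref{lemma:lem1b}, since $T_N(z;\beta_0)$ has exactly the kernel-average structure analyzed in \cite{Masry1996}. Writing $m(z) \equiv \E[Y_t \mid Z_t(\beta_0) = z]$, the quantity $T_N(z;\beta_0)$ is the ``numerator'' vector of a local polynomial regression of $Y_t$ on $Z_t(\beta_0)$, and its population limit $T(z;\beta_0) = \int \xi(a)\mathcal{K}(a)\,da \cdot m(z) f_{Z_t(\beta_0)}(z)$ is precisely the object whose uniform convergence is governed by Corollary 1 (with $\theta = 1$) of \cite{Masry1996}. Whereas Lemma \ref{lemma:lem1b} invoked the part of that result controlling the design matrix $S_n$, here I would invoke the companion part controlling the response vector $T_n$.

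First I would verify Masry's regularity conditions. The smoothness of the design density $f_{Z_t(\beta_0)}$ and of the regression function $m$ up to order $\ell + 2$ on $\mathcal{Z}_t$ is supplied by B\ref{assn:smoothness}.(iii); the interiority of the evaluation point follows from B\ref{assn:smoothness}.(iv). The kernel conditions (compact support, boundedness, Lipschitz derivatives, integration to one) hold by B\ref{assn:kernel}. The one genuinely new ingredient relative to Lemma \ref{lemma:lem1b} is the moment/tail requirement on the response, which in Masry takes the form of a bounded conditional moment $\E[|Y_t|^s \mid Z_t(\beta_0) = z] < \infty$ for some $s > 2$. Since $Y_t \in \{0,1\}$ is bounded, this holds trivially for every $s$; together with the boundedness of $f_{Z_t(\beta_0)\mid Y_t}$ from B\ref{assn:smoothness}.(v), this guarantees that the joint density and conditional moments entering Masry's variance bounds are well behaved. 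Finally, the bandwidth rate conditions of his Theorem 2 ($b_N \to 0$ and $Nb_N^{1+d_V}/\log N \to \infty$) follow from B\ref{assn:ASFrates}, exactly as in Lemma \ref{lemma:lem1b}. Since we assume iid sampling (B\ref{assn:iid}), a special case of the $\alpha$-mixing setting in \cite{Masry1996}, his result applies a fortiori.

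With the conditions verified, the stated rate decomposes into two pieces. The stochastic term $O_p((\log(N)/(Nb_N^{1+d_V}))^{1/2})$ arises from the uniform deviation of the kernel average from its expectation; this is the core of Masry's argument, resting on an exponential (Bernstein-type) inequality together with a covering argument over $z \in \mathcal{Z}_t$ made possible by the smoothness of the kernel in B\ref{assn:kernel}, and parallels the Euclidean-class approach already used in the proofs of Lemmas \ref{lemma:lem1} and \ref{lemma:lem2}. The deterministic term $O(b_N)$ is the leading bias, obtained by a first-order Taylor expansion of $m(z + a b_N)\, f_{Z_t(\beta_0)}(z + a b_N)$ about $z$ after the change of variables $\widetilde z = z + a b_N$; taking $\theta = 1$ in Masry's notation yields precisely the $O(b_N^\theta) = O(b_N)$ order. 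I do not anticipate a substantive obstacle: the entire content is a verification that Masry's hypotheses hold, with the boundedness of the binary response $Y_t$ rendering the response-side moment conditions immediate. The only point requiring mild care is confirming that B\ref{assn:smoothness}.(iii) delivers the required smoothness of the product $m \cdot f_{Z_t(\beta_0)}$ rather than of each factor separately, which is immediate since a product of $\mathcal{C}_{\ell+2}$ functions is again $\mathcal{C}_{\ell+2}$ on the compact set $\mathcal{Z}_t$.
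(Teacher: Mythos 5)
Your proposal is correct and follows essentially the same route as the paper's proof: a triangle-inequality split of $\sup_{z \in \mathcal{Z}_t}\|T_N(z;\beta_0) - T(z;\beta_0)\|$ into the stochastic deviation $\|T_N - \E[T_N]\|$, handled by citing the uniform-convergence machinery of \cite{Masry1996} (the paper also points to Lemma B.ii.(2) of \cite{RotheFirpo2019}), and a bias term $\|\E[T_N] - T\|$, handled by the change of variables $\widetilde z = z + ab_N$ and a first-order Taylor expansion of $\E[Y_t|Z_t = z]f_{Z_t}(z)$ giving $O(b_N)$ under B\ref{assn:smoothness}.(iii) and B\ref{assn:kernel}. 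Your observations that the boundedness of $Y_t$ trivializes the response-side moment conditions and that the smoothness of the product $m \cdot f_{Z_t(\beta_0)}$ follows from that of each factor match what the paper implicitly relies on.
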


\begin{proof}[Proof of Lemma \ref{lemma:lem2b}]
By the triangle inequality,
\begin{align*}
       \sup_{z \in \mathcal{Z}_t} \left\|T_N(z;\beta_0) - T(z;\beta_0)\right\| &\leq \sup_{z \in \mathcal{Z}_t} \left\|T_N(z;\beta_0) - \E[T_N(z;\beta_0)]\right\| + \sup_{z \in \mathcal{Z}_t} \left\|\E[T_N(z;\beta_0)] - T(z;\beta_0)\right\|.
\end{align*}

Generic components of $T_N(z;\beta_0) - \E[T_N(z;\beta_0)]$ can be written as
\begin{align*}
       &\sup_{z \in \mathcal{Z}_t} \left|\avgj \left(\frac{Z_{jt} - z}{b_N}\right)^\tau Y_{jt}\mathcal{K}_{b_N}\left(\frac{Z_{jt} - z}{b_N}\right) - \E\left[\left(\frac{Z_t - z}{b_N}\right)^\tau Y_t \mathcal{K}_{b_N}\left(\frac{Z_t - z}{b_N}\right)\right]\right|.
\end{align*}
By an argument similar to that used in Corollary 1.(ii) in \cite{Masry1996} or in Lemma B.ii.(2) in \cite{RotheFirpo2019}, this term is of order $O_p\left(\left(\frac{\log(N)}{Nb_N^{1+d_V}}\right)^{1/2}\right).$

Next, note that generic elements of $\E[T_N(z;\beta_0)]$ are of the form
\begin{align*}
       &\E\left[\left(\frac{Z_{t} - z}{b_N}\right)^\tau Y_t \mathcal{K}_{b_N}\left(\frac{Z_{t} - z}{b_N}\right)\right]\\
       &= \int \left(\frac{\tilde{z} - z}{b_N}\right)^\tau \E[Y_t|Z_t = \tilde{z}] \mathcal{K}_{b_N}\left(\frac{\tilde{z} - z}{b_N}\right) f_{Z_t}(\tilde{z})\, d\tilde{z}\\
       &= \int a^\tau \mathcal{K}(a) \E[Y_t|Z_t = z + ab_N] f_{Z_t}(z + a b_N) \, da\\
       &\leq \E[Y_t|Z_t = z] f_{Z_t}(z)\int a^\tau \mathcal{K}(a) \, da + b_N \sup_{z\in\mathcal{Z}_t} \left\| \frac{\partial}{\partial z} \left(\E[Y_t|Z_t = z] f_{Z_t}(z)\right) \right\| \cdot \left\|\int a^\tau \mathcal{K}(a) \cdot a \, da\right\|.
\end{align*}
The second equality follows from a change in variables. Note that $\E[Y_t|Z_t = z] f_{Z_t}(z) \int a^\tau \mathcal{K}(a) \, da $ is the corresponding element of $T(z;\beta_0)$. Therefore,
\begin{align*}
       &\sup_{z\in\mathcal{Z}_t} \left|\int a^\tau \mathcal{K}(a) \E[Y_t|Z_t = z + ab_N] f_{Z_t}(z + ab_N) \, da -  \E[Y_t|Z_t = z] f_{Z_t}(z)\int a^\tau \mathcal{K}(a) \, da\right|\\
       &\leq b_N \sup_{z\in\mathcal{Z}_t} \left\| \frac{\partial}{\partial z} \left(\E[Y_t|Z_t = z] f_{Z_t}(z)\right) \right\| \cdot \left\|\int a^\tau \mathcal{K}(a) \cdot a \, da\right\|.
\end{align*}

By B\ref{assn:kernel}, $\left\|\int a^\tau \mathcal{K}(a) \cdot a \, da\right\|<\infty$. By B\ref{assn:smoothness}.(iii), we have that $\sup_{z\in\mathcal{Z}_t} \left\| \frac{\partial}{\partial z} \left( \E[Y_t|Z_t = z] f_{Z_t}(z) \right) \right\|<\infty$. Therefore, 
\[
       \sup_{z \in \mathcal{Z}_t} \left\|\E[T_N(z;\beta_0)] - T(z;\beta_0)\right\| = O(b_N)
\]
and
\begin{align*}
       \sup_{z \in \mathcal{Z}_t} \left\|T_N(z;\beta_0) - T(z;\beta_0)\right\| &= O_p\left(\left(\frac{\log(N)}{Nb_N^{1+d_V}}\right)^{1/2}\right) + O(b_N).
\end{align*}
\end{proof}

\begin{lemma}[Convergence of $S_N$ part 2] \label{lemma:lem3}
Suppose B\ref{assn:iid}--B\ref{assn:ASFrates} hold. Then,
\begin{align*}
       \sup_{z \in \mathcal{Z}_t} \left\|\frac{\partial}{\partial u} S_N(z;\beta_0)\right\| &= o_p\left(\frac{a_N}{\sqrt{Nb_N}}\right).
\end{align*}
\end{lemma}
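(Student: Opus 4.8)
The plan is to follow the template of the proof of Lemma~\ref{lemma:lem1b}, reducing the claim to a uniform law of large numbers for a kernel-type average and then verifying that the resulting rate is dominated by $a_N/\sqrt{Nb_N}$ using the bandwidth restrictions in B\ref{assn:ASFrates}. First I would write out a generic entry of the matrix. Writing $z = (u,v)$ and $Z_{jt}(\beta_0) = (X_{jt}'\beta_0, V_j)$, the product-kernel structure of B\ref{assn:kernel} lets me factor each entry of $S_N(z;\beta_0)$ as
\[
S_N^{\tau,\tau'}(z;\beta_0) = \avgj \frac{1}{b_N}\Psi\left(\frac{X_{jt}'\beta_0 - u}{b_N}\right)\left(\frac{V_j - v}{b_N}\right)^{\tau_{-1}+\tau_{-1}'}\mathcal{K}^V_{b_N}\left(\frac{V_j - v}{b_N}\right),
\]
where $\Psi(w) = w^{\tau_1+\tau_1'}K(w)$. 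Differentiating in $u$ produces a factor $-b_N^{-1}\Psi'$, so that
\[
\frac{\partial}{\partial u}S_N^{\tau,\tau'}(z;\beta_0) = -\frac{1}{b_N^2}\avgj \Psi'\left(\frac{X_{jt}'\beta_0 - u}{b_N}\right)\left(\frac{V_j - v}{b_N}\right)^{\tau_{-1}+\tau_{-1}'}\mathcal{K}^V_{b_N}\left(\frac{V_j - v}{b_N}\right),
\]
which is again a kernel-type average, now with generalized kernel $\Psi'$ in the index coordinate and an extra $b_N^{-1}$ factor. By B\ref{assn:kernel}, $\Psi'$ is bounded, Lipschitz, and compactly supported.

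Second I would split each entry into its mean and a centered part. For the mean, a change of variables $a = (Z_t(\beta_0) - z)/b_N$ turns $\E[\frac{\partial}{\partial u}S_N^{\tau,\tau'}(z;\beta_0)]$ into $-b_N^{-1}\int \Psi'(a_1) a_{-1}^{\tau_{-1}+\tau_{-1}'}\prod_k K(a_{-1,k}) f_{Z_t(\beta_0)}(z + a b_N)\,da$. The key observation is that $\int \Psi'(a_1)\,da_1 = 0$ because $\Psi$ is compactly supported, so the apparent $O(b_N^{-1})$ leading term vanishes; Taylor-expanding $f_{Z_t(\beta_0)}(z + ab_N)$ to first order (valid by B\ref{assn:smoothness}.(iii)) leaves a term of order $O(1)$, which converges to the corresponding entry of $\frac{\partial}{\partial u}S(z;\beta_0) = \int\xi(a)\xi(a)'\mathcal{K}(a)\,da\cdot \frac{\partial}{\partial u}f_{Z_t(\beta_0)}(z)$. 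Hence $\sup_{z\in\mathcal{Z}_t}\|\E[\frac{\partial}{\partial u}S_N(z;\beta_0)]\| = O(1)$ uniformly.

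Third, for the centered part I would establish a uniform bound by the same Euclidean-class argument used in Lemma~\ref{lemma:lem1} (via \cite{NolanPollard1987} and \cite{PakesPollard1989}), or equivalently by adapting Corollary~1 of \cite{Masry1996} to the kernel $\Psi'$. Each summand is bounded by $O(b_N^{-(2+d_V)})$ and supported on a set of $Z_t(\beta_0)$-probability $O(b_N^{1+d_V})$, so its second moment is $O(b_N^{-(3+d_V)})$; the maximal inequality over the compact set $\mathcal{Z}_t$ then yields $\sup_{z\in\mathcal{Z}_t}\|\frac{\partial}{\partial u}S_N(z;\beta_0) - \E[\frac{\partial}{\partial u}S_N(z;\beta_0)]\| = O_p\big(((\log N)/(Nb_N^{3+d_V}))^{1/2}\big)$.

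Finally I would combine the two pieces and check the target order. The mean term is $O(1)$, and $a_N/\sqrt{Nb_N}\to\infty$ precisely because $\delta > 1 - 2\epsilon$ in B\ref{assn:ASFrates} (equivalently $a_N^2/(Nb_N)\to\infty$), so the mean is $o_p(a_N/\sqrt{Nb_N})$. For the stochastic term, the ratio to $a_N/\sqrt{Nb_N}$ is $(\log N)^{1/2}b_N^{-(2+d_V)/2}/a_N$, which tends to zero iff $a_N^2 b_N^{2+d_V}\gg \log N$; this holds because $\delta < \tfrac{2\epsilon}{3+2d_V} \le \tfrac{2\epsilon}{2+d_V}$ in B\ref{assn:ASFrates}. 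The main obstacle is the bookkeeping around the two-sided bandwidth constraint: one must recognize that the vanishing of $\int\Psi'$ keeps the mean at $O(1)$ rather than $O(b_N^{-1})$, and then verify that both the non-vanishing $O(1)$ mean and the inflated stochastic fluctuation are controlled simultaneously, with the lower bound $\delta > 1-2\epsilon$ handling the former and the upper bound $\delta < \tfrac{2\epsilon}{3+2d_V}$ handling the latter.
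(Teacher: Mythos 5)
Your proof is correct and follows essentially the same route as the paper: reduce to generic entries, split into a centered empirical-process part and an expectation part, show the expectation is only $O(1)$ (your ``$\int \Psi' = 0$ plus Taylor expansion'' step is an integration by parts that is algebraically identical to the paper's device of recognizing the entry as an exact $u$-derivative and letting it fall on the density), and then check both pieces against $a_N/\sqrt{Nb_N}$ using the two sides of B6. The only difference is that you invoke a sharper Masry-type bound $O_p\bigl(((\log N)/(Nb_N^{3+d_V}))^{1/2}\bigr)$ for the centered part where the paper settles for the cruder Donsker bound $O_p\bigl((Nb_N^{4+2d_V})^{-1/2}\bigr)$, which already suffices under the upper bound $\delta < \tfrac{2\epsilon}{3+2d_V}$.
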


\begin{proof}[Proof of Lemma \ref{lemma:lem3}]
As in the proof of Lemma \ref{lemma:lem1}, consider a generic entry of $S_N(z;\beta_0)$, which we write as
\begin{align*}
       S_N^{\tau,\tau'}(z;\beta_0) &= \avgj \left(\frac{Z_{jt} - z}{b_N}\right)^{\tau+\tau'}  \mathcal{K}_{b_N}\left(\frac{Z_{jt} - z}{b_N}\right).
\end{align*}
Its derivative with respect to $u$, the first element of $z$, is
\begin{align*}
       \frac{\partial}{\partial u} S_N^{\tau,\tau'}(z;\beta_0) &= \frac{-1}{b_N^{2+d_V}} \avgj \gamma\left(\frac{X_{jt}'\beta_0 - u}{b_N}\right) \left(\frac{V_j - v}{b_N}\right)^{\tau_{-1} + \tau_{-1}'} \mathcal{K}^V\left(\frac{V_j - v}{b_N}\right)
\end{align*}
where $\gamma(u) = (\tau_1 + \tau_1')u^{\tau_1 + \tau'_1 - 1}K(u) + u^{\tau_1 + \tau_1'} K'(u)$. 

Therefore, we have that
\begin{align}
       &\sup_{z\in\mathcal{Z}_t} \left|\frac{\partial}{\partial u} S_N^{\tau,\tau'}(z;\beta_0)\right| = \sup_{z\in\mathcal{Z}_t}\left|\frac{-1}{b_N^{2+d_V}} \avgj \gamma\left(\frac{X_{jt}'\beta_0 - u}{b_N}\right) \left(\frac{V_j - v}{b_N}\right)^{\tau_{-1} + \tau_{-1}'} \mathcal{K}^V\left(\frac{V_j - v}{b_N}\right) \right| \notag \\
       &\leq \sup_{z\in\mathcal{Z}_t, b\in (0,\bar{b}]} \frac{1}{\sqrt{N} b_N^{2+d_V}}\left|\frac{1}{\sqrt{N}}\sum_{j=1}^N \left\{ \gamma\left(\frac{X_{jt}'\beta_0  -u}{b}\right) \left(\frac{V_j - v}{b}\right)^{\tau_{-1} + \tau_{-1}'} \mathcal{K}^V\left(\frac{V_j - v}{b}\right) \right.\right. \notag \\
       &\left.\left. - \E\left[ \gamma\left(\frac{X_{t}'\beta_0  -u}{b}\right) \left(\frac{V - v}{b}\right)^{\tau_{-1} + \tau_{-1}'} \mathcal{K}^V\left(\frac{V - v}{b}\right) \right]\right\}\right| \label{eq:lemma3:1}\\
       &+ \sup_{z\in\mathcal{Z}_t} \frac{1}{b_N^2} \left|\E\left[ \gamma\left(\frac{X_{t}'\beta_0  -u}{b}\right) \left(\frac{V - v}{b}\right)^{\tau_{-1} + \tau_{-1}'} \mathcal{K}^V_{b_N}\left(\frac{V - v}{b}\right) \right]\right|. \label{eq:lemma3:2}
\end{align}

The class
\begin{align*}
       \left\{ \gamma\left(\frac{X_t'\beta_0  -u}{b}\right) \left(\frac{V - v}{b}\right)^{\tau_{-1} + \tau_{-1}'} \mathcal{K}^V\left(\frac{V - v}{b}\right): z \in \mathcal{Z}_t, b \in (0,\bar{b}]\right\}
\end{align*}
is a subset of $\mathcal{F}_V$ which is Euclidean, therefore it is also Euclidean and hence Donsker. We therefore have that term \eqref{eq:lemma3:1} is of order $O_p\left(\frac{1}{\sqrt{Nb_N^{4+2d_V}}}\right)$. 

We can bound term \eqref{eq:lemma3:2} as follows,
\begin{align*}
       &\sup_{z \in \mathcal{Z}_t} \frac{1}{b_N^2} \left| \E\left[ \gamma\left(\frac{X_{t}'\beta_0  -u}{b_N}\right) \left(\frac{V - v}{b_N}\right)^{\tau_{-1} + \tau_{-1}'} \mathcal{K}^V_{b_N}\left(\frac{V - v}{b_N}\right)\right]\right|\\
       &= \sup_{z \in \mathcal{Z}_t} \left| \E\left[\frac{\partial}{\partial u} \left(\frac{Z_t - z}{b_N}\right)^{\tau + \tau'} \mathcal{K}_{b_N}\left(\frac{Z_t - z}{b_N}\right)\right]\right|\\
               &= \sup_{z \in \mathcal{Z}_t} \left| \int \frac{\partial}{\partial u}  \left(\frac{\tilde{z} - z}{b_N}\right)^{\tau + \tau'} \mathcal{K}_{b_N}\left(\frac{\tilde{z} - z}{b_N}\right) f_{Z_t(\beta_0)}(\tilde{z}) \, d \tilde{z} \right|\\
               &= \sup_{z \in \mathcal{Z}_t} \left|  \int \frac{\partial}{\partial u} a^{\tau + \tau'} \mathcal{K}\left(a\right)f_{Z_t}(z + a b_N) \, da\right|\\
               &\leq \sup_{z \in \mathcal{Z}_t} \left|\frac{\partial}{\partial u} f_{Z_t}(z)\right| \left|\int a^{\tau + \tau'} \mathcal{K}\left(a\right) \, da\right|\\
               &= O(1).
\end{align*}
The third equality follows from the change of variables $\tilde{z} = z + a b_N$. The final line follows from B\ref{assn:kernel} and B\ref{assn:smoothness}.(iii).

Therefore, 
\begin{align*}
       \sup_{z\in\mathcal{Z}_t} \left|\frac{\partial}{\partial z_1} S_N^{\tau,\tau'}(z;\beta_0)\right| &= O_p\left(\frac{1}{\sqrt{Nb_N^{4+2d_V}}}\right) + O(1)\\
       &= o_p\left(\frac{a_N}{\sqrt{Nb_N}}\right)
\end{align*}
since, as $N\to\infty$, $\frac{1}{\sqrt{Nb_N^{4+2d_V}}} \cdot \frac{\sqrt{Nb_N}}{a_N} = O(N^{\epsilon - \delta(3/2+d_V)}) = o(1)$ by B\ref{assn:ASFrates}, and since $ \frac{\sqrt{Nb_N}}{a_N} \cdot O(1) = O(N^{1/2 - \epsilon - \delta/2}) = o(1)$, also by B\ref{assn:ASFrates}. Since this holds for a generic entry of the matrix $\frac{\partial}{\partial u} S_N(z;\beta_0)$, it holds for its matrix norm as well, which concludes this lemma.
\end{proof}

\begin{lemma}[Convergence of $T_N$ part 2] \label{lemma:lem4}
Suppose B\ref{assn:iid}--B\ref{assn:ASFrates} hold. Then,
\begin{align*}
       \sup_{z \in \mathcal{Z}_t} \left\|\frac{\partial}{\partial u} T_N(z;\beta_0)\right\| &= o_p\left(\frac{a_N}{\sqrt{Nb_N}}\right).
\end{align*}
\end{lemma}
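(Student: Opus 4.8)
The plan is to transcribe the proof of Lemma~\ref{lemma:lem3} almost verbatim, the only two differences being that each entry of $T_N$ carries a single exponent vector $\tau$ rather than a pair $(\tau,\tau')$, and that a bounded multiplicative factor $Y_{jt}$ now appears inside the average. First I would reduce the statement to a uniform bound on a generic scalar entry $T_N^\tau(z;\beta_0) = \avgj \left(\frac{Z_{jt}-z}{b_N}\right)^\tau Y_{jt}\, \mathcal{K}_{b_N}\!\left(\frac{Z_{jt}-z}{b_N}\right)$, since a rate holding uniformly in $z$ for every entry transfers to the vector norm. Setting $\Gamma(w)=w^{\tau_1}K(w)$ and $\gamma=\Gamma'$ as in the earlier lemmas, differentiating in the first coordinate $u$ of $z$ gives
\begin{align*}
\frac{\partial}{\partial u} T_N^\tau(z;\beta_0) = \frac{-1}{b_N^{2+d_V}}\avgj \gamma\!\left(\frac{X_{jt}'\beta_0-u}{b_N}\right)\left(\frac{V_j-v}{b_N}\right)^{\tau_{-1}} Y_{jt}\,\mathcal{K}^V\!\left(\frac{V_j-v}{b_N}\right).
\end{align*}

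Next I would split this into a centered empirical-process term and its expectation, exactly paralleling \eqref{eq:lemma3:1} and \eqref{eq:lemma3:2}. For the stochastic term, the governing function class is the one already shown to be Euclidean in the proof of Lemma~\ref{lemma:lem3}, now multiplied by $Y_t$; since $Y_t\in\{0,1\}$ is bounded, the product remains a bounded Euclidean (hence Donsker) class by \cite{PakesPollard1989} Lemma~2.14, so this term is $O_p\big((Nb_N^{4+2d_V})^{-1/2}\big)$. For the expectation term I would use the change of variables $\tilde z = z+ab_N$ to write it as $\int \frac{\partial}{\partial u} a^\tau \mathcal{K}(a)\, \E[Y_t\mid Z_t = z+ab_N]\, f_{Z_t}(z+ab_N)\, da$ and bound it uniformly by $\sup_{z\in\mathcal{Z}_t}\big\|\frac{\partial}{\partial u}\big(\E[Y_t\mid Z_t=z] f_{Z_t}(z)\big)\big\|\cdot\big|\int a^\tau \mathcal{K}(a)\, da\big| = O(1)$.

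The one place where the argument genuinely departs from Lemma~\ref{lemma:lem3} --- and hence the only point requiring care rather than a true obstacle --- is that the expectation term now involves the derivative of the product $\E[Y_t\mid Z_t=z]\, f_{Z_t}(z)$ rather than of $f_{Z_t}$ alone; its boundedness is supplied precisely by Assumption B\ref{assn:smoothness}.(iii), which controls $\|\E[Y_t\mid Z_t(\beta_0)=z]\|_{\ell+2}$ and $\|f_{Z_t(\beta_0)}(z)\|_{\ell+2}$ on $\mathcal{Z}_t$, together with B\ref{assn:kernel} for the finiteness of $\int a^\tau\mathcal{K}(a)\,da$. Finally I would combine the two pieces and invoke the bandwidth rates of B\ref{assn:ASFrates} exactly as in Lemma~\ref{lemma:lem3}: because $\frac{1}{\sqrt{Nb_N^{4+2d_V}}}\cdot\frac{\sqrt{Nb_N}}{a_N}=O(N^{\epsilon-\delta(3/2+d_V)})=o(1)$ and $\frac{\sqrt{Nb_N}}{a_N}\cdot O(1)=O(N^{1/2-\epsilon-\delta/2})=o(1)$, both contributions are $o_p(a_N/\sqrt{Nb_N})$, and the bound passes from a generic entry to the full vector norm, completing the proof.
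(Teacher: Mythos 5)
Your proposal is correct and follows essentially the same route as the paper: the paper's own proof computes the derivative $\frac{\partial}{\partial u}T_N^{\tau}(z;\beta_0)$ in exactly the form you give and then defers to the arguments of Lemmas \ref{lemma:lem2} and \ref{lemma:lem3}, which is precisely the combination (Euclidean-class bound with the bounded factor $Y_{jt}$ absorbed, plus the $O(1)$ bound on the expectation term via B\ref{assn:smoothness}.(iii), plus the rate conditions of B\ref{assn:ASFrates}) that you spell out.
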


\begin{proof}[Proof of Lemma \ref{lemma:lem4}]
As in the proof of Lemma \ref{lemma:lem2}, consider a generic component of the vector $T_N(z;\beta_0)$. Denote this element by
\begin{align*}
       T_N^{\tau}(z;\beta_0) &= \avgj \left(\frac{Z_{jt} - z}{b_N}\right)^{\tau} Y_{jt} \mathcal{K}_{b_N}\left(\frac{Z_{jt} - z}{b_N}\right).
\end{align*}
Its derivative with respect to $u$ is
\begin{align*}
       \frac{\partial}{\partial u} T_N^{\tau}(z;\beta_0) &= \frac{-1}{b_N^{2+d_V}} \avgj  Y_{jt}\gamma\left(\frac{X_{jt}'\beta  -u}{b}\right) \left(\frac{V_j - v}{b}\right)^{\tau_{-1}} \mathcal{K}^V\left(\frac{V_j - v}{b}\right).
\end{align*}
where $\gamma(u) = \tau_1 u^{\tau_1 - 1} K(u) + u^{\tau_1} K'(u)$. The rest of the proof follows directly from the arguments used in the proofs of Lemmas \ref{lemma:lem2} and \ref{lemma:lem3}.
\end{proof}

\begin{lemma}[Convergence of indicators] \label{lemma:lem5}
Suppose B\ref{assn:iid}--B\ref{assn:ASFrates} hold. Suppose $\widetilde\beta \pconv \beta_0$. Let $\pi_{it}(\beta) \equiv \1((\ubx_t'\beta,V_i) \in \mathcal{Z}_t)$. Then,
       \[\prob\left(\sup_{i=1,\ldots,N} \left|\pi_{it}(\widetilde\beta) - \pi_{it}(\beta_0)\right| = 0\right) \to 1\]
       as $N\to \infty$.
\end{lemma}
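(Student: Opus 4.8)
The plan is to collapse the $N$ simultaneous indicator comparisons into a single comparison that involves only the first coordinate $\ubx'\beta$, which is \emph{common across all $i$} (it carries no $i$ subscript). The key structural fact I would exploit is that the trimming region is a product set, $\mathcal{Z}_t = \mathcal{Z}_{1t}\times\mathcal{V}^\ast$, where $\mathcal{Z}_{1t} = \{e_1'z : z\in\mathcal{Z}_t\}$ is the projection onto the index coordinate and $\mathcal{V}^\ast\subseteq\R^{d_V}$ is compact. This form is the natural one here, since in the implementation $V$ is standardized and orthogonalized separately from $\ubx'\widehat\beta$; all the argument really needs is that $\mathcal{Z}_t$ is locally a product in a neighborhood of $\ubx'\beta_0$. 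Under this structure I would write $\pi_{it}(\beta) = \1(\ubx'\beta\in\mathcal{Z}_{1t})\cdot\1(V_i\in\mathcal{V}^\ast)$, so that the only $\beta$-dependent factor, $\1(\ubx'\beta\in\mathcal{Z}_{1t})$, is free of $i$.

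First I would bound the supremum. Because $\1(V_i\in\mathcal{V}^\ast)$ does not depend on $\beta$, subtracting $\pi_{it}(\beta_0)$ from $\pi_{it}(\widetilde\beta)$ factors the common term out, giving $\sup_{i=1,\ldots,N}|\pi_{it}(\widetilde\beta) - \pi_{it}(\beta_0)| \le |\1(\ubx'\widetilde\beta\in\mathcal{Z}_{1t}) - \1(\ubx'\beta_0\in\mathcal{Z}_{1t})|$, a quantity that no longer involves the sample at all. Next I would invoke interiority: by B\ref{assn:smoothness}.(iv), $\ubx'\beta_0$ lies in the interior of $\mathcal{Z}_{1t}$, so there is $\eta>0$ with $(\ubx'\beta_0-\eta,\,\ubx'\beta_0+\eta)\subseteq\mathcal{Z}_{1t}$. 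By Cauchy--Schwarz, $|\ubx'\widetilde\beta - \ubx'\beta_0| \le \|\ubx\|\,\|\widetilde\beta-\beta_0\|$, which is $o_p(1)$ since $\widetilde\beta\pconv\beta_0$. Hence on the event $\{|\ubx'\widetilde\beta-\ubx'\beta_0|<\eta\}$ both $\ubx'\beta_0$ and $\ubx'\widetilde\beta$ lie in $\mathcal{Z}_{1t}$, the two indicators coincide, and the right-hand side above is exactly zero. Chaining the inclusions yields $\prob\big(\sup_i|\pi_{it}(\widetilde\beta)-\pi_{it}(\beta_0)|=0\big) \ge \prob\big(|\ubx'\widetilde\beta-\ubx'\beta_0|<\eta\big) \to 1$, which is the claim.

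The delicate point--and the step I would flag as the main obstacle--is explaining \emph{why} the indicators can only flip for all $i$ at once (namely, never, with high probability) rather than for a growing subset of the sample. This is exactly what the common first coordinate together with the product structure buys us: the flip event reduces to a deterministic-in-$i$ statement about a single scalar, $\ubx'\beta$, crossing the boundary of the interval $\mathcal{Z}_{1t}$. If instead the boundary of $\mathcal{Z}_t$ were allowed to bend in the $V$-directions as the index coordinate varied, an arbitrarily small perturbation of $\beta$ could push an unbounded number of the $V_i$ across $\partial\mathcal{Z}_t$, and one could control only the \emph{expected} number of flips, of order $N\|\widetilde\beta-\beta_0\|$, which need not vanish. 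Restricting attention to a (locally) product trimming set is therefore what upgrades a measure-zero boundary argument into the exact superconsistency $\sup_i|\pi_{it}(\widetilde\beta)-\pi_{it}(\beta_0)|=0$ needed for the downstream asymptotics.
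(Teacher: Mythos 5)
Your proof is correct and follows essentially the same route as the paper's: both arguments collapse the supremum over $i$ into the single scalar statement that $\ubx'\widetilde\beta$ remains in $\mathcal{Z}_{1t} = \{e_1'z : z \in \mathcal{Z}_t\}$, and then conclude from the interiority of $\ubx'\beta_0$ in $\mathcal{Z}_{1t}$ (Assumption B\ref{assn:smoothness}.(iv)) together with $\widetilde\beta \pconv \beta_0$ (the paper via the portmanteau theorem, you via a direct $\eta$-neighborhood bound, which are equivalent here). Your explicit product-structure caveat on $\mathcal{Z}_t$ is well taken rather than a deviation: the paper's own inequality bounding $\1((\ubx'\widetilde\beta,V_i)\in\mathcal{Z}_t,\,(\ubx'\beta_0,V_i)\notin\mathcal{Z}_t)$ by $\1(\ubx'\widetilde\beta\in\mathcal{Z}_{1t},\,\ubx'\beta_0\notin\mathcal{Z}_{1t})$ is only valid when $\mathcal{Z}_t$ is (at least locally) a product of $\mathcal{Z}_{1t}$ with a set in the $V$-coordinates, so you have simply made explicit a condition the paper relies on implicitly.
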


\begin{proof}[Proof of Lemma \ref{lemma:lem5}]
We note that
\begin{align*}
       \sup_{i=1,\ldots,N} |\pi_{it}(\widetilde{\beta}) - \pi_{it}(\beta_0)| &= \sup_{i=1,\ldots,N}\left(\1((\ubx_t'\widetilde{\beta},V_i) \in \mathcal{Z}_t, (\ubx_t'\beta_0, V_i) \notin \mathcal{Z}_t) + \1((\ubx_t'\widetilde{\beta},V_i) \notin \mathcal{Z}_t, (\ubx_t'\beta_0, V_i) \in \mathcal{Z}_t)\right)\\
       &\leq \sup_{i=1,\ldots,N}\left(\1(\ubx_t'\widetilde{\beta} \in \mathcal{Z}_{1t}, \ubx_t'\beta_0 \notin \mathcal{Z}_{1t}) + \1(\ubx_t'\widetilde{\beta} \notin \mathcal{Z}_{1t}, \ubx_t'\beta_0 \in \mathcal{Z}_{1t})\right)\\
       &= \1(\ubx_t'\widetilde{\beta} \in \mathcal{Z}_{1t}, \ubx_t'\beta_0 \notin \mathcal{Z}_{1t}) + \1(\ubx_t'\widetilde{\beta} \notin \mathcal{Z}_{1t}, \ubx_t'\beta_0 \in \mathcal{Z}_{1t}),
\end{align*}
where $\mathcal{Z}_{1t} = \{z_1 = e_1'z: z\in\mathcal{Z}_t\}$. By B\ref{assn:smoothness}.(v), $\ubx_t'\beta_0 \in \mathcal{Z}_{1t}$, and therefore $\1(\ubx_t'\widetilde{\beta} \in \mathcal{Z}_{1t}, \ubx_t'\beta_0 \notin \mathcal{Z}_{1t}) = 0$, and $\1(\ubx_t'\widetilde{\beta} \notin \mathcal{Z}_{1t}, \ubx_t'\beta_0 \in \mathcal{Z}_{1t})  = \1(\ubx_t'\widetilde{\beta} \notin \mathcal{Z}_{1t})$. 

By assumption, $\widetilde{\beta}$ converges in probability to $\beta_0$. By Theorem 18.9.(v) in \cite{Vaart1998}, $\prob(\ubx_t'\widetilde{\beta} \in \mathcal{Z}_{1t}) \to \1(\ubx_t'\beta_0 \in \mathcal{Z}_{1t}) = 1$ since $\ubx_t'\beta_0$ is not in the boundary of $\mathcal{Z}_{1t}$ by B\ref{assn:smoothness}.(v).

Therefore, 
\begin{align*}
       \prob\left(\sup_{i=1,\ldots,N} |\pi_{it}(\widetilde{\beta}) - \pi_{it}(\beta_0)| = 0 \right) &\geq \prob(\1(\ubx_t'\widetilde{\beta} \notin \mathcal{Z}_{1t}) = 0) = \prob(\ubx_t'\widetilde{\beta} \in \mathcal{Z}_{1t}) \to \prob(\ubx_t'\beta_0 \in \mathcal{Z}_{1t}) = 1
\end{align*}
as $N \to \infty$.
\end{proof}

\begin{lemma}[ASF convergence in distribution]\label{lemma:convASF}
Suppose B\ref{assn:iid}--B\ref{assn:ASFrates} hold. Then,
\begin{align*}
       \sqrt{Nb_N}\left(\avg \widehat{h}_1(\ubx_t'\beta_0,V_i;\beta_0)\pi_{it} - \E[h_1(\ubx_t'\beta_0,V;\beta_0)\pi_t]\right) &\dconv \mathcal{N}(0,\sigma^2_{\text{ASF}_t}(\ubx_t'\beta_0)).
\end{align*}

\end{lemma}

\begin{proof}[Proof of Lemma \ref{lemma:convASF}]
       This proof builds on the proof of Corollary 2 in \cite{KongLintonXia2010} (KLX hereafter). First, we verify that Assumptions A1--A7 of KLX hold under ours. Their A1 holds with our squared loss function, and we note that $\psi(\varepsilon_i) \equiv -2(Y_{it} - \E[Y_{t}|Z_{it}] )$ in their notation. By our A\ref{assn:moment}, $\E[|\psi(\varepsilon_i)|^{\nu_1}]<\infty$ holds for arbitrary large $\nu_1$. Their A2 holds immediately. Their A3 holds by our B\ref{assn:kernel}. Their A4 and A5 hold by  our B\ref{assn:smoothness}.(iii). Their A6 holds if 
\begin{align*}
       Nb_N^{1+d_V}/\log(N) &\to \infty\\
       Nb_N^{1+d_V + 2(\ell + 1)}/\log(N) &= O(1)\\
               N^{\nu_2/8 - \lambda_1 - 1/4} b_N^{(1+d_V)(\nu_2/8 -\lambda_1 + 3/4)} \log(N)^{-\nu_2/8 + 5/4 + \lambda_1}
       &\to \infty,
\end{align*}

for some $2 < \nu_2 \leq \nu_1$. Since $b_N = \kappa \cdot N^{-\delta}$, these conditions are equivalent to
\begin{align*}
       1 - \delta(1+d_V) &> 0\\
       1 - \delta(3 + 2\ell + d_V) &\leq 0\\
       \nu_2/8 - \lambda_1 - 1/4 - \delta(1+d_V)(\nu_2/8 -\lambda_1 + 3/4) &> 0.
\end{align*}
Since $\nu_1$ can be made arbitrarily large, $\nu_2$ can also taken to be arbitrarily large, and the last inequality is equivalent to
\begin{align*}
       \delta &< \frac{1}{1+d_V}.
\end{align*}    
By our B\ref{assn:ASFrates}, these rate conditions all hold. Finally, their A7 holds by our B\ref{assn:smoothness}.(v). Since these assumptions hold for $\lambda_1 = 1$, we can use equation (13) in KLX and their Corollary 1 to write
\begin{align*}
       \widehat{h}_1(z;\beta_0) &= h_1(z;\beta_0) + B_{1,N}(z) + \avgj \phi_{1,jN}(z) + R_{1,N}(z)
\end{align*}
where $B_{1,N}(z)$ is a bias term satisfying $\sup_{z\in\mathcal{Z}_t} |B_{1,N}(z)| = O(b_N^{\ell+1})$ if $\ell$ is odd or $O(b_N^{\ell+2})$ if $\ell$ is even, where $\phi_{1,jN}(z)$ are mean-zero random variables, and where $R_{1,N}(z)$ is a higher-order term satisfying $\sup_{z\in\mathcal{Z}_t} |R_{1,N}(z)| = O_p\left(\frac{\log(N)}{Nb_N^{1+d_V}}\right)$. 
       
Second, we note that
\begin{align}
       &\sqrt{Nb_N} \left(\avg \widehat{h}_1(\ubx_t'\beta_0,V_i;\beta_0)\pi_{it} - \E[h_1(\ubx_t'\beta_0, V; \beta_0) \pi_t]\right)\notag\\
       &= \sqrt{Nb_N}\avg \left(\widehat{h}_1(\ubx_t'\beta_0, V_i; \beta_0) - h_1(\ubx_t'\beta_0, V_i; \beta_0) \right) \pi_{it} \label{eq:lemma8:1}\\
       &+ \sqrt{b_N} \cdot \frac{1}{\sqrt{N}} \sum_{i=1}^N \left(h_1(\ubx_t'\beta_0,V_i;\beta_0)\pi_{it} - \E[h_1(\ubx_t'\beta_0,V;\beta_0)\pi_t]\right). \label{eq:lemma8:2}
\end{align}

To analyze term \eqref{eq:lemma8:1}, we use the fact that
\begin{align*}
       &\sqrt{Nb_N}\avg \left(\widehat{h}_1(\ubx_t'\beta_0, V_i; \beta_0) - h_1(\ubx_t'\beta_0, V_i; \beta_0) \right) \pi_{it} \\
       &= \sqrt{Nb_N} \avg B_{1,N}(\ubx_t'\beta_0, V_i)\pi_{it}\\
       & + \sqrt{Nb_N} \avgij \phi_{1,jN}(\ubx_t'\beta_0, V_i)\pi_{it} + \sqrt{Nb_N} \avg R_{1,N}(\ubx_t'\beta_0, V_i)\pi_{it}.
\end{align*}

When $\ell$ is odd, $\sqrt{Nb_N} \avg B_{1,N}(\ubx_t'\beta_0, V_i)\pi_{it}$ is $o(1)$ because
\begin{align*}
       \left|\sqrt{Nb_N} \avg B_{1,N}(\ubx_t'\beta_0,V_i)\pi_{it}\right| &\leq \sqrt{Nb_N} \cdot\sup_{z\in\mathcal{Z}_t} \left|B_{1,N}(z)\right|\\
       &= \sqrt{Nb_N}\cdot O(b_N^{\ell+1})\\
       &= O(\sqrt{Nb_N^{2\ell+3}})
\end{align*}
and by B\ref{assn:ASFrates}. A similar derivation applies when $\ell$ is even.

We now show that term $\sqrt{Nb_N} \avgij \phi_{1,jN}(\ubx_t'\beta_0,V_i)\pi_{it}$ converges in distribution to a normal distribution. By standard arguments from \cite{Masry1996}, which are also referred to in the proof of Corollary 2 in KLX, we have that
\begin{align*}
       &\avgij \phi_{1,jN}(\ubx_t'\beta_0,V_i)\pi_{it}\\
       &= \frac{-1}{Nb_N} \sum_{i=1}^N (Y_{it} - \E[Y_t|Z_{it}]) f_V(V_i) \1( (\ubx_t'\beta_0, V_i) \in \mathcal{Z}_t)\\
       & \cdot e_1'  S_N(\ubx_t'\beta_0,V_i;\beta_0)^{-1}  \int \mathcal{K}\left(\frac{X_{it}'\beta_0 - \ubx_t'\beta_0}{b_N}, v\right)\xi\left(\frac{X_{it}'\beta_0 - \ubx_t'\beta_0}{b_N}, v\right) \, dv\left(1 + O_p\left(\left(\frac{\log(N)}{Nb_N^{d_V}}\right)^{1/2}\right)\right)\\
       &= \frac{-1}{Nb_N} \sum_{i=1}^N (Y_{it} - \E[Y_{t}|Z_{it}]) f_V(V_i) \1( (\ubx_t'\beta_0,V_i) \in \mathcal{Z}_t)\\
       & \cdot  e_1'  S_N(\ubx_t'\beta_0,V_i;\beta_0)^{-1}  \int \mathcal{K}\left(\frac{X_{it}'\beta_0 - \ubx_t'\beta_0}{b_N}, v\right)\xi\left(\frac{X_{it}'\beta_0 - \ubx_t'\beta_0}{b_N}, v\right) \, dv + o_p(1).
\end{align*}

We now calculate the asymptotic variance of \footnotesize
       \[ \frac{-1}{Nb_N} \sum_{i=1}^N (Y_{it} - \E[Y_{t}|Z_{it}]) f_V(V_i) \1( (\ubx_t'\beta_0,V_i) \in \mathcal{Z}_t) \cdot  e_1' S_N(\ubx_t'\beta_0, V_i;\beta_0)^{-1} \int \mathcal{K}\left(\frac{X_{it}'\beta_0 - \ubx_t'\beta_0}{b_N}, v\right)\xi\left(\frac{X_{it}'\beta_0 - \ubx_t'\beta_0}{b_N}, v\right) \, dv .\]
\normalsize
We have that \footnotesize
\begin{align*}
       &\Var\left(\frac{-1}{Nb_N} \sum_{i=1}^N (Y_t - \E[Y_t|Z_t]) f_V(V) \1( (\ubx_t'\beta_0,V) \in \mathcal{Z}_t) \cdot  e_1' S_N(\ubx_t'\beta_0, V; \beta_0)^{-1}  \int \mathcal{K}\left(\frac{X_{t}'\beta_0 - \ubx_t'\beta_0}{b_N}, v\right) \xi\left(\frac{X_t'\beta_0 - \ubx_t'\beta_0}{b_N}, v \right) \, dv  \right)\\
       &= \frac{1}{N b_N^2} \E\left[ (Y_t - \E[Y_t|Z_t])^2  f_V(V)^2 \1((\ubx_t'\beta_0,V) \in \mathcal{Z}_t) e_1'  S_N(\ubx_t'\beta_0,V;\beta_0)^{-1}  \left(\int \mathcal{K} \left(\frac{X_t'\beta_0 - \ubx_t'\beta_0}{b_N}, v \right) \xi \left(\frac{X_t'\beta_0 - \ubx_t'\beta_0}{b_N}, v \right) \, dv\right) \right.\\
       &\left. \left(\int \mathcal{K} \left( \frac{X_t'\beta_0 - \ubx_t'\beta_0}{b_N}, v \right) \xi \left(\frac{X_t'\beta_0 - \ubx_t'\beta_0}{b_N}, v \right) \, dv\right)' S_N(\ubx_t'\beta_0,V;\beta_0)^{-1} e_1 \right]
\end{align*}\normalsize
Recall that $S_N(z;\beta_0) = S(z;\beta_0) + o_p(1) = \int \xi(a)\xi(a)' \mathcal{K}(a) \, da \cdot f_{Z_t(\beta_0)}(z) + o_p(1)$ uniformly in $z \in \mathcal{Z}_t$ by Lemma \ref{lemma:lem1b}. Therefore, the above expression \footnotesize
\begin{align*}
               &= \frac{1}{N b_N^2} \E\left[ \Var(Y_t|Z_t(\beta_0))  \frac{f_V(V)^2}{f_{Z_t(\beta_0)}(\ubx_t'\beta_0,V)^2} \1((\ubx_t'\beta_0,V) \in \mathcal{Z}_t) e_1'  \left(\int \xi(a)\xi(a)'\mathcal{K}(a) \, da\right)^{-1}   \right.\\
       & \left(\int \mathcal{K} \left(\frac{X_t'\beta_0 - \ubx_t'\beta_0}{b_N}, v \right) \xi \left(\frac{X_t'\beta_0 - \ubx_t'\beta_0}{b_N}, v \right) \, dv\right) \left. \cdot \left(\int \mathcal{K} \left( \frac{X_t'\beta_0 - \ubx_t'\beta_0}{b_N}, v \right) \xi \left(\frac{X_t'\beta_0 - \ubx_t'\beta_0}{b_N}, v \right) \, dv\right)' \right.\\
       &\left. \left(\int \xi(a)\xi(a)'\mathcal{K}(a) \, da\right)^{-1} e_1 \right] + o((Nb_N)^{-1})\\
       &= \frac{1}{N b_N^2} \E\left[ \int \Var(Y_t| X_t'\beta_0 = \tilde{u}, V)  \frac{f_V(V)^2}{f_{Z_t(\beta_0)}(\ubx_t'\beta_0,V)^2} \1((\ubx_t'\beta_0,V) \in \mathcal{Z}_t) e_1'  \left(\int \xi(a)\xi(a)' \mathcal{K}(a) \, da\right)^{-1}  \ \right.\\
       &\left(\int \mathcal{K} \left(\frac{\tilde{u} - \ubx_t'\beta_0}{b_N}, v \right) \xi \left(\frac{\tilde{u} - \ubx_t'\beta_0}{b_N}, v \right) \, dv\right)  \cdot \left(\int \mathcal{K} \left( \frac{\tilde{u} - \ubx_t'\beta_0}{b_N}, v \right) \xi \left(\frac{\tilde{u} - \ubx_t'\beta_0}{b_N}, v \right) \, dv\right)' \\
       &\left. \left(\int \xi(a)\xi(a)'\mathcal{K}(a) \, da\right)^{-1} e_1 \, f_{X_t'\beta_0|V}(\tilde{u}|V) \, d\tilde{u}\right] + o((Nb_N)^{-1})\\
       &= \frac{1}{N b_N} \E\left[ \int \Var(Y_t|X_t'\beta_0 = \ubx_t'\beta_0 + b_N u,V)  \frac{f_V(V)^2}{f_{Z_t(\beta_0)}(\ubx_t'\beta_0,V)^2} \1((\ubx_t'\beta_0,V) \in \mathcal{Z}_t) e_1'  \left(\int \xi(a)\xi(a)'\mathcal{K}(a) \, da\right)^{-1}   \right.\\
       &\left. \cdot \left(\int \mathcal{K} \left(z\right) \xi \left(z \right) \, dv\right)\left(\int \mathcal{K} \left( z\right) \xi \left(z\right) \, dv\right)' \left(\int \xi(a)\xi(a)'\mathcal{K}(a) \, da\right)^{-1} \, e_1 f_{X_t'\beta_0|V}(\ubx_t'\beta_0 + b_N u|V) \, du\right] + o((Nb_N)^{-1})\\
       &= \frac{1}{N b_N} \E\left[ \Var(Y_t|X_t'\beta_0 = \ubx_t'\beta_0,V)  \frac{f_V(V)}{f_{Z_t(\beta_0)}(\ubx_t'\beta_0,V)} \1((\ubx_t'\beta_0,V) \in \mathcal{Z}_t) \right]  \\
       & \cdot e_1'  \left(\int \xi(a)\xi(a)'\mathcal{K}(a) \, da\right)^{-1}  \int \left(\int \mathcal{K} \left(z\right) \xi \left(z \right) \, dv\right) \left(\int \mathcal{K} \left( z\right) \xi \left(z\right) \, dv\right)'   du \left(\int \xi(a)\xi(a)'\mathcal{K}(a) \, da\right)^{-1} e_1  + o((Nb_N)^{-1})\\
       &= \frac{1}{Nb_N} \sigma^2_{\text{ASF}_t}(\ubx_t'\beta_0)  + o((Nb_N)^{-1}).
\end{align*}\normalsize
The third equality follows from the change of variables $\tilde{u} = \ubx_t'\beta_0 + b_N u$. The above equations re-derive and fix a minor typo in equation (A.42) in KLX. By the proof of Corollary 2 in KLX, we have that
\begin{align*}
       \sqrt{Nb_N} \avgij \phi_{1,jN}(\ubx_t'\beta_0,V_i)\pi_{it} &\dconv \mathcal{N}(0,\sigma^2_{\text{ASF}_t}(\ubx_t'\beta_0)).
\end{align*}

Also, the term $\sqrt{Nb_N} \avg R_{1,N}(\ubx_t'\beta_0,V_i)\pi_{it}$ is $o_p(1)$ because
\begin{align*}
       \left|\sqrt{Nb_N} \avg R_{1,N}(\ubx_t'\beta_0,V_i)\pi_{it}\right| &\leq \sqrt{Nb_N} \cdot\sup_{z\in\mathcal{Z}_t} \left|R_{1,N}(z)\right|\\
       &= \sqrt{Nb_N} \cdot O_p\left(\frac{\log(N)}{N b_N^{1+d_V}}\right)\\
       &= O_p\left(\frac{\log(N)}{\sqrt{N b_N^{1 + 2d_V}}}\right)\\
       &= o_p(1)
\end{align*}
by B\ref{assn:ASFrates}.

Third, term \eqref{eq:lemma8:2} above is of order $O_p(\sqrt{b_N}) = o_p(1)$ by an application of the central limit theorem.

Finally, we obtain that 
\begin{align*}
       \sqrt{Nb_N} \left(\avg \widehat{h}_1(\ubx_t'\beta_0, V_i; \beta_0)\pi_{it} - \E[h_1(\ubx_t'\beta_0, V; \beta_0) \pi_t]\right) &= \sqrt{Nb_N} \avgij \phi_{jN}(\ubx_t'\beta_0, V_i) \pi_{it} + o_p(1)\\
       &\dconv \mathcal{N}(0,\sigma^2_{\text{ASF}_t}(\ubx_t'\beta_0)).
\end{align*}    
\end{proof}

We use the following technical lemma in the proof of Theorem \ref{thm:ASF}. 
\begin{lemma}\label{lemma:matrix_ineq}
Let $A$ and $B$ be positive-definite, symmetric matrices. Let $\lambda_{\min}(A)$ denote the minimum eigenvalue of $A$. Then,
\begin{align*}
       |\lambda_{\min}(A) - \lambda_{\min}(B)| &\leq \|A - B\|.
\end{align*}
\end{lemma}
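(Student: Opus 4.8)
The plan is to rely on the variational (Courant--Fischer) characterization of the minimum eigenvalue. Since $A$ and $B$ are symmetric, their minimum eigenvalues admit the Rayleigh-quotient representation $\lambda_{\min}(A) = \min_{\|x\|=1} x\tr A x$, and likewise for $B$. Let $x_A$ and $x_B$ denote unit vectors attaining these minima, which exist by compactness of the unit sphere and continuity of the quadratic forms.

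First I would bound one of the two differences. Using $x_B$ as a feasible test vector in the Rayleigh quotient for $A$, we have $\lambda_{\min}(A) = x_A\tr A x_A \le x_B\tr A x_B$. Writing $A = B + (A-B)$ gives
\begin{align*}
\lambda_{\min}(A) \le x_B\tr B x_B + x_B\tr (A-B) x_B = \lambda_{\min}(B) + x_B\tr(A-B)x_B,
\end{align*}
so that $\lambda_{\min}(A) - \lambda_{\min}(B) \le x_B\tr(A-B)x_B$. The key inequality is that, since $A-B$ is symmetric and $x_B$ has unit norm, $|x_B\tr(A-B)x_B| \le \|A-B\|$, where $\|\cdot\|$ is the spectral norm; this holds because for a symmetric matrix the largest absolute value of the quadratic form over the unit sphere equals the largest eigenvalue in modulus, which is the spectral norm. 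Hence $\lambda_{\min}(A) - \lambda_{\min}(B) \le \|A-B\|$.

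Next I would repeat the identical argument with the roles of $A$ and $B$ interchanged, using $x_A$ as the test vector in $B$'s Rayleigh quotient, to obtain $\lambda_{\min}(B) - \lambda_{\min}(A) \le \|A-B\|$. Combining the two one-sided bounds then yields $|\lambda_{\min}(A) - \lambda_{\min}(B)| \le \|A-B\|$, as claimed.

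There is essentially no hard step here: the result is a standard Lipschitz-continuity property of eigenvalues, a special case of Weyl's inequality. The only point requiring minor care is the quadratic-form bound $|x\tr(A-B)x| \le \|A-B\|$ for unit vectors $x$, which relies on the symmetry of $A-B$. I would note in passing that positive-definiteness of $A$ and $B$ is not actually used in the argument---only their symmetry---so the lemma holds more generally for symmetric matrices.
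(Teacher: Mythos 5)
Your proof is correct, but it takes a genuinely different route from the paper. You argue via the Courant--Fischer variational characterization: plugging the minimizer $x_B$ for $B$ into the Rayleigh quotient for $A$ and bounding the quadratic form $|x_B\tr(A-B)x_B|$ by the spectral norm, then symmetrizing --- this is the standard Weyl-inequality argument. The paper instead exploits positive-definiteness: it writes $\lambda_{\min}(A) = \|A^{-1}\|^{-1}$, reduces the claim to $\bigl|\|A^{-1}\|^{-1} - \|B^{-1}\|^{-1}\bigr|$, applies the reverse triangle inequality to get $\|A^{-1}-B^{-1}\|$, and then uses the identity $A^{-1}-B^{-1}=B^{-1}(B-A)A^{-1}$ together with submultiplicativity of the spectral norm so that the factors $\|A^{-1}\|$ and $\|B^{-1}\|$ cancel. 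Your observation that positive-definiteness is not needed --- only symmetry --- is accurate and is precisely what your approach buys: it is more elementary and more general. The paper's inverse-based argument, by contrast, genuinely requires invertibility (and positivity, so that $\lambda_{\min}(A)=\|A^{-1}\|^{-1}$ rather than merely $|\lambda|_{\min}$), which is harmless in context since the lemma is only ever applied to matrices of the form $S(z;\beta_0)$ that are shown to be positive definite.
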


\begin{proof} [Proof of Lemma \ref{lemma:matrix_ineq}]
Since $A$ and $B$ are positive-definite and symmetric, they are invertible and $\lambda_{\min}(A) = \|A^{-1}\|^{-1} > 0$ and $\lambda_{\min}(B) = \|B^{-1}\|^{-1} > 0$. We then have
\begin{align*}
       |\lambda_{\min}(A) - \lambda_{\min}(B)| &= |\|A^{-1}\|^{-1} - \|B^{-1}\|^{-1}|\\
       &= |\|A^{-1}\| - \|B^{-1}\|| \cdot \frac{1}{\|A^{-1}\| \|B^{-1}\|}\\
       &\leq \|A^{-1} - B^{-1}\|\cdot \frac{1}{\|A^{-1}\| \|B^{-1}\|}\\
       &= \|B^{-1}(B-A)A^{-1}\|\cdot \frac{1}{\|A^{-1}\| \|B^{-1}\|}\\
       &\leq \|B^{-1}\| \|A-B\| \|A^{-1}\| \cdot \frac{1}{\|A^{-1}\| \|B^{-1}\|}\\
       &= \|A - B\|.
\end{align*}    
The first inequality follows from the triangle inequality, and the second inequality is from $\|C D\| \leq \|C\| \|D\|$ for the spectral norm and square matrices $C$ and $D$.
\end{proof}

\begin{proof}[Proof of Theorem \ref{thm:ASF}]
We have the following decomposition:
\begin{align}
       \sqrt{Nb_N} \left(\widehat{\text{ASF}}_t(\ubx_t) - \text{ASF}^\pi_t(\ubx_t) \right) &= \sqrt{Nb_N} \left(\avg \left(\widehat{h}_1(\ubx_t'\widehat{\beta}, V_i; \widehat{\beta}) - \widehat{h}_1(\ubx_t'\widehat{\beta}, V_i; \beta_0)\right)\hat\pi_{it}\right) \label{eq:thm1:1}\\
       &+ \sqrt{Nb_N} \left(\avg \left(\widehat{h}_1(\ubx_t'\widehat{\beta}, V_i; \beta_0) - \widehat{h}_1(\ubx_t'\beta_0, V_i; \beta_0)\right) \hat\pi_{it} \right) \label{eq:thm1:2}\\
       &+ \sqrt{Nb_N} \left(\avg \widehat{h}_1(\ubx_t'\beta_0, V_i; \beta_0)(\hat\pi_{it} - \pi_{it})\right) \label{eq:thm1:3}\\
       &+ \sqrt{Nb_N} \left(\avg \widehat{h}_1(\ubx_t'\beta_0, V_i; \beta_0)\pi_{it} - \E[h_1(\ubx_t'\beta_0, V; \beta_0) \pi_t]\right). \label{eq:thm1:4}
\end{align}

We break down the proof into four parts. In the first three parts, we show that terms \eqref{eq:thm1:1}--\eqref{eq:thm1:3} are $o_p(1)$. In the fourth and last part, we show that term \eqref{eq:thm1:4} converges in distribution.

\noindent\textbf{Part 1: Convergence of Term \eqref{eq:thm1:1}}

We have that
\begin{align*}
       &\sqrt{Nb_N} \cdot  \left| \avg \left(\widehat{h}_1(\ubx_t'\widehat{\beta}, V_i;\widehat{\beta}) - \widehat{h}_1(\ubx_t'\widehat{\beta}, V_i; \beta_0)\right)\hat\pi_{it} \right|\\
       &= \sqrt{Nb_N} \cdot \left|\avgj e_1'\left(S_N(\ubx_t'\widehat{\beta}, V_i;\widehat{\beta})^{-1} T_N(\ubx_t'\widehat{\beta}, V_i; \widehat{\beta}) - S_N(\ubx_t'\widehat{\beta}, V_i; \beta_0)^{-1} T_N(\ubx_t'\widehat{\beta}, V_i; \beta_0)\right) \hat\pi_{it}  \right|\\
       &= \sqrt{Nb_N} \cdot \left|\avgj e_1'\left(S_N(\ubx_t'\widehat{\beta}, V_i;\widehat{\beta})^{-1} (T_N(\ubx_t'\widehat{\beta}, V_i; \widehat{\beta}) - T_N(\ubx_t'\widehat{\beta}, V_i; \beta_0))\right.\right.\\
       &\left.\left. + S_N(\ubx_t'\widehat{\beta}, V_i; \widehat{\beta})^{-1}\left(S_N(\ubx_t'\widehat{\beta}, V_i; \beta_0) - S_N(\ubx_t'\widehat{\beta}, V_i; \widehat{\beta})\right)S_N(\ubx_t'\widehat{\beta}, V_i; \beta_0)^{-1} T_N(\ubx_t'\widehat{\beta}, V_i; \beta_0)\right) \1((\ubx_t'\widehat{\beta}, V_i) \in \mathcal{Z}_t) \right|\\
       &\leq \sqrt{Nb_N} \cdot \|e_1\| \sup_{z\in\mathcal{Z}_t} \left\|S_N(z;\widehat{\beta})^{-1} \right\| \sup_{z\in\mathcal{Z}_t} \left\|T_N(z; \widehat{\beta}) - T_N(z;\beta_0)\right\|\\
       & + \sqrt{Nb_N} \cdot \|e_1\| \sup_{z\in\mathcal{Z}_t}\left\|S_N(z;\widehat{\beta})^{-1} \right\| \sup_{z\in\mathcal{Z}_t} \left\| S_N(z; \widehat{\beta}) - S_N(z; \beta_0) \right\| \sup_{z\in\mathcal{Z}_t} \left\| S_N(z; \beta_0)^{-1} \right\| \sup_{z \in \mathcal{Z}_t} \left\| T_N(z;\beta_0) \right\|.
\end{align*}

The terms in the previous expressions are of these asymptotic orders:
\begin{itemize}
       \item $\|e_1\| = 1$.
       \item $\left\| S_N(z;\widehat{\beta})^{-1} \right\| = \lambda_{\min}\left(S_N(z;\widehat{\beta})\right)^{-1}$, where $\lambda_{\min}(\cdot)$ denotes the minimum eigenvalue of a symmetric matrix. We have that
       \begin{align*}
               \sup_{z\in\mathcal{Z}_t} &\left| \lambda_{\min} \left(S_N(z; \widehat{\beta})\right) - \lambda_{\min} \left(S(z;\beta_0)\right)\right| \leq \sup_{z \in \mathcal{Z}_t}\left\|S_N(z; \widehat{\beta}) - S(z; \beta_0)\right\|\\
               &\leq \sup_{z \in \mathcal{Z}_t} \left\|S_N(z; \widehat{\beta}) - S_N(z; \beta_0)\right\| + \sup_{z \in \mathcal{Z}_t} \left\|S_N(z; \beta_0) - S(z; \beta_0)\right\|\\
               &= o_p\left(\frac{1}{\sqrt{Nb_N}}\right) + O_p\left(\left(\frac{\log(N)}{Nb_N^{1+d_V}}\right)^{1/2}\right) + O(b_N)\\
               &= o_p(1).
       \end{align*}
      The first line follows from Lemma \ref{lemma:matrix_ineq}. The second line follows from the triangle inequality. The third line follows from Lemmas \ref{lemma:lem1} and \ref{lemma:lem1b}. The last line follows from B\ref{assn:ASFrates}. Also note that
       \begin{align*}
               \inf_{z \in \mathcal{Z}_t} \lambda_{\min} \left(S(z; \beta_0)\right) &= \inf_{z \in \mathcal{Z}_t} f_{Z_t}(z) \cdot \lambda_{\min} \left(\int \xi(a)\xi(a)'\mathcal{K}(a) \, da\right) > 0.
               \end{align*}
This follows from the definition of the set $\mathcal{Z}_t$, which is such that $\inf_{z \in \mathcal{Z}_t} f_{Z_t}(z)>0$: see B\ref{assn:smoothness}.(ii). $\int \xi(a)\xi(a)'\mathcal{K}(a) \, da$ is positive definite since, for $c \in \R^{\bar{N}}$ such that $c \neq \textbf{0}$,
               \begin{align*}
               c' \left(\int \xi(a)\xi(a)'\mathcal{K}(a) \, da \right) c &= \int (c'\xi(a))^2 \mathcal{K}(a) \, da = 0
               \end{align*}
implies that $c' \xi(a) = 0$ for all $a$ in the support of $\mathcal{K}(a)$. Since $\xi(a)$ is comprised of products of powers of components of $a$, $c'\xi(a) = 0$ over this entire support implies $c = \textbf{0}$, a contradiction. Therefore $\lambda_{\min} \left(\int \xi(a)\xi(a)'\mathcal{K}(a) \, da\right) > 0$ and $\inf_{z \in \mathcal{Z}_t} \lambda_{\min} \left(S(z; \beta_0)\right) > 0$.

This implies that,
       \begin{align*}
               \sup_{z \in \mathcal{Z}_t} \left\| S_N(z; \widehat{\beta})^{-1} \right\| &= \frac{1}{\inf_{z \in \mathcal{Z}_t} \lambda_{\min} \left(S_N(z; \widehat{\beta}) \right)}\\
               &\leq \frac{1}{\inf_{z \in \mathcal{Z}_t} \lambda_{\min} \left(S(z; \beta_0)\right) - \sup_{z \in \mathcal{Z}_t}\left|\lambda_{\min} \left(S_N(z; \widehat{\beta})\right) - \lambda_{\min} \left(S(z; \beta_0)\right)\right|}\\
               &= \frac{1}{\inf_{z \in \mathcal{Z}_t} \lambda_{\min}\left(S(z; \beta_0)\right) -o_p(1)}\\
               &= O_p(1).
       \end{align*}
       \item By Lemma \ref{lemma:lem1}, we have that $\sup_{z \in \mathcal{Z}_t} \left\|S_N(z; \widehat{\beta}) - S_N(z; \beta_0) \right\| = o_p\left(\frac{1}{\sqrt{Nb_N}}\right)$.
       \item By Lemma \ref{lemma:lem2}, we have that $\sup_{z \in \mathcal{Z}_t}\left\| T_N(z; \widehat{\beta}) - T_N(z; \beta_0) \right\| = o_p\left(\frac{1}{\sqrt{Nb_N}}\right)$
       \item As above, we have that $\sup_{z\in\mathcal{Z}_t} \left\| S_N(z;\beta_0)^{-1} \right\| = O_p(1)$.
       \item We have that
       \begin{align*}
               \sup_{z \in \mathcal{Z}_t} \| T_N(z; \beta_0) \| &\leq \sup_{z \in \mathcal{Z}_t} \| T_N(z; \beta_0) -  T(z; \beta_0)\| + \sup_{z \in \mathcal{Z}_t} \| T(z; \beta_0) \|
       \end{align*}
       where
       \begin{align*}
               \sup_{z \in \mathcal{Z}_t} \| T_N(z; \beta_0) - T(z; \beta_0) \| &= O_p\left(\left(\frac{\log(N)}{Nb_N^{1+d_V}}\right)^{1/2}\right) + O(b_N)
       \end{align*}
       by Lemma \ref{lemma:lem2b}. We also have that
       \begin{align*}
               \sup_{z \in \mathcal{Z}_t} \| T(z; \beta_0) \| &= \sup_{z \in \mathcal{Z}_t} \left|\E[Y_t| Z_t = z] f_{Z_t}(z) \right| \cdot \left\| \int \xi(a) \mathcal{K}(a)\, da \right\|\\
               &\leq  \sup_{z \in \mathcal{Z}_t} \left|\E[Y_t| Z_t = z]\right| \cdot \sup_{z \in \mathcal{Z}_t}  f_{Z_t}(z) \cdot O(1)\\
               &= O(1)
       \end{align*}
       by $\sup_{z \in \mathcal{Z}_t} \left|\E[Y_t| Z_t = z]\right| <\infty$ (Assumption B\ref{assn:smoothness}.(iii)), $\sup_{z\in\mathcal{Z}_t}  f_{Z_t}(z) < \infty$ (Assumption B\ref{assn:smoothness}.(iii)), and $\left\| \int \xi(a) \mathcal{K}(a)\, da \right\| <\infty$ (Assumption B\ref{assn:kernel}). Therefore, 
       \begin{align*}
               \sup_{z \in \mathcal{Z}_t} \|T_N(z;\beta_0)\| &= O_p\left(\left(\frac{\log(N)}{Nb_N^{1+d_V}}\right)^{1/2}\right) + O(b_N) + O(1)\\
               &= O_p(1),
       \end{align*}
       by B\ref{assn:ASFrates}.
\end{itemize}

Combining the asymptotic orders of the above six terms, we have
\begin{align*}
       \sqrt{Nb_N} \cdot  \left| \avg \left(\widehat{h}_1(\ubx_t'\widehat\beta, V_i;\widehat{\beta}) - \widehat{h}_1(\ubx_t'\widehat{\beta}, V_i; \beta_0)\right) \hat\pi_{it} \right|
       &\leq \sqrt{Nb_N} \cdot O_p(1) \cdot o_p\left(\frac{1}{\sqrt{Nb_N}}\right)\\
       & + \sqrt{Nb_N} \cdot O_p(1)\cdot o_p\left(\frac{1}{\sqrt{Nb_N}}\right) \cdot O_p(1) \cdot O_p(1)\\
       &= o_p(1).
\end{align*} 

\noindent\textbf{Part 2: Convergence of Term \eqref{eq:thm1:2}}

We have that 
\begin{align}
       &\left|\avg \left(\widehat{h}_1(\ubx_t'\widehat\beta, V_i; \beta_0) - \widehat{h}_1(\ubx_t'\beta_0, V_i; \beta_0)\right)\hat\pi_{it}\right| \notag \\
       &= \left|\avgj e_1'\left(S_N(\ubx_t'\widehat\beta, V_i; \beta_0)^{-1} T_N(\ubx_t'\widehat\beta, V_i; \beta_0) - S_N(\ubx_t'\beta_0, V_i; \beta_0)^{-1} T_N(\ubx_t'\beta_0, V_i; \beta_0)\right) \hat\pi_{it}\right| \notag \\
       &= \left|\avgj e_1'\left(S_N(\ubx_t'\widehat\beta, V_i; \beta_0)^{-1}(T_N(\ubx_t'\widehat\beta, V_i; \beta_0) - T_N(\ubx_t'\beta_0, V_i; \beta_0))\right.\right. \notag \\
       &\left.\left. + S_N(\ubx_t'\widehat\beta,V_i; \beta_0)^{-1} \left(S_N(\ubx_t'\beta_0,V_i; \beta_0) - S_N(\ubx_t'\widehat\beta, V_i; \beta_0) \right) S_N(\ubx_t'\beta_0, V_i; \beta_0)^{-1}T_N(\ubx_t'\beta_0,V_i; \beta_0)\right) \hat\pi_{it}\right| \notag \\
       &\leq \|e_1\| \sup_{z \in \mathcal{Z}_t} \left\| S_N(z; \beta_0)^{-1} \right\| \sup_{z \in \mathcal{Z}_t} \left\| \frac{\partial}{\partial u} T_N(z;\beta_0)\right\| \left\|\ubx_t'\widehat\beta- \ubx_t'\beta_0\right\| \notag \\
       & + \|e_1\| \sup_{z \in \mathcal{Z}_t} \left\|S_N(z; \beta_0)^{-1}\right\| \sup_{z \in \mathcal{Z}_t} \left\|\frac{\partial}{\partial u} S_N(z; \beta_0)\right\| \left\|\ubx_t'\widehat\beta - \ubx_t'\beta_0\right\| \sup_{z \in \mathcal{Z}_t} \left\| S_N(z; \beta_0)^{-1} \right\| \sup_{z \in \mathcal{Z}_t} \left\| T_N(z; \beta_0)\right\|. \label{eq:thm1:5}
\end{align}
The inequality follows from applications of the mean-value theorem and the Cauchy-Schwarz inequality. By Lemmas \ref{lemma:lem2} and \ref{lemma:lem3}, 
\begin{align*}
       \sup_{z \in \mathcal{Z}_t} \left\|\frac{\partial}{\partial u} S_N(z; \beta_0) \right\| &= o_p\left(\frac{a_N}{\sqrt{Nb_N}}\right)\\
       \sup_{z \in \mathcal{Z}_t} \left\|\frac{\partial}{\partial u} T_N(z; \beta_0) \right\| &= o_p\left(\frac{a_N}{\sqrt{Nb_N}}\right).
\end{align*}

By B\ref{assn:betahat}, $\|\ubx_t'\widehat\beta - \ubx_t'\beta_0\| \leq \|\ubx_t\| \|\widehat{\beta} - \beta_0\| = O_p(a_N^{-1})$. The asymptotic orders of all other terms in equation \eqref{eq:thm1:5} were characterized in the analysis of the convergence of term \eqref{eq:thm1:1}. Therefore
\begin{align*}
       &\sqrt{Nb_N} \cdot \left|\avg \left(\widehat{h}_1(\ubx_t'\widehat\beta, V_i; \beta_0) - \widehat{h}_1(\ubx_t'\beta_0, V_i; \beta_0)\right) \hat\pi_{it}\right|\\
       &=\sqrt{Nb_N} \cdot O_p(1) \cdot o_p\left(\frac{a_N}{\sqrt{Nb_N}}\right) \cdot O_p(a_N^{-1}) + \sqrt{Nb_N}\cdot O_p(1) \cdot o_p\left(\frac{a_N}{\sqrt{Nb_N}}\right) \cdot O_p(a_N^{-1}) \cdot O_p(1) \cdot O_p(1)\\
       &= o_p(1).
\end{align*}

\noindent\textbf{Part 3: Convergence of Term \eqref{eq:thm1:3}}

First note that
\begin{align*}
       \left|\avg \widehat{h}_1(\ubx_t'\beta_0, V_i; \beta_0)(\hat\pi_{it} - \pi_{it})\right| &\leq \avg \left|\widehat{h}_1(\ubx_t'\beta_0, V_i; \beta_0)\right| \cdot \sup_{i = 1,\ldots,N}\left|\hat\pi_{it} - \pi_{it}\right|.
\end{align*}
Therefore 
\begin{align*}
       \prob&\left(\sqrt{Nb_N} \left|\avg \widehat{h}_1(\ubx_t'\beta_0, V_i; \beta_0)(\hat\pi_{it} - \pi_{it}) \right| = 0\right)\\
       &\geq \prob\left( \avg \left|\widehat{h}_1(\ubx_t'\beta_0, V_i; \beta_0)\right| \cdot \sup_{i = 1,\ldots,N}\left| \hat\pi_{it} - \pi_{it} \right| = 0\right)\\
       &\geq \prob\left( \sup_{i = 1,\ldots,N} \left|\hat\pi_{it} - \pi_{it}\right| = 0\right)\\
       &\to 1
\end{align*}
as $N\to \infty$ by Lemma \ref{lemma:lem5}. Therefore
\begin{align*}
       \sqrt{Nb_N} \left|\avg \widehat{h}_1(\ubx_t'\beta_0, V_i; \beta_0)(\hat\pi_{it} - \pi_{it})\right| &= o_p(1)
\end{align*}

\noindent\textbf{Part 4: Convergence of Term \eqref{eq:thm1:4}}

By Lemma \ref{lemma:convASF}, this term converges in distribution:
\begin{align*}
       \sqrt{Nb_N}\left(\avg \widehat{h}_1(\ubx_t'\beta_0, V_i; \beta_0)\pi_{it} - \E[h_1(\ubx_t'\beta_0, V; \beta_0)\pi_t]\right) &\dconv \mathcal{N}(0,\sigma^2_{\text{ASF}_t}(\ubx_t'\beta_0)).
\end{align*}

The conclusion follows from an application of Slutsky's Theorem.
\end{proof}

\begin{lemma}[APE convergence in distribution]\label{lemma:convAPE}
Suppose B\ref{assn:iid}--B\ref{assn:ASFrates} hold. Then,
\begin{align*}
       \sqrt{Nb_N^3} \left(\avg \widehat{h}_2(\ubx_t'\beta_0, V_i; \beta_0)\pi_{it} - \E[h_2(\ubx_t'\beta_0, V; \beta_0) \pi_t]\right) &\dconv \mathcal{N}(0,\sigma^2_{\text{APE}_t}(\ubx_t'\beta_0)).
\end{align*}
\end{lemma}

\begin{proof}[Proof of Lemma \ref{lemma:convAPE}]
    This proof builds on that of Corollary 2 in KLX and our Lemma \ref{lemma:convASF}. Recall that Assumptions A1--A7 of KLX hold under ours. We can then use equation (13) in KLX and their Corollary 1 to write
\begin{align*}
       b_N \widehat{h}_2(z; \beta_0) &= b_N h_2(z; \beta_0) + B_{2,N}(z) + \avgj \phi_{2,jN}(z) + R_{2,N}(z)\\
       &= e_{2+d_V}'h(z; \beta_0)+ B_{2,N}(z) + \avgj \phi_{2,jN}(z) + R_{2,N}(z),
\end{align*}
where $B_{2,N}(z)$ is a bias term satisfying $\sup_{z\in\mathcal{Z}_t} |B_{2,N}(z)| = O(b_N^{\ell+1})$ if $\ell$ is odd or $O(b_N^{\ell+2})$ if $\ell$ is even, where $\phi_{2,jN}(z)$ are mean-zero random variables, and where $R_{2,N}(z)$ is a higher-order term satisfying $\sup_{z \in \mathcal{Z}_t} |R_{2,N}(z)| = O_p\left(\frac{\log(N)}{Nb_N^{1+d_V}}\right)$. 
       
Second, note that
\begin{align}
       \sqrt{Nb_N^3} &\left(\avg \widehat{h}_2(\ubx_t'\beta_0, V_i; \beta_0)\pi_{it} - \E[h_2(\ubx_t'\beta_0, V; \beta_0) \pi_t]\right)\notag\\
       &= \sqrt{Nb_N^3} \avg \left(\widehat{h}_2 (\ubx_t'\beta_0 ,V_i; \beta_0) - h_2(\ubx_t'\beta_0, V_i; \beta_0) \right) \pi_{it} \label{eq:lemma9:1}\\
       &+ \sqrt{b_N^3} \cdot \frac{1}{\sqrt{N}} \sum_{i=1}^N \left(h_2(\ubx_t'\beta_0, V_i; \beta_0)\pi_{it} - \E[h_2(\ubx_t'\beta_0, V; \beta_0)\pi_t]\right)\label{eq:lemma9:2}
\end{align}

To analyze term \eqref{eq:lemma9:1}, we use the fact that
\begin{align*}
       &\sqrt{Nb_N^3}\avg \left(\widehat{h}_2(\ubx_t'\beta_0, V_i; \beta_0) - h_2 (\ubx_t'\beta_0, V_i; \beta_0) \right) \pi_{it} \\
       &= \sqrt{Nb_N} \avg e_{2+d_V}'\left(\widehat{h}(\ubx_t'\beta_0, V_i; \beta_0) - h(\ubx_t'\beta_0, V_i; \beta_0) \right) \pi_{it} \\
       &= \sqrt{Nb_N} \avg B_{2,N}(\ubx_t'\beta_0, V_i)\pi_{it} + \sqrt{Nb_N} \avgij \phi_{2,jN}(\ubx_t'\beta_0,V_i) \pi_{it} + \sqrt{Nb_N} \avg R_{2,N}(\ubx_t'\beta_0, V_i) \pi_{it}.
\end{align*}

The terms $\sqrt{Nb_N} \avg B_{2,N}(\ubx_t'\beta_0, V_i) \pi_{it}$ and $\sqrt{Nb_N} \avg R_{2,N}(\ubx_t'\beta_0, V_i)\pi_{it}$ are $o_p(1)$ from the same arguments used in the proof of Lemma \ref{lemma:convASF}.

The term $\sqrt{Nb_N} \avgij \phi_{2,jN}(\ubx_t'\beta_0, V_i)\pi_{it}$ converges in distribution to
\begin{align*}
       \sqrt{Nb_N} \avgij \phi_{2,jN}(\ubx_t'\beta_0, V_i)\pi_{it} &\dconv \mathcal{N}(0,\sigma^2_{\text{APE}_t}(\ubx_t'\beta_0))
\end{align*}
by standard arguments from \cite{Masry1996} referred to in the proof of Corollary 2 in KLX. 

Term \eqref{eq:lemma9:2} above is of order $O_p(b_N^{3/2}) = o_p(1)$ by an application of the central limit theorem. Therefore, 
\begin{align*}
       \sqrt{Nb_N^3} \left(\avg \widehat{h}_2(\ubx_t'\beta_0, V_i; \beta_0)\pi_{it} - \E[h_2(\ubx_t'\beta_0, V; \beta_0)\pi_t] \right) &= \sqrt{Nb_N} \avgij \phi_{2,jN}(\ubx_t'\beta_0 ,V_i) \pi_{it} + o_p(1)\\
       &\dconv \mathcal{N}(0,\sigma^2_{\text{APE}_t}(\ubx_t'\beta_0)).
\end{align*}    
\end{proof}

\begin{proof}[Proof of Theorem \ref{thm:APE}]
First, we write
\begin{align}
       \sqrt{Nb_N^3} &\left(\widehat{\text{APE}}_{k,t}(\ubx_t) - \text{APE}^\pi_{k,t}(\ubx_t) \right)\notag\\
       &=
       \widehat{\beta}^{(k)} \cdot \sqrt{Nb_N^3} \left(\avg \left(\widehat{h}_2 (\ubx_t'\widehat{\beta}, V_i; \widehat{\beta}) - \widehat{h}_2(\ubx_t'\widehat\beta, V_i; \beta_0)\right)\hat\pi_{it}\right)\label{eq:thm2:1}\\
       &+ \widehat{\beta}^{(k)} \cdot \sqrt{Nb_N^3} \left(\avg \left(\widehat{h}_2(\ubx_t'\widehat\beta, V_i; \beta_0) - \widehat{h}_2(\ubx_t'\beta_0, V_i; \beta_0)\right)\hat\pi_{it} \right) \label{eq:thm2:2}\\
       &+ \widehat{\beta}^{(k)} \cdot \sqrt{Nb_N^3} \left(\avg \widehat{h}_2(\ubx_t'\beta_0, V_i; \beta_0)(\hat\pi_{it} - \pi_{it})\right) \label{eq:thm2:3}\\
       &+ \widehat{\beta}^{(k)} \cdot \sqrt{Nb_N^3} \left(\avg \widehat{h}_2(\ubx_t'\beta_0, V_i; \beta_0)\pi_{it} - \E[h_2(\ubx_t'\beta_0, V; \beta_0)\pi_t]\right) \label{eq:thm2:4}\\
       &+ \sqrt{Nb_N^3}(\widehat\beta^{(k)} - \beta_0^{(k)}) \cdot \E[h_2(\ubx_t'\beta_0, V; \beta_0)\pi_t]. \label{eq:thm2:5}
\end{align}

We will show that terms \eqref{eq:thm2:1}--\eqref{eq:thm2:3} and \eqref{eq:thm2:5} are $o_p(1)$, and that term \eqref{eq:thm2:4} converges in distribution.

\noindent\textbf{Convergence of Term \eqref{eq:thm2:1}}

Note that
\begin{align*}
       &\sqrt{Nb_N^3} \cdot  \left| \avg \left(\widehat{h}_2(\ubx_t'\widehat\beta, V_i; \widehat\beta) - \widehat{h}_2(\ubx_t'\widehat\beta, V_i; \beta_0)\right) \hat\pi_{it} \right|\\
       &= \sqrt{Nb_N} \cdot \left|\avg e_{2+d_V}' \left(S_N(\ubx_t'\widehat\beta, V_i;\widehat\beta)^{-1} T_N(\ubx_t'\widehat\beta, V_i; \widehat\beta) - S_N(\ubx_t'\widehat\beta, V_i; \beta_0)^{-1} T_N(\ubx_t'\widehat\beta, V_i; \beta_0)\right) \hat\pi_{it} \right|
\end{align*}
by the definition of $\widehat{h}_2$. Also note that $\widehat\beta^{(k)} = O_p(1)$.  Therefore, we can follow the same steps used in the proof of Theorem \ref{thm:ASF} to show term \eqref{eq:thm1:1} is $o_p(1)$.

\noindent\textbf{Convergence of Term \eqref{eq:thm2:2}}

We have that
\begin{align*}
       \sqrt{Nb_N^3} &\cdot \left|\widehat\beta^{(k)} \avg \left(\widehat{h}_2(\ubx_t'\widehat\beta, V_i; \beta_0) - \widehat{h}_2(\ubx_t'\beta_0, V_i; \beta_0)\right) \hat\pi_{it} \right|\\
       &= \left|\widehat\beta^{(k)}\right| \cdot \sqrt{Nb_N}\\
       &\cdot \left|\avg e_{2+d_V}'\left(S_N(\ubx_t'\widehat\beta, V_i;\beta_0)^{-1} T_N(\ubx_t'\widehat\beta, V_i; \beta_0) - S_N(\ubx_t'\beta_0, V_i; \beta_0)^{-1}T_N(\ubx_t'\beta_0, V_i; \beta_0)\right) \hat\pi_{it}\right|.
\end{align*}

Again, we can follow the same steps used in the proof of Theorem \ref{thm:ASF} to show term \eqref{eq:thm1:2} is $o_p(1)$.

\noindent\textbf{Convergence of Term \eqref{eq:thm2:3}}

The convergence of this term is shown identically to that of term \eqref{eq:thm1:3}. 

\noindent\textbf{Convergence of Term \eqref{eq:thm2:4}}

By Lemma \ref{lemma:convAPE}, we have that $\sqrt{Nb_N^3} \left(\avg \widehat{h}_2(\ubx_t'\beta_0, V_i; \beta_0)\pi_{it} - \E[h_2(\ubx_t'\beta_0, V; \beta_0)\pi_t]\right) \dconv \mathcal{N}(0,\sigma^2_{\text{APE}_t}(\ubx_t'\beta_0))$. By B\ref{assn:betahat}, $\widehat\beta^{(k)} \pconv \beta_0^{(k)}$. Therefore, by Slutsky's Theorem, term \eqref{eq:thm2:4} converges in distribution to a mean-zero Gaussian distribution with variance $(\beta_0^{(k)})^2 \cdot \sigma^2_{\text{APE}_t}(\ubx_t'\beta_0)$.

\noindent\textbf{Convergence of Term \eqref{eq:thm2:5}}

Note that $\E[h_2(\ubx_t'\beta_0, V; \beta_0)\pi_t] = O(1)$. Term \eqref{eq:thm2:5} is of order $\sqrt{N b_N^3}(\widehat\beta^{(k)} - \beta_0^{(k)})\cdot O(1) = O_p\left(\sqrt{N b_N^3} a_N^{-1}\right)$. By B\ref{assn:betahat}, the order of this term is
\begin{align*}
       O_p\left(N^{\frac{1}{2}(1 - 3\delta - 2\epsilon)} \right) &= o_p(1).
\end{align*}
This equality follows from $\delta > \frac{1 - 2\epsilon}{3}$, which can be seen from $\delta > 1 - 2\epsilon$ and $\delta > 0$: see B\ref{assn:ASFrates}.

Combining the convergence of terms \eqref{eq:thm2:1}--\eqref{eq:thm2:5} with Slutsky's Theorem, we obtain our result.
\end{proof}

\subsection{Proofs for Appendix \ref{sec:extensions}}\label{app_sec:ext-proofs}
\begin{proof}[Proof of Theorem \ref{thm:ID_weakexog}]
This proof is similar to the proofs of Theorems \ref{thm:ASF-APE_id} and \ref{thm:LAR-AME_id} in the main paper. For the ASF, note that
\begin{align*}
        \text{ASF}_t(\ubx_t) &= \E[g_t(\ubx_t'\beta_0,C,U_t)]\\
        &= \int_{\supp(V^t)} \E[g_t(\ubx_t'\beta_0,C,U_t)|V^t = v^t] \, dF_{V^t}(v^t)\\
        &= \int_{\supp(V^t|X_t'\beta_0 = \ubx_t'\beta_0)} \E[g_t(\ubx_t'\beta_0,C,U_t)|V^t = v^t] \, dF_{V^t}(v^t)\\
        &= \int_{\supp(V^t|X_t'\beta_0 = \ubx_t'\beta_0)} \E[g_t(\ubx_t'\beta_0,C,U_t)|X_t'\beta_0 = \ubx_t'\beta_0,V^t = v^t] \, dF_{V^t}(v^t)\\
        &= \int_{\supp(V^t|X_t'\beta_0 = \ubx_t'\beta_0)} \E[Y_t|X_t'\beta_0 = \ubx_t'\beta_0,V^t =v^t] \, dF_{V^t}(v^t).
\end{align*}
The second equality follows from the law of iterated expectations. The third follows from the support condition. The fourth follows from $(C,U_t) \indep X_t'\beta_0|V^t$, which can be shown similarly to step 1 in the proof of Theorem \ref{thm:ASF-APE_id}. Note that $U_t \indep X_t'\beta_0|C,V^t$ is implied by $U_t \indep (\X_{\text{exog}},\X^t_\text{pre})|C$ and by $X_t'\beta_0$ being a function of $(\X_{\text{exog}},\X^t_\text{pre})$. Also note that $C \indep X_t'\beta_0|V^t$, which follows from $C \indep (\X_{\text{exog}},\X^t_\text{pre})|V^t$ and from $X_t'\beta_0$ being a function of $(\X_{\text{exog}},\X^t_\text{pre})$. The last line follows directly. Finally, the last expression is identified from the distribution of $(Y,\X)$ using similar arguments as before.

Proofs for the identification of the APE, LAR, and AME proceed similarly.
\end{proof}

\clearpage

\singlespacing
\bibliographystyle{econometrica}
\bibliography{bibfile111}

\end{document}